\documentclass[11pt]{article}
\usepackage[dvipsnames]{xcolor}
\usepackage{authblk}
\author[1,2]{Leon Menger\,\thanks{leonmenger@nd.edu}\!\!$~^,$}
\author[1,2]{Pavel Mnev\,\thanks{pmnev@nd.edu}\!\!$~^,$}
\affil[1]{Department of Mathematics, Notre Dame University, 255 Hurley Bldg, Notre Dame, IN 46556, USA}
\affil[2]{Institut für Mathematik, Universität Zürich, Winterthurerstrasse 190,
CH-8057 Zürich, Switzerland}
\let\oldphi\varphi \let\varphi\phi \let\phi\oldphi
\let\oldepsilon\varepsilon \let\varepsilon\epsilon \let\epsilon\oldepsilon
\usepackage{geometry}
\usepackage{float}
\usepackage{amsmath}
\usepackage{mathtools}
\usepackage{amsthm}
\usepackage{amssymb}
\usepackage{thmtools}
\usepackage[bbgreekl]{mathbbol}
\usepackage[utf8]{inputenc}
\usepackage{graphicx}
\usepackage[T1]{fontenc}
\usepackage{amsfonts}
\usepackage[english]{babel}
\usepackage{csquotes}
\usepackage{fancyhdr}
\usepackage[font=small,labelfont=bf]{caption}
\usepackage{tikz-cd}
\usepackage{dsfont}
\usepackage{enumitem}
\usepackage{stmaryrd}
\usepackage{tensor}
\usepackage[inkscapelatex=false]{svg}
\usepackage[normalem]{ulem}
\numberwithin{equation}{section}
\usepackage[backend=biber, giveninits=true, style=alphabetic, sorting=nyt,isbn=false, doi=false, maxalphanames=5,url=false, maxbibnames=99]{biblatex}
\usepackage{comment}
\usepackage{multirow}
\usepackage[hypertexnames=false]{hyperref}
\usepackage[capitalise]{cleveref}
\mathtoolsset{showonlyrefs = false}
\newcommand{\CB}{\ensuremath{\mathds{C}}}

\newcommand{\LB}{\ensuremath{\mathds{L}}}

\newcommand{\NB}{\ensuremath{\mathds{N}}}

\newcommand{\PB}{\ensuremath{\mathds{P}}}

\newcommand{\RB}{\ensuremath{\mathds{R}}}

\newcommand{\TB}{\ensuremath{\mathds{T}}}

\newcommand{\ZB}{\ensuremath{\mathds{Z}}}
\newcommand{\tb}{\ensuremath{\mathbb{t}}}
\newcommand{\BC}{\ensuremath{\mathcal{B}}}

\newcommand{\DC}{\ensuremath{\mathcal{D}}}

\newcommand{\FC}{\ensuremath{\mathcal{F}}}

\newcommand{\HC}{\ensuremath{\mathcal{H}}}

\newcommand{\LC}{\ensuremath{\mathcal{L}}}

\newcommand{\NC}{\ensuremath{\mathcal{N}}}
\newcommand{\OC}{\ensuremath{\mathcal{O}}}
\newcommand{\PC}{\ensuremath{\mathcal{P}}}
\newcommand{\QC}{\ensuremath{\mathcal{Q}}}
\newcommand{\RC}{\ensuremath{\mathcal{R}}}

\newcommand{\YC}{\ensuremath{\mathcal{Y}}}
\newcommand{\ZC}{\ensuremath{\mathcal{Z}}}
\newcommand{\DF}{\ensuremath{\mathfrak{D}}}
\newcommand{\gf}{\ensuremath{\mathfrak{g}}}
\newcommand{\su}{\ensuremath{\mathfrak{su}}}
\newcommand{\so}{\ensuremath{\mathfrak{so}}}
\newcommand{\spin}{\ensuremath{\mathfrak{sp}}}
\newcommand{\gl}{\ensuremath{\mathfrak{gl}}}

\def\thinline{\noalign{\hrule height.2pt}}
\newcommand{\dell}[2]{
  \ensuremath{\frac{\partial #1}{\partial #2}}
}
\newcommand{\ddelta}[2]{
  \ensuremath{\frac{\delta #1}{\delta #2}}
}
\newcommand{\medoplus}{
  \ensuremath{\scalebox{1.}{$\bigoplus$}}
}
\newcommand{\pair}[2]{
  \ensuremath{\left\langle #1, #2 \right\rangle}
}
\newcommand{\pairing}{
  \ensuremath{\left\langle -, - \right\rangle}
}
\newcommand{\lie}[1]{
  \ensuremath{\LC_{#1}}
}
\newcommand{\ket}[1]{
  \left| {#1} \right\rangle
}
\newcommand{\longhookright}{
  \lhook\joinrel\longrightarrow
}
\newcommand{\hookright}{
  \lhook\joinrel\rightarrow
}
\newcommand{\nlra}{
  \mathrlap{
    \ \ \ \raisebox{.12em}{\tiny\textbf{/}}
  }\longrightarrow
}
\DeclareMathOperator\Hom{Hom}

\DeclareMathOperator\Aut{Aut}

\DeclareMathOperator\Mat{Mat}
\DeclareMathOperator\End{End}
\DeclareMathOperator\im{im}
\DeclareMathOperator\Fun{Fun}
\DeclareMathOperator\Map{Map}
\DeclareMathOperator\U{U}
\DeclareMathOperator\SU{SU}
\DeclareMathOperator\SO{SO}
\DeclareMathOperator\USp{USp}
\DeclareMathOperator\id{id}
\DeclareMathOperator\tr{tr}

\DeclareMathOperator\e{e}
\DeclareMathOperator\effMathOp{eff}
\newcommand{\eff}{{\ensuremath{{\effMathOp}}}}
\DeclareMathOperator\dRMathOp{d}
\newcommand{\dR}{\ensuremath{{\dRMathOp}}}
\DeclareMathOperator\SUGRAMathOp{SUGRA}
\newcommand{\SUGRA}{\ensuremath{{\SUGRAMathOp}}}
\DeclareMathOperator\redMathOp{red}
\newcommand{\red}{\ensuremath{{\redMathOp}}}
\DeclareMathOperator\PAMathOp{I}
\newcommand{\PA}{\ensuremath{{\PAMathOp}}}
\DeclareMathOperator\LIMathOp{L}
\newcommand{\LI}{\ensuremath{{\LIMathOp_\infty}}}
\newcommand{\qcLi}{\ensuremath{{\mathrm{qcL}_{\infty}}}}
\newcommand{\cLIsymp}{\ensuremath{{\mathrm{cL}_{\infty}^{\mathrm{symp}}}}}
\DeclareMathOperator\IEMathOp{IE}
\newcommand{\IE}{\ensuremath{{\IEMathOp}}}
\DeclareMathOperator\ptMathOp{pt}
\newcommand{\pt}{\ensuremath{{\ptMathOp}}}
\DeclareMathOperator\oppMathOp{op}
\newcommand{\opp}{\ensuremath{{\oppMathOp}}}
\DeclareMathOperator\CutMathOp{Cut}
\newcommand{\Cut}{\ensuremath{{\CutMathOp}}}
\DeclareMathOperator\GlueMathOp{Glue}
\newcommand{\Glue}{\ensuremath{{\GlueMathOp}}}
\DeclareMathOperator\GeomMathOp{Geom}
\newcommand{\Geom}{\ensuremath{{\GeomMathOp}}}
\DeclareMathOperator\cnMathOp{N}
\newcommand{\CN}{\ensuremath{{\cnMathOp^*}}}
\DeclareMathOperator\TanMathOp{T}
\newcommand{\Tan}{\ensuremath{{\TanMathOp}}}
\DeclareMathOperator\moduloMathOp{mod}
\newcommand{\Mod}{\ensuremath{{\moduloMathOp}}}
\newtheoremstyle{note}  
{10pt}                  
{10pt}                  
{}                      
{}                      
{\scshape\bfseries}     
{}                      
{1.0em}                 
{}                      
\theoremstyle{note}
\newtheorem{theorem}{Theorem}[section]
\newtheorem{definition}[theorem]{Definition}
\newtheorem{example}[theorem]{Example}
\newtheorem{remark}[theorem]{Remark}
\newtheorem{lemma}[theorem]{Lemma}
\newtheorem{proposition}[theorem]{Proposition}
\newtheorem{conjecture}[theorem]{Conjecture}
\newcommand{\ra}{\rightarrow}
\newcommand{\lra}{\longrightarrow}
\newcommand{\mr}{\mathrm}

\newcommand{\mc}{\mathcal}
\newcommand{\mtt}{\mathtt}
\newcommand{\Ghat}{G}
\newcommand{\Hhat}{H}
\newcommand{\Gcl}{G^\mathrm{cl}}
\newcommand{\Hcl}{H^\mathrm{cl}}

\newcommand{\rel}[1]{\!\!\!\overset{#1}{\nlra}}
\newcommand{\g}{\mathfrak{g}}
\newcommand{\til}{\widetilde}
\newcommand{\Ci}{\mr{C}^\infty}
\newcommand{\Lie}[1]{\mc{L}_{#1}}
\begin{document}
\title{
  \textbf{Towards First Quantisation Formalism for
  AKSZ Theories}
}
\date{}
\maketitle
\thispagestyle{empty}
\begin{abstract}
  \noindent Given an AKSZ theory $\mathds{T}$ on a manifold $M$, with target a graded vector space $Y$,
  we formulate a 1-dimensional theory $\mathbb{t}$ on graphs (the ``first quantisation picture for $\mathds{T}$''), whose partition functions reproduce the Feynman graphs of $\mathds{T}$.
  More precisely, the theory $\mathbb{t}$ is itself a 1d AKSZ theory with the target built out of $M$, and involving a coupling to 1d supergravity.
  It yields a form on the space of metric graphs (with length $T$ of an edge and its de Rham differential $\mathrm{d} T$ interpreted as the zero-modes of the graviton and gravitino, respectively); its integral yields the sum of Feynman graphs of $\mathds{T}$.
  We study the theory $\mathbb{t}$ in the BV-BFV formalism; a gauge-fixing of $\mathds{T}$ corresponds to a gauge-fixing of $\mathbb{t}$.
  At the classical level, $\mathbb{t}$ assigns to vertices certain Lagrangian submanifolds $L_k$ in Cartesian powers $\Phi^{\times k}$ of the phase space $\Phi$ of $\mathbb{t}$.
  These submanifolds can be thought of as defining a cyclic $\mathrm{L}_\infty$-algebra in Weinstein's symplectic category
  (``dequantising'' the cohomological vector field on the target
  of $\mathds{T}$).
  In the path integral construction of $\mathbb{t}$, Lagrangians $L_k$ determine sewing conditions for fields on the incident edges at a $k$-valent vertex.
  We give examples of this paradigm, such as when $\mathbb{t}$ on edges is the Witten-Morse supersymmetric quantum mechanics (which corresponds to a particular type of gauge-fixing for $\mathds{T}$ and $\mathbb{t}$).
  In the example where $\mathds{T}$ is the non-abelian Chern--Simons
  theory with structure Lie algebra $\mathfrak{su}(2)$, we describe the vertex Lagrangian  $L_{\mathrm{W}}$ (the ``Wigner Lagrangian'' ).
\end{abstract}
\newpage
\tableofcontents
\section*{Acknowledgements}
\addcontentsline{toc}{section}{Acknowledgements}
The authors thank Alberto S. Cattaneo and Konstantin Wernli for discussions that inspired this project.
A significant part of the research was conducted while the first author participated in the programme “Cohomological Aspects of Quantum Field Theory” (spring 2025) at the Mittag-Leffler Institute in Djursholm, Sweden, supported by the Swedish Research Council under grant no. 2021-06594.
L.M. thanks Anton Alekseev for inspiring discussions during the programme.

The authors acknowledge partial support of the SNF Grant No. 200021\_227719, of the Simons Collaboration on Global Categorical Symmetries, and of the FIM at ETH Zurich.
This research was (partly) supported by the NCCR SwissMAP, funded by the Swiss National Science Foundation.
\section{Introduction}
\label{sec:Introduction}
In this note we present a $1$-dimensional model for computing the Feynman graphs of
AKSZ theories.\footnote{
  This class of topological sigma-models is due to Alexandrov--Kontsevich--Schwarz--Zaboronsky \cite{Alexandrov:1995kv}.
  Examples include Chern--Simons theory, $BF$ theory, Poisson sigma model.
  For the Poisson sigma model, for the purposes of the present paper, we restrict to the case with target a vector space with a polynomial Poisson bivector.
}
We consider an AKSZ theory $\TB$ with space of fields of the form
\begin{align}
\label{eq:intro_AKSZ_fields}
  \FC^\TB \coloneqq \Map(\Tan[1] M, Y)
  \cong \Omega^\bullet(M;Y).
\end{align}
where $Y$ is some graded vector space\footnote{
    The ground field for $Y$ can be taken to be $\RB$ or $\CB$, see \Cref{rem: ground field R or C}.
}
and $M$ is a closed oriented $m$-manifold.

We assume that $Y$ is equipped with a nondegenerate pairing $\pairing_Y$ of degree $m-1$ and a degree $m$ function $\Theta_Y$ that Poisson commutes with itself.
The classical BV action of $\TB$ is 
\begin{align}
\label{eq:ST_AKSZ_action}
  S^{\TB} = \int_M \frac12 
  \langle A \stackrel{\wedge}{,} \dR A \rangle_{Y}
  + \Theta_Y(A),
\end{align}
where $A \in \Omega^\bullet(M; Y)$ is the field.
Writing $\Theta_Y(A)$ in the integral indicates that the top form-degree part is being picked.
Here, $\Theta_Y$ can be seen as a perturbation of abelian Chern--Simons theory.

The effective action of such a theory can be expressed by its Feynman diagram expansion
\begin{align}\label{intro S^T_eff}
  S^\TB_\eff(a) &= \sum_\Gamma \frac{\hbar^{l(\Gamma)}}{|\Aut(\Gamma)|} F_\Gamma(a),
\end{align}
where $a$ is the residual (or ``infrared'') 
field, the sum is over
connected graphs $\Gamma$, $l(\Gamma)$ is the loop number, $\Aut(\Gamma)$ the automorphism group of $\Gamma$ and $F_\Gamma$ denotes the weight associated to $\Gamma$.

We denote the space of residual fields $\FC^{\TB}_{\mr{eff}}$.
The weight associated to $\Gamma$ in (\ref{intro S^T_eff}) can be written as
\begin{align}
\label{eq:Feynman_weight_definition}
    F_\Gamma(a) \coloneqq \int_{M^{\times \mr{V}(\Gamma)}} \pair{\bigwedge_{e \in \mr{IE}(\Gamma)} \pi^*_e \eta \wedge \bigwedge_{l \in \mr{L}(\Gamma)} \pi^*_{v(l)} \imath(a)}{\bigotimes_{v \in \mr{V}(\Gamma)} c^Y_{\mr{val}(v)}}_\Gamma.
\end{align}
Here:
\begin{itemize}
    \item
        $\mr{V}(\Gamma)$ is the set of vertices of $\Gamma$, $\mr{IE}(\Gamma)$ is the set of internal edges of $\Gamma$, $\mr{L}(\Gamma)$ is the set of leaves of $\Gamma$ (thought of as loose half-edges) with $v(l)$ the incident vertex. 
    \item
        For an edge $e$ the map $\pi_e \colon M^{\times \mr{V}(\Gamma)} \lra M^{\times 2}$ is the projection onto the copies of $M$ associated to the vertices connected by $e$.
        Similarly, for a vertex $v$ the map $\pi_v \colon M^{\times \mr{V}(\Gamma)} \lra M$ is the projection onto the copy of $M$ associated to $v$.
    \item
        $\eta \in \Omega^{m-1}(M \times M; Y \otimes Y)$ is the integral kernel of the propagator $K$ associated to some gauge-fixing of $\TB$.
        For an edge $e$ connecting a vertex $v$ to itself, one 
        replaces $\pi^*_e \eta$ by a suitably regularised diagonal evaluation $\pi^*_v \eta^{\mr{reg}}_\Delta$.
        If $\eta$ is sufficiently nice, one can replace $M^{\times \mr{V}(\Gamma)}$ by $\overline{\mr{C}}_{\mr{V}(\Gamma)}(M)$, the associated Fulton--MacPherson--Axelrod--Singer (FMAS) compactified configuration space.
    \item
        The map $\imath \colon \FC^{\TB}_{\mr{eff}} \xhookrightarrow{} \FC^{\TB}$ denotes the inclusion of residual fields into the space of all fields.
        This map is a part of the strong deformation retraction associated to a good gauge-fixing (cf. \Cref{rem:SDR_from_H} and \Cref{subsubsec:Good_gauge_fixings}).
    \item
        The maps $c_n^Y$ are the coefficients of the Taylor expansion of the degree $m$ perturbation term:
        \begin{align}
            \Theta_Y(y) = \sum_{n} \frac{1}{n!} c^Y_n (\underbrace{y, \ldots, y}_{\text{$n$ times}}),
        \end{align}
        for $y\in Y$.
    \item
        The pairing $\pairing_\Gamma$ denotes the contraction 
        of $Y^{\otimes N}$ and $(Y^*)^{\otimes N}$ prescribed by $\Gamma$, where $N$ is the number of half-edges including leaves.
\end{itemize}
\begin{remark}
\label{rem:Weights_full_vs_residual}
    Instead of the weights of $S^\TB_\mr{eff}(a) \in \Ci(\FC^{\TB}_{\mr{eff}})$, we will construct the weights of the full effective action $S^\TB_\mr{eff, full}(A) \in \Ci(\FC^{\TB})$.
    These weights are identical except for the leaves which – for $S^\TB_\mr{eff, full}$ – are decorated by the projector onto $\imath(\FC^{\TB}_{\mr{eff}}) \subset 
    \FC^\TB
    $ given by $\imath \circ p$ (with $p \colon \FC^{\TB} \twoheadrightarrow \FC^{\TB}_{\mr{eff}}$ the projection onto infrared fields).
    %
    Since this implies $S^\TB_\mr{eff, full}(A) = p^* S^\TB_\mr{eff}$, recovering the weights of one effective action determines those of the other.
\end{remark}

Our goal is to find a $1$-dimensional field theory $\tb$ such that its space of states at a point satisfies $\HC^{\tb}_{\pt} = \FC^{\TB}$ and whose partition function evaluated on some graph $\Gamma$ reproduces the value of $F_\Gamma$.
To define such a field theory, we will need to make precise the following decorations of $\Gamma$:
\begin{enumerate}[label=\Roman*)]
  \item
    \textbf{Assign the propagator $K$ to internal edges.}
    In \Cref{sec:1D_AKSZ} we will discuss how certain gauge-fixings of a $1$-dimensional AKSZ theory $\tb$ on an interval recover the propagators of an AKSZ theory $\TB$.
    %

  \item
    \textbf{Assign residual fields to leaves.}
    These can be recovered from $\tb$ on an interval by letting the length of the interval go to $+ \infty$.
    In this limit, the partition function of $\tb$ tends to the projector onto the kernel of $[\dR, K]$ with $K$ the propagator, which will be exactly the space of residual fields of an appropriately gauge-fixed AKSZ theory $\TB$.

  \item
    \textbf{Assign vertex tensors to vertices.}
    E.g. for non-abelian Chern--Simons theory we should assign the Lie bracket.
    In \Cref{sec:DequantLie} we will sketch a paradigm for geometric data corresponding – via geometric quantisation – to a collection of vertex tensors.
    We will apply this paradigm in \Cref{sec:1D_theory_graphs} to formulate sewing conditions for boundary fields of the $1$-dimensional AKSZ theory $\tb$ on several intervals.
    We also give an explicit example for $\su(2)$-valued Chern--Simons theory and comment on generalisations to other Lie algebras and to theories with higher operations (e.g. the Poisson sigma model).
\end{enumerate}
An example of this procedure is depicted in \Cref{fig:CStree_decomp}.
{\begin{figure}[H]
  \centering
  \includegraphics[width=.9\columnwidth]{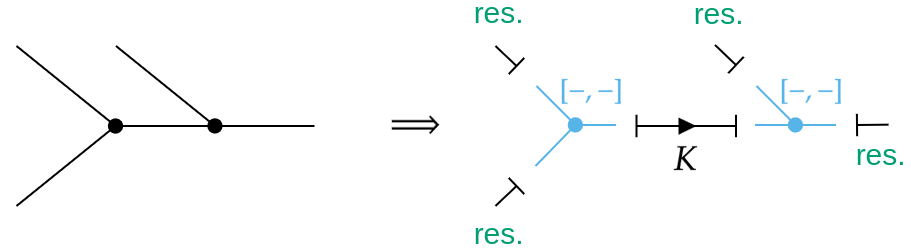}
  \caption{
    Decoration of a 
    Feynman graph appearing in the effective action of non-abelian Chern--Simons 
    theory with coefficients in a quadratic Lie algebra $\g$.
    Internal edges are decorated by a propagator $K$, interaction vertices (vertex tensors) are given by wedging forms and taking the Lie bracket in $\g$.
  }
  \label{fig:CStree_decomp}
\end{figure}}
\noindent The plan of the paper is as follows: \\

\textbf{{\Cref{sec:HTQFT}}:}
    We review Andrey Losev's notion of HTQFT \cite{Losev:2019bel, Beck:2024xtd} which serves as a guiding principle to formulate a $1$-dimensional theory on graphs.\footnote{
      HTQFT stands for \textit{Higher Topological Quantum Field Theory} while the $1$-dimensional case, the HTQM, stands for \textit{Higher Topological Quantum Mechanics}.
    } \\

\textbf{{\Cref{sec:1D_AKSZ}}:}
    To obtain a $1$-dimensional HTQFT on an interval in the BV-BFV formalism \cite{Cattaneo:2012qu, Cattaneo:2015vsa, Cattaneo:2017tef}, we construct and gauge-fix a $1$-dimensional AKSZ theory $\tb$.
    We discuss good gauge-fixing data and show that the \eqref{eq:mQME} (\textit{modified Quantum Master Equation}) for this theory recovers the \eqref{eq:HTQM_closed} equation of HTQM. \\

\textbf{{\Cref{sec:DequantLie}}:}
    In this section we propose an ansatz for a classical 
    (geometric)
    datum corresponding – via geometric quantisation – to vertex tensors.
    Using coadjoint orbits, we formulate such data for the Lie bracket of $\su(2)$.
    We also sketch tentative generalisations to more general Lie algebras and $\LI$-algebras. \\

\textbf{{\Cref{sec:1D_theory_graphs}}:}
    We show how to augment the $1$-dimensional theory $\tb$ on intervals from \Cref{sec:1D_AKSZ} by sewing conditions representing vertex tensors obtained in \Cref{sec:DequantLie}.
    This allows us to define a theory on graphs which reproduces the weights of Feynman graphs of 
    an AKSZ theory $\TB$. \\

Originally, the goal of this work was to find a first quantisation formalism (see \cite[section 3]{Dijkgraaf:1997ip} and \cite[section 1]{Contreras:2023huh}) for Chern--Simons theory.
More concretely, we wanted to understand the sum of its Feynman graphs as a degeneration (tentatively a ``zero-width limit'') of Witten's A-model coupled to $2$-dimensional topological gravity \cite{Witten:1988xj}.

And indeed
one recovers first-quantised Chern--Simons theory, with a class of admissible gauge-fixings.\footnote{
  As alluded to in \cite{Chekeres:2021ieg}, one can use this to recover a loop-enhancement of the Fukaya--Morse $A_\infty$ category by a special axial gauge (see \Cref{subsubsec:CS_FukayaMorseWitten}).
  Making this precise is work in progress.
}
The hope is that this serves to better understand Witten's idea \cite{Witten:1992fb} that Chern--Simons theory is a topological string theory.
In view of our findings, a large class of AKSZ theories should allow for a similar relation.
\section{HTQFT}
\label{sec:HTQFT}
In \Cref{subsec:Cutting_Gluing} and \Cref{subsec:HTQFT} we give a brief review of Andrey Losev's notion of HTQFT, following \cite{Losev:2019bel}.
We will use the $1$-dimensional case, the HTQM, as a running example.
Its generalisation to metric graphs will be presented in \Cref{subsec:HTQFT_graphs}.
For details, we refer the reader to the original text, see also \cite[Section 3]{Beck:2024xtd}.
\subsection{Cutting and Gluing}
\label{subsec:Cutting_Gluing}
An HTQFT \cite{Losev:2019bel} is a monoidal functor
\begin{align}
  \ZC \colon \mathtt{Cob}_{\mathrm{dec}}^\sqcup \longrightarrow \mathtt{Vect}^\otimes,
\end{align}
%
from the category of oriented $n$-dimensional cobordisms, decorated by some geometric structure (the exact type is model-dependent), to the category of vector spaces and linear maps between them.
To an $(n-1)$-dimensional manifold $\Sigma$ the functor $\ZC$ assigns some vector space $\ZC(\Sigma)$, such that $\ZC(\Sigma^\opp) = \ZC(\Sigma)^*$.
For a fixed $n$-manifold $X$ one has
\begin{align}
  \ZC(X) \in \ZC(\partial X) \otimes \Fun(\Geom_X) .
\end{align}
$\Geom_X$ is the space of geometric data of given type (e.g. metrics) on $X$.

\textbf{Cutting along an $(n - 1)$-manifold $\Sigma$} (non self-intersecting, not crossing $\partial X$) is assumed to give a functorial prescription $\Cut_\Sigma \colon \Geom_X \rightarrow \Geom_{X^\Sigma}$, where $X^\Sigma
$ 
is $X$ cut along $\Sigma$.
One has the pullback map
\begin{align}
  \Cut_\Sigma^* \colon \Fun(\Geom_{X^\Sigma}) \longrightarrow \Fun(\Geom_X).
\end{align}
Since $\partial X^\Sigma = \partial X \sqcup \Sigma \sqcup \Sigma^\opp$ we have (by monoidality of $\ZC$) $\ZC(\partial X^\Sigma) = \ZC(\partial X) \otimes \ZC(\Sigma) \otimes \ZC(\Sigma)^*$.
We get a natural map
\begin{align}
  \Glue_{X^\Sigma} \colon \ZC(\partial X^\Sigma) \longrightarrow \ZC(\partial X).
\end{align}

The \textbf{gluing property of HTQFT} is then given by:
\begin{align}
\label{eq:CutGlue}\tag{Gluing}
  (\Glue_{X^\Sigma} \otimes \Cut_\Sigma^*) (\ZC(X^\Sigma)) = \ZC(X).
\end{align}
In lax terms, cutting the manifold and gluing it back together yields the same value of $\ZC$.
\begin{example}[Quantum Mechanics]
\label{ex:QM}
  Take $X = I$ and $\Geom_X = \RB_+$ (a metric/length on $I$).
  Then
  \begin{align}
    \ZC(\partial I) &= \ZC(\pt) \otimes \ZC(\pt^\opp) = V \otimes V^* = \End(V),
  \end{align}
  so that
  \begin{align}
    \ZC(I) \in V \otimes V^* \otimes \Fun(\RB_+) = \Map(\RB_+, \End(V)).
  \end{align}
  Then from \eqref{eq:CutGlue} one can deduce that $\ZC(I)$ is a representation of $\RB_+$ in $\End(V)$.
  Denote by $\ZC(I; T)$ the evaluation of $\ZC(I)$ at $T \in \RB_+$.
  Assuming $\lim_{T \rightarrow 0} \ZC(I; T) = \id_V$ as well as differentiability at $T = 0$, we can conclude
  \begin{align}
    \ZC(I; T_0 + T_1) &= \ZC(I; T_1) \circ \ZC(I; T_0) \qquad \Longleftrightarrow \qquad \ZC(I; T) = \e^{- T H}, \quad H \in \End(V).
  \end{align}
\end{example}
\subsection{Topological Quantum Field Theory}
\label{subsec:HTQFT}

To formulate what it means for a theory to be \textit{topological}, we replace $V$ by a cochain complex $(V^\bullet, Q)$ and $\Geom$ by $\Tan[1]\Geom$.
For a fixed $n$-dimensional manifold $X$ the functor $\ZC$ then takes values in
\begin{align}
  \ZC(X) \in \ZC(\partial X) \otimes \Omega^\bullet(\Geom_X) .
\end{align}
\begin{definition}[Topological Theories]
  We call a theory defined by a functor $\ZC$ as described above \textbf{topological} if it satisfies the \textbf{closedness axiom}
  \begin{align}
  \label{eq:HTQM_closed}\tag{Closedness}
    (\dR_{\Geom} + Q) \ZC(X) = 0.
  \end{align}
\end{definition}

We can expand \eqref{eq:HTQM_closed} in form-degree along $\Geom$.
The first two equations are
\begin{align}
  Q \ZC^{(0)}(X) &=0, \\
  \dR_{\Geom} \ZC^{(0)}(X) &= - Q \ZC^{(1)}(X).
\end{align}
I.e. the degree $0$ part of $\ZC$ is $Q$-closed and its change w.r.t. the geometric data is $Q$-exact.
Thus, the class of $\ZC^{(0)}$ in cohomology of $Q$ is (locally) constant on $\mr{Geom}$.
This recovers the usual notion of topological quantum field theory \`a la Atiyah. 
\begin{example}[Higher Topological Quantum Mechanics]
  Under the same assumptions as in \Cref{ex:QM}, one can show that the most general HTQM on $X = I$ with $\Geom = \RB_+$ is given by
  \begin{align}
  \label{eq:FormUniversalTQFT_1D}
    \ZC(I; T, \dR T) &\coloneqq \e^{[\dR_I + Q, - T G]} = \e^{- T H - \dR T G}\qquad \in \mr{End}(V)\otimes \Omega^\bullet(\RB_+),
  \end{align}
  where $\dR_I=\dR T\frac{\dR}{\dR T}$ is the de Rham operator on $\RB_+$, $G$ is an additional differential (of degree $-1$) on $V$ and $H \coloneqq [Q, G]$.

  Let $\ZC(\pt) = (\Omega^\bullet(M), Q = \dR_M)$ for some manifold $M$.
  There are two special cases of importance for us, and an interpolation between them:
  \begin{enumerate}[label=\Roman*)]
    \item \textbf{Hodge Laplacian:} Choose $G = \dR^*$, so that
    \begin{align}
      H = [\dR , \dR^*] = \Delta,
    \end{align}
    the Laplace operator.

    \item \textbf{Pure Morse TQM:} Let $G = \imath_v$, where $v$ is the gradient vector field of some Morse function on $M$.\footnote{
      These are analytically well-behaved, however in principle one can pick more general vector fields.
      Interesting examples are Anosov vector fields \cite{Schiavina:2023ysh}.
    }
    Then
    \begin{align}
      H = [\dR, \imath_v] = \lie{v},
    \end{align}
    the Lie derivative along $v$.
    %
    \item \textbf{Witten's Morse TQM (interpolation).}
      Fix $\varepsilon>0$. Set
      \begin{equation}
      \label{eq:G_MorseWitten_Ex2.3}
         G = \imath_v + \varepsilon \dR^*.
      \end{equation}
      Then
      \begin{equation}
      \label{eq:H_MorseWitten_Ex2.3}
         H = \lie{v} + \varepsilon \Delta.
      \end{equation}
    This is the data of Witten's supersymmetric quantum mechanics \cite{Witten:1982im}, see also \cite{Frenkel:2006fy}.
    Limits $\varepsilon\ra \infty$ and $\varepsilon\ra 0$ correspond to the two cases above.
  \end{enumerate}
\end{example}
\subsection{Topological Quantum Mechanics on Metric Graphs}
\label{subsec:HTQFT_graphs}
%
Let $V = \ZC(\pt)$ be a cochain complex with differential $Q$, equipped with the second differential $G$ of degree $-1$, and a nondegenerate pairing $\pairing$ of degree $-1$, such that $G, Q$ are 
self-adjoint (i.e. $\pair{Q(-)}{-} = \pair{-}{Q(-)}$ and $\pair{G(-)}{-} =  \pair{-}{G(-)}$) with respect to $\pairing$ (and so the Hamiltonian $H = [Q, G]$ and the TQM partition function \eqref{eq:FormUniversalTQFT_1D} are self-adjoint as well).\footnote{
  The pairing $\pairing$ allows us to revert the orientation of the interval on which the TQM is defined, or equivalently view it as defined on unoriented intervals.
  In terms of the second-quantised theory $\TB$, the pairing $\pairing$ is the degree $-1$ symplectic form $\omega^\TB$ on the space of fields and thus corresponds to the Poincaré pairing 
  of forms on $M$ tensored with the target pairing $\pairing_Y$.
}
Fix the structure maps (``cyclic operations'') 
%
\begin{align}
  c_n^k \colon S^n V\ra \CB.
\end{align}
Here the lower index is the arity $n\geq 0$ of the multilinear operation, the upper index $k\geq 0$ is used to denote the loop number.
All maps are of degree zero.
%
%
\begin{definition}[{$\qcLi$-algebra}]
\label{def:L_algebra}
  If the BV action
  \begin{align*}
    S(A) \coloneqq
      \frac12 \pair{A}{Q(A)} + \sum_{n, k} \frac{\hbar^k}{n!} c_n^k(\underbrace{A, \ldots, A}_n)
      , \quad A \in V, 
  \end{align*}
  associated to the maps $\{ c_{\bullet}^{\bullet} \}$ satisfies the Quantum Master Equation (QME)
  \begin{equation}
  \label{eq:def_of_L_algebra_QME}
    \frac{1}{2} \{S, S\} + \hbar \Delta S = 0,
  \end{equation}
  then we call the tuple $(V,Q, \{ c_{\bullet}^{\bullet} \},\pairing)$ a \textbf{quantum cyclic $\LI$-algebra on $V$} (abbreviated as \textbf{$\qcLi$-algebra}).
  Here $\{-, -\}$, $\Delta$ are the degree $+1$ Poisson bracket and the BV Laplacian on function on $V$ associated to the symplectic form on $V$ induced by the pairing $\pairing$.\footnote{
    For more extensive discussions and constructions of these algebraic structures, see \cite{Mnev:2008sa, granåker2008unimodular}, \cite[Theorem 8.1]{Cattaneo:2017tef} for the $1$-loop case and \cite{Doubek:2017naz} for the higher-loop case.
  }
  \footnote{
    A \emph{classical} cyclic $\LI$-algebra (as opposed to quantum one) consists of a complex $V,Q$ with a pairing $\pairing$ of degree $N-1$ (for some degree $N \in \NB$) and a collection of maps $c_n \colon S^n V\ra \CB$ of degree $N$, such that $S(A) \coloneqq \frac12 \langle A,Q A \rangle+\sum_n \frac{1}{n!} c_n(A,\ldots,A)$ satisfies the classical master equation $\{S,S\}=0$.
  }
\end{definition}

For simplicity, we will be assuming that $c^0_n = 0$ for $n = 0, 1, 2$.
One can consider ``oriented operations'' obtained by converting a subset of inputs in a cyclic operation $c_n^k$ into outputs using the pairing:
\begin{align}
  l_{p,q}^k \colon S^p V \xrightarrow{c_{p+q}^k} S^q V^* \xrightarrow{(\pairing^{-1})^{\otimes q}} S^q(V[1]).
\end{align}
The $l_{p,q}^k$ are multilinear operations with $p$ inputs and $q$ outputs (note that $l_{n,0}^k = c_n^k$).
Then ${(V,\{l^0_{n, 1} + Q \delta_{n,1}\}_{n \geq 0})}$ is a (generally, curved) $\LI$-algebra, which is flat if $c_1^0 = 0$.\footnote{\label{footnote: grading convention for L-infty algebras}
    Here we are using the grading convention for $\LI$-algebras where operations are graded-symmetric and have degree $1$.
    This means that on $V[-1]$ one has operations which are graded-skew-symmetric and have degree $2 - \mr{arity}$ (which is the more common convention for $\LI$-algebras).
    Thus, e.g., for $\g$ an ordinary Lie algebra (concentrated in degree zero), $V = \g[1]$ is an $\LI$-algebra in our convention.
}
%
\begin{example}[\texorpdfstring{$\LI$}{L∞}-relations]
\label{ex:Hty_relations}
  We give here examples for the $k = 0$ and $k = 1$ relations of a $\qcLi$-algebra.
  We assume for simplicity that all operations have at most $1$ output.
  In the following we denote by $v_i$ the inputs and by $\pm$ the appropriate Koszul sign. \\
  
  \noindent \textbf{Homotopy Jacobi identities ($\hbar^0$):} For each $n \geq 0$ we have
  \begin{align}
  \label{eq:Hty_Jacobi}
    \sum_{\sigma \in \mathrm{S}_n} \sum_{r + s = n} \frac{1}{r! s!} \pm l^0_{r+1, 1}(v_{\sigma_1}, \ldots, v_{\sigma_r}, l^0_{s, 1}(v_{\sigma_{r + 1}}, \ldots, v_{\sigma_n})) = 0.
  \end{align}
  \textbf{Homotopy unimodularity identities ($\hbar^1$):} For each $n \geq 0$ we have
  \begin{align}
  \label{eq:Hty_unimodularity}
    \sum_{\sigma \in \mathrm{S}_n}
    \left(
      \pm \frac{1}{n!} \mathrm{Str}_V l^0_{n + 1, 1}(
        v_{\sigma_1}, \ldots, v_{\sigma_n}, -
      )
      \pm \sum_{r + s = n} \frac{1}{r! s!} l^1_{r + 1, 0}(
        v_{\sigma_1}, \ldots, v_{\sigma_r}, l^0_{s, 1}(
          v_{\sigma_{r + 1}}, \ldots, v_{\sigma_n}
        )
      )
    \right) = 0.
  \end{align}
\end{example}
From now on – if not explicitly stated otherwise – ``graph'' (or ``metric graph'') refers to a weighted, metric graph:
%
Each internal edge is decorated by geometric data $T_i, \dR T_i$ and each vertex by a non-negative integer weight – the loop number.

Given a $\qcLi$-algebra $(V,Q, \{ c_{\bullet}^{\bullet} \},\langle-,-\rangle)$, we extend the functor $\ZC$ from intervals to graphs as follows:
\begin{itemize}
    \item
        Choose (arbitrarily, cf. \Cref{rem:Auxiliary_orientation}) an orientation on the graph.
    \item
        To \textbf{edges}, assign the value of $\ZC(I; T_i, \dR T_i)$.
    \item
        To each \textbf{vertex} with $p$ incoming edges, $q$ outgoing edges and weight $k$, assign the respective oriented operation $l_{p,q}^k \in \Hom(S^{p} V, S^q (V[1]))$ of the $\qcLi$-algebra.
    \item
        \textbf{Contract} inputs and outputs of the maps according to the composition laws prescribed by the oriented graph.
        An example is given in \Cref{fig:Y_shaped_graph}.
\end{itemize}
{\begin{figure}[H]
  \centering
  \includegraphics[width=1.0\columnwidth]{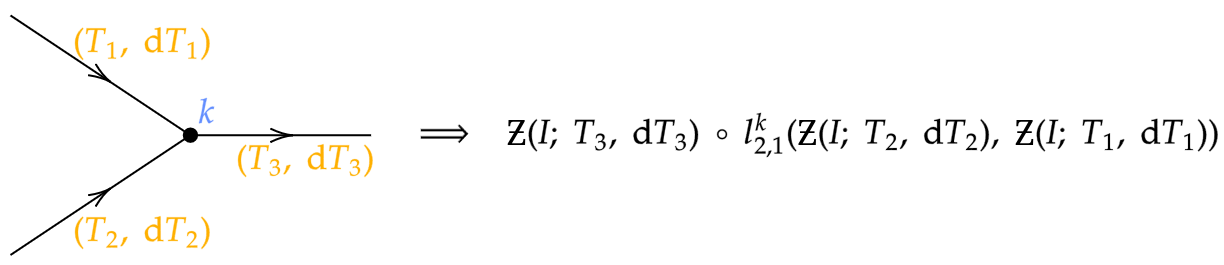}
  \caption{Example of a \textbf{Y}-shaped graph with the only vertex being of weight $k$.}
  \label{fig:Y_shaped_graph}
\end{figure}}

Note that for an oriented graph $\Gamma$ with $N_\mr{in}$ incoming exterior edges  and $N_\mr{out}$ outgoing exterior edges and $\mr{IE}(\Gamma)$ internal edges, the contraction above is an element in
\begin{align}
\label{eq:Hom_space_partitionfunction_graph}
  \mr{Hom}(V^{\otimes N_\mr{in}},(V[1])^{\otimes {N_\mr{out}}})\otimes \Omega^\bullet(\RB_+^{\mr{IE}(\Gamma)}).
\end{align}

Changing the orientation of internal edges of the graph leaves the contraction invariant, while changing the orientation of an external edge reassigns an in-$V$ factor in \eqref{eq:Hom_space_partitionfunction_graph} above as an out-$V$ factor or vice versa using the pairing $\pairing$.
\begin{remark}
[Forgetting the orientation of edges / orienting half-edges toward the incident vertex]
\label{rem:Auxiliary_orientation}
    An equivalent description of HTQM on graphs is as follows.
    Do not orient the graph $\Gamma$, assign $\ZC\circ \pairing^{-1}\in (V\otimes V)\otimes \Omega^\bullet(\RB_+)$ to each internal edge,\footnote{$\ZC\circ \pairing^{-1}$ is an element of total (de Rham plus internal) degree one.}
    assign $\ZC$ to each external edge and assign $c_n^k$ to each $n$-valent vertex of loop weight $k$, and then contract according to graph combinatorics.
    In this picture, all external edges are implicitly treated as incoming.
\end{remark}

If $H \coloneqq [Q, G]$ is positive semi-definite, one can take the length of edges to $+ \infty$.
The value of $\ZC(I; T, \dR T)$ then tends to the projector $\Pi \coloneqq \imath \circ \pi$ onto $\ker(H) \subset V$ as $T \rightarrow \infty$.
Here we denote by $\imath \colon \ker(H) \hookright V$ the canonical inclusion and by $\pi \colon V \twoheadrightarrow \ker(H)$ the projection along $\im(H)$.
\begin{remark}[SDR from \texorpdfstring{$H$}{H}]
\label{rem:SDR_from_H}
    Assume additionally that $G$ vanishes on $\ker(H)$.
    Then the differential $Q$ on $V$ induces a differential $Q' \coloneqq \pi \circ Q \circ \imath$ on $\ker(H)$ and the triple $(\imath, \pi, K \coloneqq \int_{\RB_+} \ZC(I; T, \dR T))$ is a strong deformation retraction (SDR) of $V$ onto $\ker(H)$, i.e.
    \begin{itemize}
        \item
            $\imath,\pi$ are chain maps.
        \item
            $K$ is a chain homotopy between identity and projection onto $\ker(H)$, i.e. $Q \circ K + K \circ Q = \id - \imath \circ \pi$.
        \item
            The side conditions are satisfied: $K^2 = 0$, $\pi \circ K = 0$, $K \circ \imath = 0$.
    \end{itemize}
\end{remark}
\begin{definition}[Pre-Amplitudes]
\label{def:Preamplitudes}
  We define the \textbf{pre-amplitude ${\PA^\infty}$ of the HTQM} by evaluating $\ZC$ on a metric graph and letting all external edges tend to length $+ \infty$.
  We will call infinitely long external edges ``leaves''.
  Thus the pre-amplitude assigned to a graph $\Gamma$ with $\IE(\Gamma)$ internal edges, $p$ incoming and $q$ outgoing leaves is a differential form
  \begin{align} \label{eq:Iinfty_of_Gamma}
    {\PA^\infty}(\Gamma) \in \Omega^\bullet(\RB^{\IE(\Gamma)}_+) \otimes \Hom(\ker(H)^{\otimes p}, (\ker(H)[1])^{\otimes q})
  \end{align}
  of total degree $\IE(\Gamma)$.
\end{definition}

Consider the moduli space $\mathcal{MG}_{p,q}^k$ of connected metric weighted graphs with $p$ unordered incoming leaves, $q$ unordered outgoing leaves and $k$ the total loop number (the number of loops in a graph plus the sum of weights of vertices); internal edges are unoriented. $\mathcal{MG}_{p,q}^k$ is an orbi-cell complex
\begin{equation}
    \mc{MG}_{p,q}^k = \bigsqcup_{\Gamma \in \mr{Graph}_{p,q}^k} \left( \RB_{\geq 0}^{\mr{IE}(\Gamma)} /\!/ \Aut(\Gamma) \right) \; \Big/ \; \sim.
\end{equation}
Here $\sim$ is the equivalence relation identifying a graph $\Gamma$ with an edge $e$ of length zero connecting vertices $u$ and $v$ with either
\begin{enumerate}[label=(\alph*)]
    \item
      the graph $\Gamma$ with edge $e$ collapsed (if $u \neq v$ as vertices in $\Gamma$) or
    \item
      with the graph $\Gamma$ with edge $e$ removed and the loop weight of $v$ increased by one (if $u = v$).
\end{enumerate}

An orbi-cell  $\RB_{\geq 0}^{\mr{IE}(\Gamma)} /\!/ \Aut(\Gamma)$ is the action groupoid for the graph automorphism group (allowing to permute the in-leaves among themselves and out-leaves among themselves) acting on the space of lengths of internal edges $\RB_{\geq 0}^{\mr{IE}(\Gamma)}$.

\begin{definition}[Pre-Amplitudes: Invariant Definition]
\label{rem:Preamplitudes_InvarDef}
  We define the \textbf{pre-amplitude} as a distributional form on the moduli space of metric graphs
  \begin{align}
  \label{eq:Iinfty_distrform_MG}
    ({\PA^\infty})_{p, q}^k \in \Omega^\bullet_\mathrm{distr} \left( \mathcal{MG}_{p, q}^k, \Hom(S^p \ker(H), S^q (\ker(H)[1])) \right),
  \end{align}
  invariant under the graph automorphism groups (acting on $\mc{MG}_{p,q}^k$ and also by permuting the inputs and outputs).
  It is constructed as
  \begin{equation}
  \label{eq:Iinfty_sumof_pushfwds}
      ({\PA^\infty})_{p, q}^k \coloneqq \sum_{\Gamma\in \mr{Graph}_{p,q}^k}(i_\Gamma)_* \PA^\infty(\Gamma),
  \end{equation}
  the sum of pushforwards of forms \eqref{eq:Iinfty_of_Gamma} along the inclusions of open orbi-cells
  \begin{align*}
    \imath_\Gamma\colon \RB_+^{\mr{IE}(\Gamma)}/\!/\mr{Aut}(\Gamma) \longhookright \mc{MG}_{p,q}^k.
  \end{align*}
\end{definition}
\begin{lemma}
    \label{lem:TotalClosedness_Iinfty}
    The distributional form \eqref{eq:Iinfty_distrform_MG}, \eqref{eq:Iinfty_sumof_pushfwds} satisfies
    \begin{align}
      (\dR_\Geom + Q') \ {\PA^\infty} &= 0.
    \end{align}
\end{lemma}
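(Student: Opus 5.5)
We work cell by cell and then assemble the cells. The operator $\dR_\Geom$ acting on the distributional form $\sum_\Gamma(\imath_\Gamma)_*\PA^\infty(\Gamma)$ splits into two pieces. The first is the interior de Rham differential $\dR_\Gamma$ on each open orbi-cell $\RB_+^{\IE(\Gamma)}$. The second is the boundary term coming from the pushforward: by Stokes, $\dR(\imath_\Gamma)_* = (\imath_\Gamma)_*\dR + (\text{restriction to the codimension-one faces } T_e=0)$, up to sign. The faces at $T_e\to\infty$ contribute nothing, because of the assumption that $H$ is positive semi-definite and the limiting behaviour $\ZC(I;T,\dR T)\to\Pi$. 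The plan is to show three things: the interior terms produce $Q$-commutators; the $Q$-terms at internal vertices and edges telescope to $Q'$ acting on the leaves; and the face terms cancel the remaining vertex contributions by the $\qcLi$ relations \eqref{eq:def_of_L_algebra_QME}.

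\textbf{Step 1: interior.} On an edge, \eqref{eq:FormUniversalTQFT_1D} gives $(\dR_I)\ZC(I) = -[Q,\ZC(I)]$, i.e. \eqref{eq:HTQM_closed}. For a fixed graph we use the Leibniz rule over the edges and obtain
\[
\dR_\Gamma \PA^\infty(\Gamma) = -\sum_{e}(\text{insert } [Q,-] \text{ on edge } e).
\]
Each commutator places a $Q$ at both endpoints of the edge. At each vertex $v$ these $Q$-insertions assemble into $Q\circ l^k_{p,q} \mp l^k_{p,q}\circ Q$ summed over adjacent half-edges. This is exactly the quadratic $Q$-part of the master equation acting on $c^k_n$. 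On leaves the limit $\Pi=\imath\pi$ converts the leftover $Q$ into $Q'=\pi Q\imath$; this uses that $\imath,\pi$ are chain maps (\Cref{rem:SDR_from_H}). Hence $(\dR_\Gamma+Q')\PA^\infty(\Gamma)$ equals a sum over vertices of the terms $\{\tfrac12\langle A,QA\rangle, c^k_n\}$.

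\textbf{Step 2: faces.} The face $T_e=0$ of the cell of $\Gamma$ is identified via $\sim$ with the cell of $\Gamma/e$. If $e$ is a separating or non-loop edge, case (a) applies. Since $\ZC(I;0,\dR T)|_{T=0}$ restricted to the face equals $\id_V$ (the $\dR T\,G$ part drops on the face), the face contribution is the graph $\Gamma/e$ whose merged vertex carries the composite $l_{r+1,1}\circ l_{s,1}$, i.e. the bracket $\{c,c'\}$ of the two vertex tensors. If $e$ is a tadpole, case (b) applies, and the contribution is $\Gamma\setminus e$ with the vertex decorated by the contraction of $c^k_n$ with $\pairing^{-1}$, i.e. $\Delta c^k_n$, with weight raised to $k+1$. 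We then sum over all $\Gamma$ and all edges. For a given target graph $\Gamma'$, the faces mapping onto it are in bijection with the ways of splitting a vertex of $\Gamma'$ into two vertices joined by an edge, or of adding a tadpole. The combinatorial factors $1/|\Aut|$ and $1/(r!s!)$ therefore match those in \eqref{eq:Hty_Jacobi} and \eqref{eq:Hty_unimodularity}.

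\textbf{Step 3: assembly.} For each graph $\Gamma'$, each vertex of $\Gamma'$ now carries the full expression $\tfrac12\{S,S\}+\hbar\Delta S$ in the appropriate $(n,k)$-component, and this vanishes. Summing over vertices gives $(\dR_\Geom+Q')\PA^\infty=0$.

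\textbf{Main obstacle.} The hard part is Step 2. It requires verifying precisely that the boundary restriction of the pushforward is correct with orientations and signs: the ordering of edge forms $\dR T_e$, the Koszul signs for the odd edge elements $\ZC\circ\pairing^{-1}$ of degree $1$, and the orbifold factors when $e$ is exchanged by automorphisms. I would first fix orientations via the ordering of edges and the $\dR T_e$, as in orientation conventions for graph complexes, and check the single-vertex-splitting case. The general case then follows because the contributions are local at vertices. The $T\to\infty$ faces, and the fact that leaves already sit at infinity, also need to be justified by the exponential decay of $\e^{-TH}$ off $\ker H$.
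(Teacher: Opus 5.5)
Your Steps 1--3 follow the paper's general-case argument. Stokes on each orbi-cell; interior terms put $Q$ on the vertex tensors and $Q'$ on the leaves; the ultraviolet faces $T_e=0$, identified with cells of $\Gamma/e$, produce the bracket and BV-Laplacian terms; and everything cancels vertex by vertex by the $(n,k)$-component of the QME. You phrase it via $\dR(\imath_\Gamma)_*=(\imath_\Gamma)_*\dR+(\text{faces})$ rather than by pairing with test forms as in \eqref{eq:Testform_computation_closedness}, but that is the same computation.

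One claim is wrong and must be replaced. The faces $T_e\to\infty$ do not drop out ``because $\ZC(I;T,\dR T)\to\Pi$''. Restricting $\PA^\infty(\Gamma)$ to such a face kills the $\dR T_e\,G$ part and leaves $\Pi=\imath\circ\pi$ on that edge, which is not zero. Exponential decay of $\e^{-TH}$ off $\ker(H)$ gives convergence to $\Pi$, not to $0$. These face terms are exactly the factorised pre-amplitudes $\PA^\infty(\Gamma')\circ_{l'l''}\PA^\infty(\Gamma'')$ and partial supertraces that generate $\{\mathsf{A},\mathsf{A}\}'$ and $\Delta'\mathsf{A}$ in the proof of \Cref{theo:HomotopyTransfer_MG}. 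If your mechanism were right, that proof would give $Q'\mathsf{A}^k_n=0$ with no quadratic terms.

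The actual reason they are absent here is different. $\mc{MG}^k_{p,q}$ is glued from the cells $\RB_{\geq 0}^{\IE(\Gamma)}$ and contains no points with an infinitely long internal edge, and the distributional differential is tested against compactly supported forms. So Stokes on a cell only sees the ultraviolet faces, which lie in the interior of $\mc{MG}$. With that justification in place, your proof is the paper's. A minor wording point: ``separating or non-loop'' should read simply ``non-loop'', since case (a) applies to every edge with distinct endpoints.
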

\begin{proof}
    First, consider the case when only the operation $c_3^0$ on $V$ is nonvanishing and all other operations $c_n^k$ vanish.\footnote{
        I.e. $V$ is a differential graded Lie algebra with invariant pairing – the case pertinent to $\TB$ being Chern--Simons or $BF$ theory.
    }
    Then:
    \begin{enumerate}[label=(\roman*)]
        \item
            $c_3^0$ is $Q$-closed.
        \item
            The form $\PA^\infty$ is supported on 3-valent graphs with leaves allowed and is non-distributional (does not have distributional contributions supported at higher-codimension cells of $\mc{MG}$).
        \item \label{lem:TotalClosedness_Iinfty_point3}
            $(\dR_\mr{Geom} + Q') \PA^\infty = 0$ on top cells of $\mc{MG}$ (corresponding to 3-valent graphs), since the operator $(\dR_\mr{Geom} + Q)$ annihilates the edge factors $\ZC(I; t_i, \dR t_i)$ in $\PA^\infty$ and $Q$ annihilates (in the sense ``commutes with'') the leaf factors $\imath$ and annihilates the vertex factors $c_3^0$.
        \item
            Acting with $\dR_\mr{Geom}$ on $\PA^\infty$ does not generate a $\delta$-density supported at higher-codimension cells (edge collapses), due to Jacobi identity on $c_3^0$.
    \end{enumerate}

    A depiction of the relevant cells of the moduli space for this case is given in \Cref{fig:ModuliSpace_cell}.
    {\begin{figure}[H]
      \centering
      \includegraphics[width=.7\columnwidth]{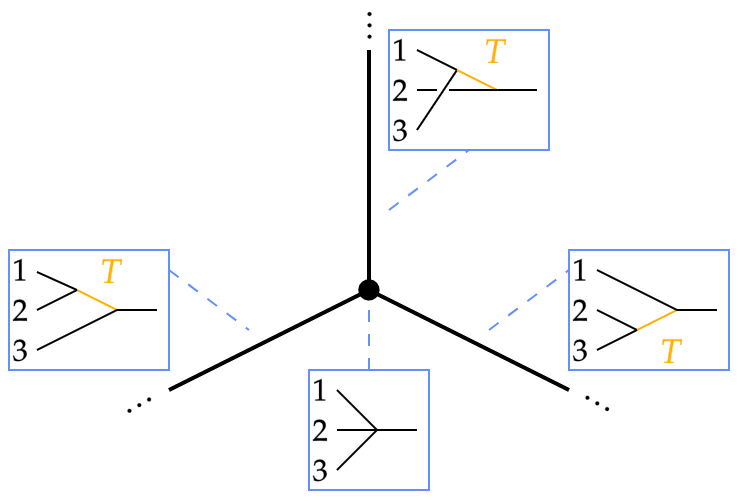}
      \caption{
        Decoration of three $1$-cells by graphs with one internal edge and ordered inputs, glued along a $0$-cell decorated by a graph with no internal edges.
        All vertices are assumed to be of weight $0$ and we introduce an \textit{arbitrary} orientation from left to right.
        The graph decorating the $0$-cell corresponds to $c^0_4$ and vanishes in the above example, making $I^\infty$ non-distributional.
        Collapsing the internal edges of the graphs labelling the $1$-cells yields the Jacobi identity on $c^0_3$.
      }
      \label{fig:ModuliSpace_cell}
    \end{figure}}
    \noindent This proves the special case.
    Now consider the case of $V$ a general quantum cyclic $\LI$-algebra.
    Three points in the argument above change:
    \begin{enumerate}[label=(\arabic*)]
      \item
        $\PA^\infty$ gets distributional contributions on higher-codimension strata,
      \item
        vertex factors $c_n^k$ in $\PA^\infty$ are no longer $Q$-closed,
      \item
        the Jacobi identity is replaced by the more complicated structure equations of a $\qcLi$-algebra.
    \end{enumerate}

    Let $\phi \in \Omega^\bullet(\mc{MG}_{p,q}^k, \mr{Hom}(S^p V, S^q V[1])^*)$ be a ``test form'', which is compactly supported, invariant under graph automorphisms, smooth on orbi-cells, and has well-defined smooth pullbacks by inclusion of higher-codimension strata.
    We have
    \begin{multline}\label{eq:Testform_computation_closedness}
         \int_{\mc{MG}_{p,q}^k}\langle \phi,(d_\mr{Geom}+Q') \PA^\infty\rangle =
         \int_{\mc{MG}_{p,q}^k}-\langle (d_\mr{Geom}+Q')\phi, \PA^\infty\rangle\\
         =
         \sum_{\Gamma\in \mr{Graph}_{p,q}^k}
         \int_{\RB_+^{\mr{IE}(\Gamma)}/\!/ \mr{Aut}(\Gamma)}
         -\langle (d_\mr{Geom}+Q')\phi, \PA^\infty\rangle\\
         =
         \sum_{\Gamma\in \mr{Graph}_{p,q}^k} \frac{1}{|\mr{Aut}(\Gamma)|} \Big(
         \underbrace{\int_{\RB_+^{\mr{IE}(\Gamma)}}
         \langle \phi, (d_\mr{Geom}+Q')\PA^\infty\rangle}_{(a)}-
         \underbrace{\int_{
         \partial_\mr{UV}\RB_+^{\mr{IE}(\Gamma)}}
         \langle \phi, \PA^\infty\rangle}_{(b)} \Big).
    \end{multline}
    Here the first equality is by definition of the derivative of a distribution, for the second we use the orbi-cell decomposition of $\mc{MG}$, and the third holds by Stokes' theorem on cells.
    
    We denote by $\partial_\mr{UV} \RB_+^{\mr{IE}(\Gamma)}$ the union of ``ultraviolet boundary strata'' of the cell, given by setting the length of one of the interior edges to zero.
    Altogether we get the following terms in \eqref{eq:Testform_computation_closedness}:
    \begin{itemize}
      \item[$(a)$]
        For $(d_\mr{Geom} + Q') \PA^\infty$ we sum over vertices $v$ of a graph $\Gamma$ and decorate them with $Q c^{\mr{weight}(v)}_{\mr{val}(v)}$.
        The rest of the internal edges and vertices are decorated by $\ZC$'s and $c_n^k$'s respectively.
        Then one contracts according to graph combinatorics.
        This procedure is a generalization of point \ref{lem:TotalClosedness_Iinfty_point3} above.
      \item[$(b)$]
        The ultraviolet boundary strata yield a sum over graphs $\Gamma'$ with one marked edge $e$ where in the formula for $\PA^\infty$, the $\ZC$-factor for $e$ is replaced with the identity operator.
    \end{itemize}
    
    Then, a contribution $(a)$ for a graph $\Gamma$ and a marked vertex $v$ of $\Gamma$ decorated by $Q c^{\mr{weight}(v)}_{\mr{val}(v)}$ cancels out with the sum of contributions $(b)$ for graphs $\Gamma'$ which differ from $\Gamma$ only in $v$ which is either replaced by two vertices $v', v''$ with weights $k' + k'' = k$ and connected through $e$, or by a single vertex $v'$ with weight $k-1$ and connected to itself via $e$.
    An example of such replacements of $v$ in $\Gamma$ is given in \Cref{fig:GraphSplitting}.
    {\begin{figure}[H]
      \centering
      \includegraphics[width=1.0\columnwidth]{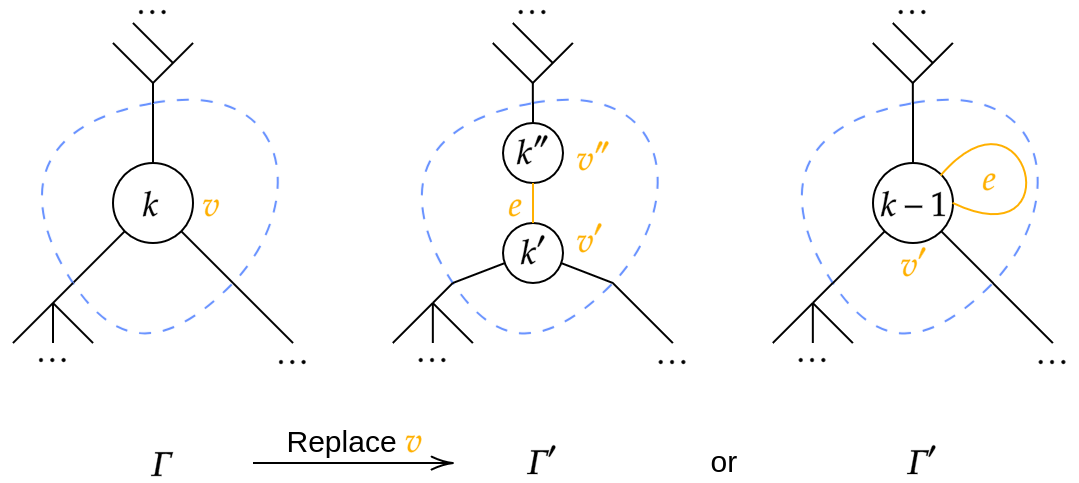}
      \caption{
        Example of replacing a vertex $v$ of weight $k$ in a graph $\Gamma$ by either (middle graph) two vertices connected through a marked edge $e$ and with same total loop number $k'+k''=k$ or by (right graph) a single vertex with loop number decreased by one and connected via $e$ to itself.
      }
      \label{fig:GraphSplitting}
    \end{figure}}

    The cancellation occurs by virtue of structure relations \eqref{eq:def_of_L_algebra_QME} on $c_n^k$:
    \begin{equation}
        \underbrace{\frac{1}{n!} Q c_n^k}_{\text{from $(a)$}} + \underbrace{\sum_{p + q = n + 2} \; \sum_{k' + k'' = k}\frac{1}{p! q!} \{ c_p^{k'}, c_q^{k''} \} + \frac{1}{(n+2)!} \Delta c^{k-1}_{n+2}}_{\text{from $(b)$}} = 0.
    \end{equation}
    Thus, the r.h.s. of \eqref{eq:Testform_computation_closedness} is zero, proving that $(\dR_\Geom + Q') \ {\PA^\infty}$ vanishes as a distributional form.
\end{proof}

\begin{definition}[Amplitudes]
\label{def:Amplitudes}
  For given $p, q$ and $k$, define the \textbf{$(p, q, k)$-amplitude} as
  \begin{align}
    \frac{1}{p! q!} \mathsf{A}_{p, q}^k(\ZC, \{ c_{\bullet}^\bullet \}) \coloneqq
    %
    %
    \sum_{\Gamma \in \mathtt{Graph}_{p, q}^k} \frac{1}{|\Aut(\Gamma)|} \int_{\RB^{\IE(\Gamma)}_+} {\PA^\infty}(\Gamma).
  \end{align}
  Here the sum is over all connected weighted metric graphs with $p$ incoming leaves, $q$ outgoing leaves and $k$ total loop number.
  $\mathsf{A}_{p, q}^k(\ZC, \{ c_{\bullet}^\bullet \})$ is an element of $\Hom(S^p\ker(H), S^q(\ker(H)[1]))$ and of degree $0$.
  
  Following \Cref{rem:Preamplitudes_InvarDef} we may also define it as
  \begin{align}\label{amplitude as integral over MG}
    \frac{1}{p! q!} \mathsf{A}_{p, q}^k(\ZC, \{ c_{\bullet}^\bullet \}) \coloneqq
    %
    %
    \int_{\mathcal{MG}_{p, q}^k} ({\PA^\infty})_{p, q}^k.
  \end{align}
  When the partition function $\ZC$ on intervals and the $\qcLi$-algebra operations $\{ c_{\bullet}^\bullet \}$ are understood, we simply denote the $(p, q, k)$-amplitude by $\mathsf{A}_{p, q}^k$.
\end{definition}

\begin{remark}
    The assignment of leaves of graphs as in- or out- throughout this section is entirely superficial as one can switch between them using the pairing $\pairing$ (cf. \Cref{rem:Auxiliary_orientation}).
    Likewise, the data of ``oriented'' amplitudes $\mathsf{A}^k_{p, q}$ is contained in amplitudes of the form $\mathsf{A}^k_{n, 0}$ with $p + q = n$ – ``cyclic amplitudes''.
    To simplify notation, we will denote $\mathsf{A}^k_{n, 0}$ by $\mathsf{A}^k_n$, $\mc{MG}^k_{n, 0}$ by $\mc{MG}_n^k$ and $\mr{Graph}^k_{n, 0}$ by $\mr{Graph}^k_n$.
\end{remark}
\begin{theorem}
\label{theo:HomotopyTransfer_MG}
  If the maps $\{ c_{\bullet}^{\bullet} \}$ satisfy the relations of a $\qcLi$-algebra on $V$, then the operations $\{ \mathsf{A}_{\bullet}^{\bullet} \}$ satisfy the relations of a $\qcLi$-algebra on $\ker(H)$ (with differential $Q' \coloneqq \pi \circ Q \circ \imath$ and $\pairing'$ the restriction of the pairing $\pairing$ on $V$).
\end{theorem}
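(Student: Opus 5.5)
We apply Stokes' theorem on the moduli space of metric graphs, using the closedness of the pre-amplitude from \Cref{lem:TotalClosedness_Iinfty}. The target relation is the QME on $\ker(H)$:
\[
\tfrac12\{S',S'\}'+\hbar\Delta' S'=0,\qquad
S'(a)=\tfrac12\langle a,Q'a\rangle'+\sum_{n,k}\tfrac{\hbar^k}{n!}\mathsf{A}^k_n(a,\ldots,a).
\]
Expanded in $\hbar^k$ and arity $n$, this reads
\[
\tfrac1{n!}Q'\mathsf{A}^k_n+\sum_{p+q=n+2}\;\sum_{k'+k''=k}\tfrac{1}{p!q!}\{\mathsf{A}^{k'}_p,\mathsf{A}^{k''}_q\}'+\tfrac{1}{(n+2)!}\Delta'\mathsf{A}^{k-1}_{n+2}=0 .
\]
This is the same form as the cancellation identity at the end of the proof of \Cref{lem:TotalClosedness_Iinfty}. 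First we check that $\pairing'$ is nondegenerate on $\ker(H)$ and that $Q'$ is self-adjoint. Both follow from the self-adjointness of $G$ and $Q$: $\ker H$ and $\im H$ are orthogonal, and $\pi$ is then the adjoint of $\imath$. So $\{-,-\}'$ and $\Delta'$ are well defined.

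Next we integrate the identity $(\dR_\Geom+Q')\PA^\infty=0$ over $\mc{MG}^k_n$. The $Q'$ part commutes with the integral and gives $\frac1{n!}Q'\mathsf{A}^k_n$ by \eqref{amplitude as integral over MG}. The $\dR_\Geom$ part integrates, by Stokes' theorem, to a boundary term. Since $\PA^\infty$ is closed as a distributional form, the internal UV codimension-one faces (edge length $0$) contribute nothing: those are exactly the cancellations already accounted for in the lemma. What survives is the \emph{infrared boundary}, where the length $T_e$ of some internal edge $e$ tends to $+\infty$. To make this precise, we compactify each cell $\RB_{\ge0}^{\mr{IE}(\Gamma)}$ to $[0,\infty]^{\mr{IE}(\Gamma)}$, or equivalently pair $(\dR_\Geom+Q')\PA^\infty$ with a cutoff test form $\phi_\Lambda$ equal to $1$ for lengths below $\Lambda$, and then let $\Lambda\to\infty$.

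On an IR face with $T_e=\infty$, the edge factor $\ZC(I;T_e,\dR T_e)\circ\pairing^{-1}$ tends to $\Pi\circ\pairing^{-1}$, and its $\dR T_e$-component drops on the face. Because $\Pi=\imath\pi$, the edge is effectively cut into two leaves, joined through the inverse pairing of $\pairing'$ on $\ker H$. There are two cases.
\begin{enumerate}[label=(\roman*)]
  \item If $e$ is a bridge, $\Gamma$ splits into $\Gamma'\in\mr{Graph}^{k'}_{p}$ and $\Gamma''\in\mr{Graph}^{k''}_{q}$, and the face is the product of their moduli cells. Summing over graphs with automorphism factors gives $\frac1{p!q!}\{\mathsf{A}^{k'}_p,\mathsf{A}^{k''}_q\}'$.
  \item If $e$ is not a bridge, cutting it yields a connected graph in $\mr{Graph}^{k-1}_{n+2}$ with its two new leaves contracted. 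This gives $\frac1{(n+2)!}\Delta'\mathsf{A}^{k-1}_{n+2}$.
\end{enumerate}
The quadratic term $\frac12\langle a,Q'a\rangle'$ is accounted for by the leaves themselves and needs no separate treatment.

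The main obstacle is the analysis of the IR boundary. One must show that the convergence $\ZC(I;T)\to\Pi$ is fast enough, using $H\ge0$ and a spectral gap, so that the integrals converge and the limiting face contributions are exactly the cut graphs. One must also handle faces where several edges go to infinity simultaneously, which have higher codimension and should not contribute to Stokes' formula. Finally, the combinatorics has to work out: the orientations, Koszul signs, and $\frac1{|\Aut|}$ factors must reassemble into the bracket and Laplacian coefficients above. For this last point we would count each graph with a marked cut edge once from each side, so that the $\frac{1}{p!q!}$ and $\frac12$ symmetrisations match. This mirrors the bookkeeping at the UV faces in \Cref{lem:TotalClosedness_Iinfty}.
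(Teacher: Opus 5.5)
Your proposal is correct and follows essentially the same route as the paper's sketch. You integrate the total closedness $(\dR_\Geom+Q')\PA^\infty=0$ of \Cref{lem:TotalClosedness_Iinfty} over $\mc{MG}^k_n$, identify the $Q'$ term with $\frac{1}{n!}Q'\mathsf{A}^k_n$, reduce the $\dR_\Geom$ term by Stokes' theorem to the infrared boundary (the ultraviolet contributions cancel against the distributional terms), and read off the bracket from separating edges and the BV Laplacian, i.e.\ the partial supertrace, from non-separating ones; your cutoff test form $\phi_\Lambda$ and your check that $\pairing'$ is nondegenerate with $\pi=\imath^\dagger$ are just a more careful rendering of the same argument.
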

\begin{remark}
  Following \Cref{theo:HomotopyTransfer_MG} there are three main perspectives on amplitudes:
  \begin{enumerate}[label=\Roman*)]
    \item
      They describe the homotopy transfer of a $\qcLi$-algebra on $V$ onto the retract $\ker(H)$.
    \item
      We can see them as an instance of topological quantum mechanics coupled to $1$-dimensional topological gravity (the metric data):
      From this point of view, the amplitudes $\mathsf{A}_{p, q}^k$ are computing the scattering amplitudes for a $1$-dimensional theory.
      The $\qcLi$-algebra relations on $\ker(H)$ are then a consequence of those on $V$ together with the factorisation of the pre-amplitude on the ``infrared''
      boundary strata of the moduli space $\mathcal{MG}$ corresponding to making one internal edge infinitely long.
    \item
      The amplitudes can be seen as sums of Feynman graphs of the effective action of a second quantised theory $\TB$ defined by the BV action $S^\TB(A)$ of \Cref{def:L_algebra}.
      The fact that the generating function for the effective theory, $S_\eff^\TB(a)$ satisfies the Quantum Master Equation (as a consequence of BV-Stokes' theorem for fibre BV integrals) implies the $\qcLi$-algebra relations of the operations $\{ \mathsf{A}^\bullet_\bullet \}$ on $\ker(H)$.
  \end{enumerate}
\end{remark}
\begin{proof}[Sketch of the proof of {\Cref{theo:HomotopyTransfer_MG}}]~\\
  %
  %
  %
  We adopt the viewpoint II) above.
  Using 
  \Cref{lem:TotalClosedness_Iinfty}, we have
  \begin{align}
    \label{thm 2.13 computation line 1}
    \frac{1}{n!} Q' \mathsf{A}^{k}_{n} =&~
    \int_{\mc{MG}_{n}^k}((\dR_\mr{Geom}+Q')-\dR_\mr{Geom})\PA^\infty
    \\
    \label{thm 2.13 computation line 2}
    \underset{
    \Cref{lem:TotalClosedness_Iinfty}}{=}&~\int_{\mc{MG}_{n}^k} -\dR_\mr{Geom} \PA^\infty  \\
     \label{thm 2.13 computation line 3}
    =& ~\sum_{\Gamma\in \mr{Graph}_n^k} \frac{1}{|\mr{Aut}(\Gamma)|}\Big(
    -\int_{\RB_+^{\mr{IE}(\Gamma)}} \dR_\mr{Geom}\PA^\infty + \int_{\partial_\mr{UV}\RB_+^{\mr{IE}(\Gamma)}}\PA^\infty
    \Big) \\
     \label{thm 2.13 computation line 4}
    =&~ \sum_{\Gamma\in \mr{Graph}_n^k } -\frac{1}{|\mr{Aut}(\Gamma)|}  \int_{\partial_\mr{IR}\RB_+^{\mr{IE}(\Gamma)}}\PA^\infty \\
     \label{thm 2.13 computation line 5}
    =&~ -\sum_{p+q=n+2}\; \sum_{k'+k''=k} \frac{1}{p!q!} \{\mathsf{A}^{k'}_p, \mathsf{A}^{k''}_q\}'-\frac{1}{(n+2)!}\Delta' \mathsf{A}^{k-1}_{n+2},
  \end{align}
  which proves the structure equation \eqref{eq:def_of_L_algebra_QME} for operations $\{\mathsf{A}_\bullet^\bullet\}$.

  We provide explanations for the computation above: The second term in the third line (the integral over the ultraviolet boundary) appears by the mechanism of \eqref{eq:Testform_computation_closedness}.
  Going to the fourth line, we evaluate the first term in \eqref{thm 2.13 computation line 3} using Stokes' theorem
  – it yields the sum of contributions of the ultraviolet boundary (length of one of the internal edges is $T_e \ra 0$) – which cancels out the second term in \eqref{thm 2.13 computation line 3} – and the infrared boundary (length of one of the internal edges is $T_e \ra \infty$).
  The infrared boundary of a cell of $\mc{MG}$ consists of
  \begin{enumerate}[label=(\alph*)]
      \item
        strata where an edge $e$ (with length $T_e \ra \infty$) is separating $\Gamma$ into two disjoint subgraphs $\Gamma', \Gamma''$ each with a marked leaf $l', l''$.
        On such a stratum the pre-amplitude $\PA^\infty(\Gamma)$ factorises as $\PA^\infty(\Gamma') \circ_{l' l''} \PA^\infty(\Gamma'')$ where $\circ_{l' l''}$ means that the two marked leaves are joined by $\imath \circ \pi$.
      \item
        strata where the infinite edge $e$ is non-separating and cutting it yields a connected graph $\Gamma'$ with two marked leaves $l_1, l_2$.
        On such a stratum, the pre-amplitude $\PA^\infty(\Gamma)$ becomes $\mr{Str}_{l_1 l_2} (\PA^\infty(\Gamma'))$ – the partial supertrace of $\PA^\infty(\Gamma')$ over $\ker(H)$, where $l_2$ seen as an out-leaf is plugged into $l_1$.
  \end{enumerate}
  Integrating $\PA^\infty$ over the lengths of non-infinite edges and summing over graphs, we obtain (\ref{thm 2.13 computation line 5}).
\end{proof}

We illustrate the argument above by two examples of $\qcLi$-algebra relations – one for $0$-loop operations, one involving a $1$-loop operation. We have
\begin{align}
  \frac{1}{3!} Q' \mathsf{A}^{0}_{3, 1} =&~ \frac{1}{3!}\pi \circ (Q l_{3, 1}^0) \circ \imath^{\otimes 3} - \int_{\partial (\mathcal{MG}_{3, 1}^{0,\mr{top}})} ({\PA^\infty})_{3, 1}^{0}.
\end{align}
Here on the r.h.s. the first term is the separated contribution of the zero-dimensional cell of $\mc{MG}_{3,1}^0$ (\Cref{fig:ModuliSpace_3_1_0}, left).
The second term is the integral over the top (1-dimensional) cells of the space of metric graphs $\mc{MG}_{3,1}^0$ evaluated by Stokes' theorem.
The integral on the r.h.s. has both ultraviolet and infrared boundary contributions (cf. \Cref{fig:ModuliSpace_3_1_0}, right):
Together with the first term involving $Q l_{3, 1}^0$ the ultraviolet terms assemble into the homotopy Jacobi relation on $V$.
The infrared contributions yield exactly the various compositions of $\mathsf{A}_{2, 1}^0$ with itself which assemble into the homotopy Jacobi relation (cf. \eqref{eq:Hty_Jacobi} for $n = 3$) on $\ker(H)$ together with $Q' \mathsf{A}^{0}_{3, 1}$.
{\begin{figure}[H]
  \centering
  \includegraphics[width=.7\columnwidth]{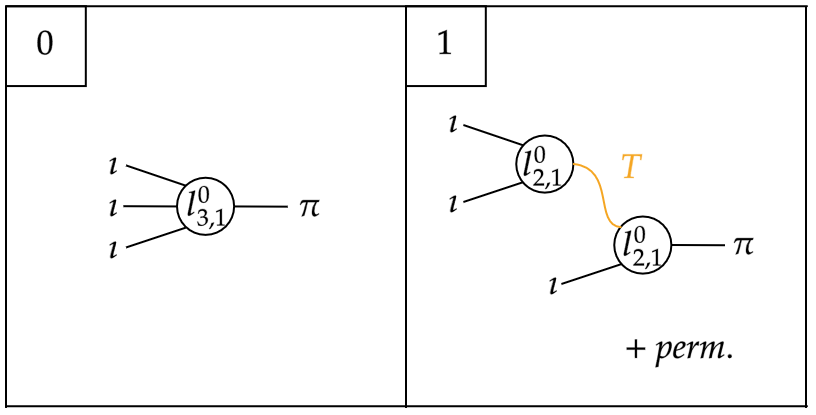}
  \caption{The relevant $0$- and $1$-cells for $\mathcal{MG}_{3, 1}^0$.}
  \label{fig:ModuliSpace_3_1_0}
\end{figure}}

Similarly for the $1$-loop relation for $\mathsf{A}_{1, 0}^{1}$ we get
\begin{align}
  Q' \mathsf{A}_{1, 0}^{1} =&~ l_{1, 0}^1 \circ Q \circ \imath - \int_{\partial (\mathcal{MG}_{1, 0}^{1,\mr{top}})} ({\PA^\infty})_{1, 0}^{1}.
\end{align}
The ultraviolet contributions from the r.h.s. (cf. \Cref{fig:ModuliSpace_1_0_1}) yield exactly the different traces over the $l_{2, 1}^0$ operation on $V$ and assemble – together with the first term – into the homotopy unimodularity equation on $V$.
Similarly the infrared contributions yield the respective traces in $\ker(H)$ and thus prove the homotopy unimodularity equation (cf. \eqref{eq:Hty_unimodularity} for $n = 1$) for the amplitudes.
{\begin{figure}[H]
  \centering
  \includegraphics[width=.7\columnwidth]{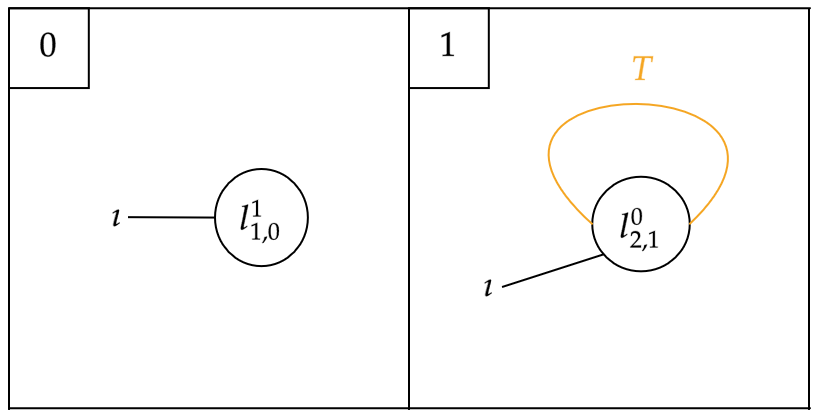}
  \caption{The relevant $0$- and $1$-cells for $\mathcal{MG}_{1, 0}^{1}$.}
  \label{fig:ModuliSpace_1_0_1}
\end{figure}}
\section{1d AKSZ coupled to 1d SUGRA}
\label{sec:1D_AKSZ}
In this section we will construct a $1$-dimensional AKSZ theory 
which provides a Lagrangian (path integral) description of the
HTQM on an interval.
In \Cref{subsec:1D_AKSZ_couplegravity} we formulate – in the BV-BFV language – a $1$-dimensional AKSZ theory coupled to 1d gravity.
\Cref{subsec:Simplification_SUGRA} demonstrates how to simplify the 1d gravity sector by constructing the BV-pushforward to bulk zero-modes and reducing the boundary space of states.
%
We conclude with \Cref{subsec:Gauge_fixing_mQME} where we discuss gauge-fixings of the resulting theory and show how to obtain HTQM on an interval.
\subsection{AKSZ Formulation}
\label{subsec:1D_AKSZ_couplegravity}
As the target for our AKSZ theory $\tb$ we take\footnote{
  Later (in \Cref{sec:1D_theory_graphs}) we will incorporate an extra factor in $\NC$, corresponding to the target $Y$ of theory $\TB$.
  In this section we are working in a simplified setting, corresponding to the factor $\Omega^\bullet(M)$ in $\FC^\TB=\Omega^\bullet(M)\otimes Y$, cf. \eqref{eq:intro_AKSZ_fields}, \eqref{eq:Omega_pt_Ht_pt}.
}
$\NC \coloneqq \Tan^*(\Tan[1](M \times \RB[1]))$ with coordinates $(\pi_i, p_i, \theta^i, x^i)$ (degrees $-1, 0, 1, 0$) associated to local coordinates $x^i$ on $M$ and $(p_\phi, p_c, \phi, c)$ (degrees $-2, -1, 2, 1$) associated to the coordinate $c$ on $\RB[1]$.
On $\NC$ we have the standard symplectic form
\begin{align}
\label{eq:omega_N}
    \omega_{\NC} &\coloneqq \omega_M + \omega_{\RB[1]} =
    \delta p_i \wedge \delta x^i
    + \delta  \pi_i \wedge \delta  \theta^i
    + \delta p_c \wedge \delta c
    + \delta p_\phi \wedge \delta \phi.
\end{align}
As the target Hamiltonian we take 
\begin{equation}
\label{eq:Theta_N}
    \Theta_{\NC} \coloneqq p_i \theta^i + p_c \phi.
\end{equation}
For the source we choose $\Tan[1]I$ (where $I$ is the unit interval), thus we get the following space of fields:
\begin{align}
  \FC^{\tb} \coloneqq \Map(\Tan[1] I, \Tan^*(\Tan[1](M \times \RB[1]))).
\end{align}
On $\FC^{\tb}$ we have lifted coordinates (the AKSZ superfields):
\begingroup
\allowdisplaybreaks
\begin{align}
  (\widetilde{x}^i)^{(0)} &= (x^i)^{(0)}_{(0)} + (p^{\dagger i})^{(1)}_{(-1)}, &&
  (\widetilde{p}_i)^{(0)} = (p_i)^{(0)}_{(0)} + (x^{\dagger}_i)^{(1)}_{(-1)}, \\
  (\widetilde{\theta}^i)^{(1)} &= (\theta^i)^{(0)}_{(1)} + (\pi^{\dagger i})^{(1)}_{(0)}, &&
  (\widetilde{\pi}_i)^{(-1)} = (\pi_i)^{(0)}_{(-1)} + (\theta^{\dagger}_i)^{(1)}_{(-2)}, \\
  \widetilde{c}^{(1)} &= (c)^{(0)}_{(1)} + (e)^{(1)}_{(0)}, &&
  \widetilde{p}_c^{(-1)} = (p_c)^{(0)}_{(-1)} + (c^{\dagger})^{(1)}_{(-2)}, \\
  \widetilde{\phi}^{(2)} &= (\phi)^{(0)}_{(2)} + (\epsilon)^{(1)}_{(1)}, &&
  \widetilde{p}_\phi^{(-2)} = (p_\phi)^{(0)}_{(-2)} + (\phi^{\dagger})^{(1)}_{(-3)}.
\end{align}
\endgroup
Here in each r.h.s. the upper index is the form-degree and the lower index is the internal degree (ghost number); in each l.h.s., the upper index is the total degree.
We get an AKSZ BV theory $(\FC^\tb, S^\tb, \Omega^\tb, Q^\tb)$, where
\begin{align}
\label{eq:Action_tb}
  S^\tb \coloneqq&~ \int_I \widetilde{p}_i \dR \widetilde{x}^i
  + \widetilde{\pi}_i \dR \widetilde{\theta}^i
  + \widetilde{p}_i \widetilde{\theta}^i
  + \widetilde{p}_c \dR \widetilde{c} + \widetilde{p}_\phi \dR \widetilde{\phi}
  + \widetilde{p}_c \widetilde{\phi} \\
  =&~ \int_I p_i \dR x^i
  + \pi_i \dR \theta^i
  + p_i \pi^{\dagger i}
  + x^\dagger_i \theta^i
  + {p_c} \dR c
  + p_\phi \dR \phi
  + c^\dagger \phi
  + \epsilon {p_c}, \\
  \Omega^\tb \coloneqq&~ \int_I \delta \widetilde{p}_i \delta \widetilde{x}^i
  + \delta \widetilde{\pi}_i \delta \widetilde{\theta}^i
  + \delta \widetilde{p}_c \delta \widetilde{c}
  + \delta \widetilde{p}_\phi \delta \widetilde{\phi},
\end{align}
and $Q^{\tb}$ is the cohomological vector field whose action on superfields is given by
\begin{alignat}{4}
    Q^{\tb} \widetilde{x}^i &= \dR \widetilde{x}^i + \widetilde{\theta}^i, \qquad
    &&Q^{\tb} \widetilde{\theta}^i = \dR \widetilde{\theta}^i, \qquad
    &&Q^{\tb} \widetilde{p}_i = \dR \widetilde{p}_i, \qquad
    &&Q^{\tb} \widetilde{\pi}_i = \dR \widetilde{\pi}_i + \widetilde{p}_i ,\\
    Q^{\tb} \widetilde{c}^i &= \dR \widetilde{c} + \widetilde{\phi}, \qquad
    &&Q^{\tb} \widetilde{\phi} = \dR \widetilde{\phi}, \qquad
    &&Q^{\tb} \widetilde{p}_c = \dR \widetilde{p}_c, \qquad
    &&Q^{\tb} \widetilde{p}_{\phi} = \dR \widetilde{p}_{\phi} + \widetilde{p}_c .
\end{alignat}
It is the Hamiltonian vector field generated by $S^{\tb}$ (modulo boundary terms, cf. \cite{Cattaneo:2012qu}).
\begin{remark}
  Adding $\Tan^*(\Tan[1] \RB[1])$ to the target corresponds to promoting the length modulus of the interval to a field and adding its 
  fermionic counterpart (superpartner) as well as the corresponding ghosts and antifields.
  This is referred to as ``coupling to topological gravity'' in \cite{Frenkel:2006fy}.
  It ensures invariance under time reparametrisations – time being the coordinate of the interval – with extra fields to preserve the $Q$-symmetry.
\end{remark}

The space of fields $\mc{F}^\tb = \mc{F}_{\sigma}\times \mc{F}_\SUGRA$ is the Cartesian product of the ``1d sigma-model sector'' $\FC_\sigma$ with target $\Tan^*(\Tan[1] M)$ and the ``supergravity (SUGRA) sector'' $\FC_\SUGRA$ with target $\Tan^*(\Tan[1] \RB[1])$.
All the structure on $\mc{F}^\tb$ introduced above splits into these two sectors.
\begin{remark}
\label{rem:hbar_equals_1}
    Note that the partition function of theory $\tb$ does \textit{not} have a parameter $\hbar$ (i.e. it is ``strongly quantum'' rather than quasiclassical).
    One reason for this will become clear in \Cref{sec:DequantLie}, where we extend $\tb$ to graphs with special canonical relations assigned to vertices.
    Here introducing a parameter $\hbar$ would correspond to a scaling of a certain prequantum connection which, in general, might not be possible.
    An explicit example where this fails is given in \Cref{subsubsec:Intertwiners_su2}, where scaling the prequantum connection would correspond to scaling simultaneously the spins describing the representations of $\su(2)$.
    
    Similarly, when in \Cref{sec:1D_theory_graphs} we extend $\tb$ to accommodate the target $Y$ of theory $\TB$, we will need to add a term to the action, which generally does not allow scaling by $\hbar$, see \Cref{extending e^(iS^t) by Hol factor}.
\end{remark}
\subsubsection{BV-BFV Analysis}
\label{subsubsec:BVBFV_1D_Sugra}
On a boundary point of $I$ with orientation $\pm$ one gets the exact BFV manifold \\
${(\FC^\partial, \omega^\partial_\pm \coloneqq \delta \alpha^\partial_\pm, Q^\partial)}$, where
\begin{align}
  \FC^\partial =&~  \Tan^*(\Tan[1] (M \times \RB[1])) \ni (\overline{\pi}_i, \overline{p}_i, \overline{\theta}^i, \overline{x}^i, \overline{p}_\phi, \overline{p}_c, \overline{\phi}, \overline{c}), \\
  \alpha^\partial_\pm =&~ \pm (\overline{p}_i \delta \overline{x}^i + \overline{\pi}_i \delta \overline{\theta}^i + \overline{p}_c \delta \overline{c} + \overline{p}_\phi \delta \overline{\phi}), \\
  %
  \intertext{with (degree $1$) BFV action}
  S^\partial_\pm =&~ \pm (\overline{p}_i \overline{\theta}^i + \overline{p}_c \overline{\phi}),
  \intertext{and associated Hamiltonian vector field w.r.t. the symplectic form $\omega^\partial_\pm$ is the cohomological vector field given by
  }
  Q^\partial \overline{x}^i =&~ \overline{\theta}^i, \qquad
  Q^\partial \overline{\theta}^i = 0, \qquad
  Q^\partial \overline{p}_i = 0, \qquad
  Q^\partial \overline{\pi}_i = \overline{p}_i, \\
  Q^\partial \overline{c}^i =&~ \overline{\phi}^i, \qquad
  Q^\partial \overline{\phi}^i = 0, \qquad
  Q^\partial \overline{p}_c = 0, \qquad
  Q^\partial \overline{p}_{\phi} = \overline{p}_c .
\end{align}

The BFV $1$-form $\alpha^\partial$ is compatible with the $\left( \ddelta{}{\overline{p}_i}, \ddelta{}{\overline{\pi}_i}, \ddelta{}{\overline{p}_c}, \ddelta{}{\overline{p}_\phi} \right)$-polarisation (vanishes on its leaves).
In this polarisation the coordinates on the space of leaves of the associated foliation are given by
$\overline{x}^i, \overline{\theta}^i, \overline{c}, \overline{\phi}$.
The BFV boundary action – using the above choice of polarisations – quantises to the coboundary operator
\begin{align}\label{eq:Omega_pt_Ht_pt}
  \Omega^\partial_\pt &= -\mr{i}
  \left( \overline{\theta}^i \ddelta{}{\overline{x}^i} + \overline{\phi} \ddelta{}{\overline{c}} \right),
  \quad \text{acting on} \quad
  \HC^\tb_\pt \coloneqq \Fun(\overline{x}^i, \overline{\theta}^i, \overline{c}, \overline{\phi}) = \Omega^\bullet(M) \otimes \Omega^\bullet(\RB[1]),
\end{align}
which is the (normalised) de Rham operator.
\subsection{Simplification of SUGRA sector}
\label{subsec:Simplification_SUGRA}
\subsubsection{BV-pushforward in the Bulk}
\label{subsubsec:BVpush_SUGRA_bulk}
Choosing the same polarisation on both boundary points, the space of fields for SUGRA fibres over the base $\BC = \Tan[1]\RB[1] \oplus \Tan[1] \RB[1]$ with coordinates $(c_\text{in}, \phi_\text{in}, c_\text{out}, \phi_\text{out})$ and fibre
\begin{align}
  \YC =
    \Omega^\bullet(I, \partial I; \underbrace{\Tan[1]\RB[1]}_{W_1})
  \oplus
    \Omega^\bullet(I; \underbrace{\Tan[-1]\RB[-1]}_{W_2})
  .
\end{align}
Taking (discontinuous) extensions of boundary fields into the bulk \cite[section 2.4]{Cattaneo:2015vsa} we write $\FC_\SUGRA = 
\BC \times \YC$.
We split $\YC = \YC' \oplus \YC''$ using the following strong deformation retraction (SDR):
\begin{center}
\begin{tikzcd}[sep=large]
  \arrow[loop left, distance=3em, start anchor={[yshift=-1ex]west}, end anchor={[yshift=1ex]west}]{}{K}
  (\Omega^\bullet(I, \partial I; W_1) \oplus \Omega^\bullet(I; W_2), \dR)
  \arrow[r, twoheadrightarrow, shift left, "p"] &
  \arrow[l, shift left, hook', "\imath"]
  (\mr{H}^\bullet(I, \partial I; W_1) \oplus \mr{H}^\bullet(I; W_2), 0)
\end{tikzcd}
\end{center}
We use $\mr{H}^\bullet(I, \partial I; W_1) \simeq \langle \dR t \rangle \otimes W_1$ and $\mr{H}^\bullet(I; W_2) \simeq \langle 1 \rangle \otimes W_2$ to define
\begin{align}
  K \colon \alpha \in \Omega^1(I, \partial I; W_1) &\longmapsto \int_0^t \alpha - t \int_0^1 \alpha, &&\beta \in \Omega^1(I; W_2) \longmapsto \int_0^1 t' \beta - \int_t^1 \beta, \\
  p \colon \alpha \in \Omega^1(I, \partial I; W_1) &\longmapsto \dR t \int_0^1 \alpha, &&\beta \in \Omega^1(I; W_2) \longmapsto 0, \\
  f \in \Omega^0(I, \partial I; W_1) &\longmapsto 0, &&g \in \Omega^0(I; W_2) \longmapsto \int_0^1 g(t) \dR t.
\end{align}
Here the inclusion $\imath$ is the trivial inclusion of cohomology elements.
This SDR gives us a splitting of the bulk fields into
\begin{align*}
  \YC &= \underbrace{\left( (\langle \dR t \rangle \otimes W_1) \oplus (\langle 1 \rangle \otimes W_2) \right)}_{\YC' \ni ((\epsilon^I, e^I), ({p_\phi^I}, {p_c^I}))}
  \medoplus \underbrace{\left( \Omega^0(I, \partial I; W_1) \oplus \Omega^0_{I}(I; W_2) \right)}_{\YC''_{\im(K)} \ni ((\phi'', c''), (p_\phi'', {p_c''}))}
  \medoplus \underbrace{\left( \Omega^1_{I}(I, \partial I; W_1) \oplus \Omega^1(I; W_2) \right)}_{\YC''_{\im(d)} \ni ((\epsilon'', e''), (\phi^{\dagger ''}, c^{\dagger ''}))}.
\end{align*}
Here we denote by $\Omega^\bullet_I$ differential forms with vanishing integral over $I$ for $1$-forms and for $0$-forms respectively after multiplication by $\dR t$.
We perform an integration over $\YC''_{\im(K)}$, setting the $\YC''_{\im(\dR)}$ component to zero to define the Lagrangian $\LC_{UV}$:
\begin{align}
  Z_{\SUGRA}[\overline{c}, \overline{\phi}, e^I, \epsilon^I, {p_\phi^I}, {p_c^I}] &= \int_{\LC_{UV}} \DC[c'', \phi'', {p_c''}, p_\phi''] \ \e^{
  \mr{i} S_{\SUGRA}} \\
  &= \e^{
  \mr{i}\left( \epsilon^I {p_c^I} - {p_c^I} (\overline{c}_\mr{out} - \overline{c}_\mr{in}) + {p_\phi^I} (\overline{\phi}_\mr{out} - \overline{\phi}_\mr{in}) \right)}.
\end{align}
$Z_{\SUGRA}$ satisfies the modified Quantum Master Equation
\begin{align}
\label{eq:mQME}\tag{mQME}
  (\Omega_{\SUGRA}^\partial +
  \Delta_{\SUGRA}') Z_{\SUGRA} = 0,
\end{align}
where
\begin{align}
  \Omega_{\SUGRA}^\partial &= -\mr{i}
  \int_{\partial I} \overline{\phi} \ddelta{}{\overline{c}}
  , \qquad \qquad
  \Delta_{\SUGRA}' = \left( \ddelta{}{\epsilon^I} \ddelta{}{p_\phi^I} + \ddelta{}{e^I} \ddelta{}{{p_c^I}} \right).
\end{align}
\subsubsection{Reduction of the Boundary Space of States}
\label{subsubsec:Reduction_SUGRA_Boundary}
To fully reduce the SUGRA sector, we want to reduce the space of boundary states to its cohomology, following \cite[section 7.4]{Cattaneo:2017tef}.
The idea is to pass to the cohomology of
\begin{align}
  \left( \HC_{\SUGRA} = \Fun_\CB(\overline{c}, \overline{\phi}) = \Omega^\bullet(\RB[1])
  , \ \
  \Omega^\partial_{\SUGRA} =
  - \mr{i}\overline{\phi} \dell{}{\overline{c}} \right),
\end{align}
%
If $Z_{\SUGRA}$ were $\Omega^\partial_{\SUGRA}$-closed, we could work with its cohomology class.
However instead, $Z_{\SUGRA}$ satisfies the \eqref{eq:mQME}.
Thus we define a strong deformation retraction
\begin{center}
\begin{tikzcd}[sep=large]
  \arrow[loop left, distance=3em, start anchor={[yshift=-1ex]west}, end anchor={[yshift=1ex]west}]{}{K^\partial} \HC_{\SUGRA} \arrow[r, twoheadrightarrow, shift left, "p^\partial"] & \arrow[l, shift left, hook', "\imath^\partial"]  \mr{H}^\bullet(\HC_{\SUGRA}) = \CB,
\end{tikzcd}
\begin{align}
  p^\partial \colon \psi(c, \phi) &\longmapsto \psi(0, 0), \\
  K^\partial \colon \psi(c, \phi) &\longmapsto c \frac{\psi(c, \phi) - \psi(c, 0)}{\phi}.
\end{align}
\end{center}
$\imath^\partial$ acts as the inclusion of constants.
Using this data, define the modified partition function
%
\begin{align}
    Z^{\Mod}_{\SUGRA} \coloneqq&~ \imath^\partial \circ p^\partial (Z) = Z - \Omega^\partial \circ K^\partial(Z) - K^\partial \circ \underbrace{\Omega^\partial(Z)}_{= -\Delta' Z} \\
    =&~ Z - \left( \Omega^\partial +
    \Delta' \right)(K^\partial(Z)).
\end{align}
%
Here $Z$, $\Omega^\partial$ and $\Delta'$ are short for $Z_\SUGRA$, $\Omega^\partial_\SUGRA$ and $\Delta'_\SUGRA$ respectively.
$Z^{\Mod}_{\SUGRA}$ is $\Omega^\partial_{\SUGRA}$-closed by construction and differs from $Z_{\SUGRA}$ only by a BV-BFV exact term.

We can thus define the \textbf{reduced partition function}
\begin{align}\label{Z^red_SUGRA}
  Z^{\red}_{\SUGRA} \coloneqq&~ p^\partial(Z^\Mod_{\SUGRA}) = [Z^\Mod_{\SUGRA}] = \e^{
  \mr{i}\; \epsilon^I {p_c^I} }.
\end{align}
\subsection{Gauge-fixing}
\label{subsec:Gauge_fixing_mQME}
\subsubsection{Good Gauge-fixing Data}
\label{subsubsec:Good_gauge_fixings}
Let $\Ghat$ be some square-zero operator of degree $-1$ acting on forms on $M$ (a ``gauge-fixing operator'' or ``non-normalised chain homotopy''),\footnote{
  A recent discussion of good gauge-fixing operators can be found in \cite[Section 2.5.1]{Mnev:2025tko}.
}
and define $\Hhat \coloneqq [\dR_M, \Ghat]$, where $\dR_M$ is the de Rham differential on 
$M$.
We assume that $\Hhat$ is diagonalisable (which implies a decomposition into a direct sum $\Omega^\bullet(M) = \ker(\Hhat)\oplus \im(\Hhat)$), with non-negative real part of spectrum.\footnote{
  For renormalisation purposes, one could additionally ask (cf. \cite[Section 3.1]{Costello:2007ei} ) that the operators $\Ghat$ and $\Hhat \coloneqq [\dR_M, \Ghat]$ are elliptic as endomorphisms of $\Omega^\bullet(M)$ and that the latter is a generalised Laplacian \cite{BGV_1992}.
}
We further assume that $\Ghat$ vanishes on $\ker(\Hhat)$.\footnote{
    If additionally $\dR_M$ vanishes on $\ker(\Hhat)$, one has a ``full'' or ``complete'' gauge-fixing.
}

One then has a \textbf{generalised Hodge decomposition}
\begin{align}
  \Omega^\bullet(M) &= \ker(\Hhat) \oplus \underbrace{\left( \im(\Ghat) \oplus \left(\im(\dR_M) \cap \im(\Hhat)\right) \right)}_{\im(\Hhat)}.
\end{align}
%
The propagator on an edge has an integral expression
\begin{align}
\label{eq:CS_Propagator}
   \int_0^\infty \e^{- T \Hhat - \dR T \Ghat}&=
   \begin{cases}
       \Ghat ([\dR_M, \Ghat])^{-1} &\text{on } \im(\Hhat) \\
       0 &\text{on } \ker(\Hhat)
  \end{cases}.
\end{align}

Later on, when assembling the theory $\tb$ on graphs as the first quantisation picture for theory $\TB$, we will have the following changes:
\begin{enumerate}[label=(\alph*)]
    \item
        $\Ghat$ will act on $\FC^\TB = \Omega^\bullet(M) \otimes Y$ instead of $\Omega^\bullet(M)$,
    \item
        we will require that $\Ghat$ is self-adjoint with respect to the pairing (the BV symplectic form $\omega^\TB$) on $\FC^\TB = \Omega^\bullet(M) \otimes Y$.
\end{enumerate}
The latter property guarantees that $\mr{im}(\Ghat) \subset \FC^\TB$ is isotropic, which is necessary to define a valid gauge-fixing of the theory $\TB$.
\subsubsection{Reproducing the Propagator}
\label{subsubsec:Reproduce_prop}
Starting with $\tb$ as a theory defined by action
\begin{align}
    S^\tb =
        \underbrace{\int_I p_i \dR x^i + \pi_i \dR \theta^i + \til{p}_i \til\theta^i}_{S_\sigma}
        + \underbrace{\int_I p_c \dR c+ p_\phi \dR \phi+ \til{p}_c \til\phi}_{S_\SUGRA}
\end{align}
on the space of fields $\mc{F}_\sigma \times \mc{F}_\SUGRA$, we integrate out SUGRA fields (retaining SUGRA zero-modes $\mc{Y}'_\SUGRA = \{(e^I, \epsilon^I, p_c^I, p_\phi^I)\}$) and set boundary values of SUGRA fields $c, \phi$ to zero as in \Cref{subsec:Simplification_SUGRA}.
This yields a theory with the space of fields $\mc{F}_\sigma \times \mc{Y}'_\SUGRA$ and action 
$S_\sigma + S^\mr{eff}_\SUGRA$, with $S^\mr{eff}_\SUGRA = \epsilon^I p_c^I$, cf. \eqref{Z^red_SUGRA}.

Define a gauge-fixing fermion of the form\footnote{
  We obtain here HTQM on an interval (cf. 
  \Cref{subsec:HTQFT}) and the gauge-fixing/reduction process is presented in two steps:
  We first simplify the SUGRA sector and then gauge-fix the rest.
  One can skip the first step and immediately gauge-fix, allowing for more general dependence of the gauge-fixing fermion on the SUGRA fields.
}
\begin{align}
\label{eq:Psi}
  \Psi \coloneqq&~ - \int_I \dR t\, e^I \, \Gcl(x^i, \theta^i, p_i, \pi_i),
\end{align}
with $\Gcl$ some degree $-1$ function on $\Tan^*\Tan[1]M$.
Consider the symplectomorphism $\Phi$ of $\mc{F}_\sigma\times \mc{Y}'_\SUGRA$ given by the flow of the Hamiltonian vector field $\{-,\Psi\}$ in time 1.
We have
\begin{equation}
    \Phi^* (S_\sigma+S^\mr{eff}_\SUGRA)=
    S_\sigma+S^\mr{eff}_\SUGRA -\int_I \dR t\, (e^I \Hcl + \epsilon^I \Gcl),
\end{equation}
where
\begin{equation}
\Hcl \coloneqq Q^\tb(\Gcl)=\left(\theta^i\frac{\partial}{\partial x^i}+p_i \frac{\partial}{\partial \pi_i}\right) \Gcl.
\end{equation}

We consider the space of fields of the 1d sigma-model $\mc{F}_\sigma$ as a trivial fibre bundle over the space of boundary conditions $\mc{B}_\sigma = \{(\overline{x}^i, \overline\theta^i)\}$ with fibre $\mc{Y}_\sigma = \{\til{x}^i_0, \til\theta^i_0, \til{p}_i, \til\pi_i\}$, where the subscript $0$ indicates that the $0$-form component of the respective field vanishes at $t = 0, 1$.

In $\mc{Y}_\sigma$ we have the ``trivial'' Lagrangian subspace $\mc{L}^0_\sigma$ defined by setting the 1-form components of superfields to zero.

Then for the gauge-fixed path integral of theory $\tb$ we have
\begin{multline}
\label{eq:Partition_function_tb_G}
  Z_G^\tb 
  = \int_{\mc{L}^0_\sigma\subset \mc{F}_\sigma} 
  e^{\mr{i} \Phi^*(S_\sigma+S_\SUGRA^\mr{eff})
  } \\
  = \int \DF \left[ x^i_0, \theta^i_0, p_i, \pi_i \right] \exp
    \Bigg(
        \mr{i}\Bigg(
            \epsilon^I{p_c^I} 
            + \int_I p_i \dR x^i_0
            + \pi_i \dR \theta^i_0
            - \dR t\, e^I \Hcl
            - \dR t\, \epsilon^I \Gcl
            + \int_{\partial I} p_i \overline{x}^i
            + \pi_i \overline{\theta}^i
        \Bigg)
    \Bigg) \\
    \in C^\infty(\mc{B}_\sigma\times \mc{Y}'_\SUGRA).
\end{multline}
%
%
%
This partition function is 
the integral kernel for the operator 
\begin{align}
\label{eq:Partitionfunction_coupled_AKSZintegrated}
  Z_G^\tb = \e^{
        - e^I \Hhat
        - \epsilon^I \Ghat
        + \mr{i} \epsilon^I p_c^I 
    } ,
\end{align}
acting on the space of fields at a point (by abuse of notations, we use the same notation for the operator and its integral kernel).
Here $G$ and $H$ are defined as $\mr{i}$ times the canonical quantisation of $\Gcl,\Hcl$: 
%
\begin{align}
\label{eq:Ghat_via_G}
  \Ghat \coloneqq&~
  \mr{i}\Gcl \left( \overline{x}^i, \overline{\theta}^i, \overline{p}_i = - \mr{i}
  \dell{}{\overline{x}^i}, \overline{\pi}_i = - \mr{i}
  \dell{}{\overline{\theta}^i} \right), \\
  \Hhat \coloneqq&~
  \mr{i}\Hcl \left( \overline{x}^i, \overline{\theta}^i, \overline{p}_i = - \mr{i}
  \dell{}{\overline{x}^i}, \overline{\pi}_i = - \mr{i}
  \dell{}{\overline{\theta}^i} \right).
\end{align}
We assume that some ordering is fixed such that $G^2=0$, $G$ is self-adjoint and $H=[\mr{d}_M,G]$.\footnote{
    One can obtain \eqref{eq:Partitionfunction_coupled_AKSZintegrated} by a perturbative evaluation of the path integral \eqref{eq:Partition_function_tb_G}.
    For that, it is convenient to modify the construction by fixing boundary conditions on $x$ and $\theta$ at $t = 1$ and on $p, \pi$ at $t = 0$, i.e., one splits fields into boundary fields (extended discontinuously into the bulk by zero -- we denote this extension by $(\cdots)^\circ$) and bulk fields:
    $x = x_\mr{out}^\circ + \hat{x}$, $\theta = \theta_\mr{out}^\circ + \hat\theta$, $p = p_\mr{in}^\circ + \hat{p}$, $\pi = \pi_\mr{in}^\circ + \hat\pi$  with $\hat{x}|_{t = 1} = \hat\theta|_{t = 1} = \hat{p}|_{t = 0} = \hat\pi|_{t = 0} = 0$.
    Then one does not have zero-modes for the bulk fields $\hat{x}, \hat\theta, \hat{p}, \hat\pi$ and one has propagators $\langle \hat{p}_i(t) \hat{x}^j(t') \rangle = - \mr{i} \delta^j_i \vartheta(t - t')$,
    $\langle \hat{\pi}_i(t) \hat{\theta}^j(t') \rangle = - \mr{i} \delta^j_i \vartheta(t - t')$.
    Then, after performing the perturbative computation, one can change the polarisation at $t = 0$ from ``$p, \pi$ fixed'' to ``$x, \theta$ fixed'' by Fourier transform, cf. \cite[Section 4.4]{Cattaneo:2015vsa}. 
}
\begin{example}[Morse--Witten gauge-fixing]
\label{ex:Morse_Witten_gaugefixing}
  Fix a metric $g$ on $M$ with Christoffel symbols $\Gamma^i_{jk}$ and a Morse function $f$ with gradient vector field $v$ and let $\varepsilon \geq 0$.
  %
  %
  %
  Then we can make the following choice of $\Gcl$:
  %
  \begin{align}\label{eq:Gcl_MorseWitten}
    \Gcl_{\varepsilon}(x^i, \theta^i, p_i, \pi_i) \coloneqq&~
      \pi_i v^i(x) 
      - \mr{i} \varepsilon \left(
        g^{ij} \pi_i p_j
        - 
        g^{ij} \pi_i \Gamma^k_{j l} \theta^l \pi_k
    \right) .
    \end{align}
    Note that the expression in brackets is the symbol of the operator $\dR^*$, see Proposition \ref{prop: d^* in loc coordinates}.
    This choice of $\Gcl$ yields
    \begin{align}
    \Hcl_{\varepsilon}(x^i, \theta^i, p_i, \pi_i) \coloneqq&~
    \left( p_i v^i(x) - \partial_i v^j(x) \pi_j \theta^i \right) \\
    -& \mr{i} \varepsilon\Big(
      g^{ij} p_i p_j
      - 2 g^{ij} \Gamma^k_{jl} p_i \theta^l \pi_k
      + \partial_m (g^{ij} \Gamma^k_{jl}) \theta^m \theta^l \pi_i \pi_k
    \Big),
  \end{align}
  %
  so that (with an appropriate choice of ordering) we have
  \begin{align}\label{eq:MorseWitten_G_H}
    \Ghat_{\varepsilon} &= \imath_v + \varepsilon \dR^*, &&
    \Hhat_{\varepsilon} = \lie{v} + \varepsilon \Delta.
  \end{align}
  %
  %
  This gauge-fixing corresponds to Morse-Witten TQM (cf. \eqref{eq:G_MorseWitten_Ex2.3}, (\ref{eq:H_MorseWitten_Ex2.3}) and \cite{Witten:1982im}, \cite[Section 2]{Frenkel:2006fy}).
  The exponent in \eqref{eq:Partition_function_tb_G} in this case is an action of Mathai--Quillen type on the interval localising in the limit $\varepsilon\ra 0$ on gradient trajectories of the Morse function.
  For $\varepsilon > 0$ one should understand the operator
  \begin{align}
    K_\varepsilon \coloneqq - \int_0^\infty \dR T \ \Ghat_{\varepsilon} \e^{- T \Hhat_{\varepsilon}},
  \end{align}
  as a smeared (or regularised) Morse chain homotopy (for a more detailed discussion of such smearings, see \cite[Section 6.5]{Kontsevich:2000yf} and \cite[Section 2.3]{Chekeres:2021ieg}):
  In the limit $\varepsilon \rightarrow 0$ (pure Morse gauge) 
  $K_\epsilon$ can be seen as an integral operator with \textit{distributional} kernel $\delta_{Y}$, where
  \begin{align}
    Y \coloneqq \bigcup_{t > 0} Y_t \hspace{.5em} \subset M \times M, \qquad Y_t \coloneqq \{ (p, q) \in M \times M \mid p = \mathrm{Flow}_{v, t}(q) \},
  \end{align}
  and $\ker H_\epsilon$ is given by Morse cochains represented by delta-forms on unstable manifolds of critical points of $f$.
  For $\varepsilon \rightarrow \infty$ (Lorenz gauge), it is instead the Hodge chain homotopy, constructed in terms of the heat flow operator on $M$.
  %
  Thus it is natural to see the case $\varepsilon > 0$ as an interpolation between these two cases.
\end{example}
\begin{remark}[Covariant Momentum]
  If $\Gamma$ is non-trivial – to ensure that all coordinates transform tensorially – we should switch from the canonical cotangent coordinate $p_i$ to the covariant coordinate $P_i \coloneqq p_i - \Gamma^{j}_{ik} \theta^k \pi_j$.
  Note that $P_i$ (using the standard, derivatives-to-the-right ordering) quantises to the covariant derivative $- \mr{i} \dell{}{x^i} + \mr{i} \Gamma^{j}_{i k} \theta^k \dell{}{\theta^j}$.
  In terms of $P_i$, $\Gcl$ and $\Hcl$ take the following form:
  \begin{align}\label{eq:Gcl_MorseWittenCovar}
    \Gcl_{\varepsilon}(x^i, \theta^i, P_i, \pi_i) \coloneqq&~
      \pi_i v^i(x) 
      - \mr{i} \varepsilon \left(
        g^{ij}(x) \pi_i P_j
    \right) , \\
    \Hcl_{\varepsilon}(x^i, \theta^i, P_i, \pi_i) \coloneqq&~
    \left( P_i v^i(x) - \nabla_i v^j(x) \pi_j \theta^i \right) \\
    -& \mr{i} \varepsilon\Big(
      g^{ij}(x) P_i P_j
      + \frac12 g^{ij}(x) \tensor{R}{^{m}_{jkl}} \theta^k \theta^l \pi_i \pi_m
    \Big),
  \end{align}
\end{remark}
\subsubsection{HTQM and mQME}
\label{subsubsec:HTQFT_mQME}
The boundary operator and Laplacian for our gauge-fixed theory are
\begin{align}
\label{eq:ResidualBoundaryOperator}
    \Omega^\partial &\coloneqq -\mr{i}
    \overline{\theta}^i \dell{}{\overline{x}^i},
    \qquad \qquad
    \Delta = \dell{}{\epsilon^I} \dell{}{{p_\phi^I}} + \dell{}{e^I} \dell{}{{p_c^I}}.
\end{align}
Comparing with \eqref{eq:Partitionfunction_coupled_AKSZintegrated} we can see that
\begin{align}
\label{eq:qME_fullyReducedTheory}
  \left( - \mr{i}
  \Delta -
  \mr{i} \Omega^\partial \right) Z_G
  =
  -\left( \epsilon^I \dell{}{e^I} + [\dR_M, -] \right) Z_G = 0.
\end{align}
Here we denote 
$\mr{i} \Omega^\partial = [\dR_M, -]$ using $\Fun(x^i, \theta^i) \simeq \Omega^\bullet(M)$.
In other words, the \eqref{eq:mQME} implies the \eqref{eq:HTQM_closed} equation of HTQM.
%
%
%
\section{Vertex Structure}
\label{sec:DequantLie}
In order to construct the theory $\tb$ on graphs as a classical field theory (or to describe the path integral for $\tb$), we need to know what it assigns to vertices of graphs.
In \Cref{subsec:Generalisation_Linfty} we propose an ansatz for a classical object $\nu$ that is mapped by geometric quantisation to the cyclic $\LI$-algebra structure on $Y$ (the AKSZ target of theory $\TB$) arising from the Taylor expansion of $\Theta_Y$.\footnote{
    Later, in Section 5, when putting together the theory $\tb$, $Y$ will be replaced by $\Omega^\bullet(M) \otimes Y$ and $\nu$ will be extended appropriately.
}
We will call such $\nu$ a ``dequantisation'' of $Y$.
Then in \Cref{subsec:Lie_bracket_from_Lagrangian} we discuss examples corresponding to $Y$ being a Lie algebra (with the case of $\su(2)$ discussed in detail).

The paradigm proposed in this section is to be taken as tentative and possibly missing details.
A desired dictionary of geometric quantisation, translating objects of symplectic geometry to quantum objects and underlying the following considerations, can be summed up in the following table (see e.g. \cite{Bates:1997kc}):
\setlength{\arrayrulewidth}{0.3mm}
\setlength{\tabcolsep}{5pt}
\renewcommand{\arraystretch}{1.5}
\begin{center}
\begin{tabular}{|p{11cm}|p{3cm}|}
  \hline
    \textbf{Symplectic}
    & \textbf{Quantum} \\
  \hline
  \hline
    Symplectic manifold $(\Phi, \omega)$ ($+$ prequantum line bundle $\LC$, polarisation $\PC$)
    & Hilbert space $\HC$ \\
  \hline
    $\sqcup, \times,  \overline{(-)}$
    & $\oplus, \otimes, (-)^*$ \\
  \hline
    Bohr--Sommerfeld Lagrangian $L \subset \Phi$ 
    (+ horizontal section $s$ of $\mc{L}|_L$)
    & State $\psi_{L} \in \HC$ \\
  \hline
\end{tabular}
\end{center}
\begin{remark}[Ground field: $\RB$ or $\CB$]
\label{rem: ground field R or C}
    Throughout the paper, $Y$, $\gf$, $\FC^\TB$ can be considered  as (graded) vector spaces either over $\RB$ or over $\CB$. If they are considered as vector spaces over $\RB$, then we should be saying that $\nu$ quantises to the \emph{complexification} of $Y$, that the space of states of $\tb$ for a point is the \emph{complexification} of $\FC^\TB$ etc. (We will be sloppy though and not spell out this complexification step.)
    
    Alternatively, one may consider $Y$, $\gf$, $\FC^\TB$ as complex vector spaces. Then the complexification step is not needed, but when writing the path integral in the theory $\TB$, one should choose a real contour of integration.
\end{remark}
\subsection{Cyclic \texorpdfstring{$\LI$}{L∞}-algebras in Weinstein's Symplectic Category – a Tentative Definition}
\label{subsec:Generalisation_Linfty}
A. Weinstein \cite{weinstein2009symplecticcategories} proposed the definition of a category\footnote{
    The composition is defined under certain transversality assumption, so it is a category with partially defined composition.
}
with objects being symplectic manifolds $(\Phi, \omega)$ and 
morphisms $L \colon \Phi_1 \rel{} \Phi_2$ being canonical relations – Lagrangian submanifolds $L \subset \overline{\Phi}_1 \times \Phi_2$.
The overline denotes the symplectic dual (changing the sign of symplectic structure).
Canonical relations are composed as set-theoretic relations:
For two relations $\Phi_1 \rel{L_I} \Phi_2$, $\Phi_2 \rel{L_{II}} \Phi_3$, the composition is $\Phi_1 \rel{L} \Phi_3$ with
\begin{align*}
  L = L_{II} \circ L_I \coloneqq \{ (x,z) \in \overline{\Phi}_1 \times \Phi_3 \; | \; \exists y \in \Phi_2 \; \mr{s.t.} \; (x,y) \in L_I, \; (y,z) \in L_{II} \}.
\end{align*}
This category has monoidal structure (Cartesian products), with monoidal unit $\mr{pt}$ (the point), and has duals (symplectic duals).

We consider a variant of this category, which we denote $\mtt{Symp}^+$, where an object is a symplectic manifold equipped with a prequantum line bundle and a polarisation (we suppress the extra data in notation) and a morphism $L \colon \Phi_1 \rel{} \Phi_2$ is 
a quantisable (Bohr--Sommerfeld) Lagrangian in $\overline{\Phi}_1 \times \Phi_2$ equipped with a horizontal section of the prequantum line bundle over it (in particular, morphisms $\Phi_1 \rel{} \Phi_2$ form a $\CB$-vector space).\footnote{
    We are suppressing details related to the metaplectic/half-form correction (except in footnote \ref{footnote: half-forms} and Remark \ref{rem: quantizing Lagrangians in a real fibrationg polarization} below).
}\footnote{
    Monoidal structure and duals extend to $\mtt{Symp}^+$ in an obvious way. E.g., Cartesian product of symplectic manifolds is accompanied by the external tensor product of the prequantum line bundles. Symplectic dualisation of a symplectic manifold is accompanied by linear dualisation of the prequantum line bundle.
}

Changes of polarisation can be considered as a special class of automorphisms in $\mr{Symp}^+$. 

We allow the prequantum line bundle to be graded (concentrated in one degree).
For $\Phi$ an object, $\Phi^{[k]}$ will mean $\Phi$ with degree of the prequantum line bundle shifted down by $k$.\footnote{
  The idea is that if the geometric quantisation of $\Phi$ is a vector space $\mc{H}$, then the geometric quantisation of $\Phi^{[k]}$ – the space of $\mc{P}$-horizontal sections of the shifted prequantum bundle –  is $\mc{H}[k]$ – the original vector space shifted down by $k$.
}
In the first approximation these shifts can be ignored.

Geometric quantisation can be thought of as a monoidal functor
\begin{align}
\label{eq:Geom_quant_functor}
    \QC\colon \mtt{Symp}^+ \lra \mtt{grVect}_{\CB}
\end{align}
from $\mtt{Symp}^+$ to the the category of graded vector spaces over $\CB$;
$\QC$ is $\CB$-linear on morphisms.
The functor $\QC$ maps an object $\Phi$ -- a symplectic manifold with extra data -- to a vector space $\HC$, the ``space of states''.
It maps a morphism $L \colon \Phi_1 \rel{} \Phi_2$ to a linear map $\psi \colon \HC_1 \ra \HC_2$ – the ``state'' or ``evolution operator'' or ``partition function''.
\begin{remark}
    We sketch briefly how $\QC$ acts on objects and morphisms of $\mr{Symp}^+$ and refer to  \cite{Bates:1997kc}, \cite{woodhouse1992geometric} for details. 

    A symplectic manifold $(\Phi,\omega)$ equipped with prequantum line bundle $\LC$, with unitary connection $\nabla_\LC$ of curvature $-i\omega$, and equipped with a polarisation -- a Lagrangian subbundle $\PC\subset T_\CB\Phi$ -- is mapped by $\QC$ to the space of polarised ($\PC$-horizontal) sections of $\LC$
    \begin{equation} 
        \mc{H} \coloneqq \Gamma_{\PC\textrm{-hor}}(\Phi,\LC) 
    \end{equation}
    -- the space of sections of $\LC$ annihilated by covariant derivatives in the direction of vectors in $\PC$.
    The $L^2$ inner product on polarised sections is given by
    \begin{equation}\label{inner product on H}
        \langle \psi_1,\psi_2 \rangle \coloneqq \int_\Phi \mu \overline{\psi}_1\psi_2
    \end{equation}
    with $\mu \coloneqq \frac{\omega^{\wedge l}}{l!}$ the symplectic volume form, $l \coloneqq \frac12\dim \Phi$.\footnote{
        \label{footnote: half-forms}
        Alternatively, one defines states as polarised sections of $\LC$ tensored with the line bundle of half-forms on $\Phi$ (associated with a choice of metaplectic structure) -- in this approach the factor $\mu$ in (\ref{inner product on H} is not needed and the inner product is given by a canonical integration of a density on $\Phi$.
    }
    
    Suppose we are given a Lagrangian submanifold $L\subset \Phi$ and a $\nabla_\LC$-horizontal section $s$ of the prequantum line bundle $\LC$ restricted to $L$ (the existence of such a nonvanishing section implies the Bohr--Sommerfeld condition that $\nabla_\LC|_L$ has trivial holonomy).\footnote{
        In the literature (see e.g. \cite{Bates:1997kc}), it is common not to impose $\nabla_\LC$-horizontality condition on the section $s$ when defining the geometric quantisation of a pair $(L,s)$ -- and then one also does not need the Bohr--Sommerfeld condition on $L$.
        We choose to impose $\nabla_\LC$-horizontality so that for $L$ connected there is a unique up to normalisation choice of a $\nabla_\LC$-horizontal section $s$.
    } Then $\QC$ maps $(L,s)$ to the state
    \begin{equation}
       \psi_L= \sum_k \psi_k \int_\Phi \mu \overline{\psi}_k \til{s}, 
    \end{equation}
    where $\{\psi_k\}$ is an orthonormal basis in the space of states and $\til{s}=i_*s$ is the distributional section of $\LC$ on $\Phi$ supported on $L$ obtained as the pushforward of $s$ along the inclusion $i\colon L\hookrightarrow \Phi$. 
    
    Put another way:
    the space of states $\mc{H}$ is a subspace in the space $\Gamma(\Phi,\LC)$ of non-polarised sections, equipped with inner product given by the formula (\ref{inner product on H}).
    Denote the orthogonal projection from $\Gamma(\Phi,\LC)$ to $\mc{H}$ by $\PB$.
    Then the quantisation of $(L,s)$ is $\PB(\til{s})$, where $\til{s}$ is understood as a distributional element in $\Gamma(\Phi,\LC)$.
\end{remark}
\begin{remark}[Quantisation of Lagrangians in a Real Fibrating Polarisation]
\label{rem: quantizing Lagrangians in a real fibrationg polarization}
  Following \cite[section 7]{Bates:1997kc} one can think of the ``quantisation'' of Lagrangian submanifolds as follows:
  
  Let $\Phi$ be some metaplectic manifold with prequantum line bundle $\LC$.
  Then – given a real fibrating polarisation $\PC$ – to a pair $(L, s)$, where $L$ is a Lagrangian submanifold of $\Phi$ intersecting the leaves of $\PC$ transversally and in at most one point and $s \in \Gamma(L, \LC \otimes |\Lambda|^{\frac12}{L})$, one can associate a section $\widetilde{s} \in \Gamma(\Phi,\LC \otimes \Lambda^{ - \frac12} \PC)$.
  This section is covariantly constant on each leaf and vanishes on leaves which $L$ does not intersect.
  In this sense, it is a \textit{distributional} element of the Hilbert space associated to $\PC$.
  The pair $(L, s)$ is called a \textbf{semi-classical state}.
\end{remark}
\begin{definition}[A Tentative Definition]
\label{def:Min_cyc_Linfty_Symp}
  Fix $N \in \NB$ (the ``degree'').
  A \textbf{minimal}\footnote{
    I.e. without a unary differential.
  } \textbf{cyclic $\LI$-algebra in the category $\mtt{Symp}^+$} (or ``$\cLIsymp$-algebra'' for short) is the following collection of data:
  \begin{enumerate}[label=(\roman*)]
      \item
        An object $\nu\in \mr{Ob}(\mtt{Symp}^+)$.
      \item
        A morphism $\rho \colon \nu \times \nu \rel{} \mr{pt}^{[N-1]}$, or equivalently $\rho \colon \nu \rel{} \overline \nu^{[N-1]}$.
      \item
        A sequence of morphisms $L_n \colon \nu^{\times n} \rel{} \mr{pt}^{[N]}$ with $n\geq 3$.
  \end{enumerate}
  This data is subject to the following axioms:
  \begin{itemize}
    \item
      $\rho$ and $L_n$ are symmetric w.r.t. permutation of $\nu$ factors.
    \item
      $\LI$-algebra relations are satisfied:
      \begin{align}
      \label{eq:Linfty_rel_Symp}
        \sum_{\sigma \in \mathrm{S}_n} \sum_{r+s = n} \frac{1}{r! s!} L_{r+1,1} \circ (\id^{\times r} \times L_{s,1}) \circ \sigma^{\text{rel}} = 0
        , \qquad n \in \NB.
      \end{align}
      %
      Here $L_{p,q} \colon \nu^{\times q} \rel{} \nu^{[N-q(N-1)]}$ is the composition
      $\nu^{\times p} \rel{L_{p+q}} (\overline{\nu}^{\times q})^{[N]} \rel{(\rho^{-1})^{\times q}} \nu^{[N-q(N-1)]} $.
      Also, $\sigma^{\text{rel}} \colon \nu^{\times n} \rel{} \nu^{\times n}$ denotes the canonical relation permuting inputs according to $\sigma$.
  \end{itemize}
  Relations \eqref{eq:Linfty_rel_Symp} and symmetry properties of $\rho$ and $L_n$ are satisfied modulo the kernel of the geometric quantisation functor \eqref{eq:Geom_quant_functor}.
\end{definition}

Geometric quantisation $\QC$ maps $\nu$ to a graded vector space $Y$ equipped with cyclic $\LI$-algebra structure given by the pairing $\pairing_Y = \QC(\rho) \in \mr{Hom}_{N-1}(Y \otimes Y, \CB)$ and cyclic operations $c_n = \QC(L_n) \in \mr{Hom}_N(Y^{\otimes n}, \CB)$.

We will eventually want $Y$ to be the target of theory $\TB$ and $c_n$ to be the Taylor coefficients of $\Theta_Y$, cf. \eqref{eq:ST_AKSZ_action}.
\begin{remark}
\label{rem:non_minimal_cLinftysymp_algebra}
  One can extend \Cref{def:Min_cyc_Linfty_Symp} to a non-minimal cyclic $\LI$-algebra in $\mtt{Symp}^+$ as follows.
  We want $\nu$ to be a \emph{differential graded} symplectic manifold (equipped with prequantum line bundle and polarisation), with a degree $+1$ cohomological vector field $q$
  – the Hamiltonian vector field of a function $\vartheta \in \Ci(\nu)_{1}$ –
  and also equipped with canonical relations $\rho$
  and $L_n$ as in \Cref{def:Min_cyc_Linfty_Symp}.
  We again impose $\LI$-relations \eqref{eq:Linfty_rel_Symp} where $L_{1,1}$ is understood as
  \begin{equation}
      L_{1,1} = \left. \frac{\dR}{\dR \alpha} \right|_{\alpha=0} \mr{graph}(\mr{Flow}_\alpha(q)) \colon \nu \rel{} \nu^{[1]}.
  \end{equation}
  Here $\alpha$ is an infinitesimal parameter of degree $-1$.
  We also require that $\rho$ is compatible with dg manifold structure on $\nu$. 

  Quantisation of this structure is the cochain complex $\QC(\nu)$ with differential $Q = l_1 = \QC(\vartheta)$ and a sequence of operations $l_n = \QC(L_{n, 1})$, $n \geq 2$, such that the full sequence $Q = l_1, l_2, l_3, \ldots$ satisfies $\LI$-algebra relations and is compatible with the pairing $\pairing = \QC(\rho)$.
  I.e., the quantisation yields a non-minimal cyclic $\LI$-algebra with cyclic operations $c_{n + 1} = \langle -, l_n(-, \ldots, -) \rangle$, $n \geq 1$.
\end{remark}
\begin{example}[Dequantisation of $\gl_N$]
\label{ex:dequantisation_of_glN}
  Consider the Lie algebra $\gl_N 
  $ with $N \geq 1$.
  It can be seen as a quantisation of the zero-dimensional symplectic manifold $\nu = \bigsqcup_{a, b = 1}^N \mr{pt}_{ab}$ – a collection of $N^2$ points labelled by pairs $(a, b)$, with zero symplectic structure\footnote{
    Indeed, the quantisation of a point is $\CB$.
    The quantisation of a collection of points labelled by pairs $(a, b)$ is the Cartesian product $\prod_{a, b} \CB_{ab}$ of copies of $\CB$ labelled by pairs $(a, b)$.
    As a vector space, it is obviously isomorphic to $\gl(N, \CB)$.
} – equipped with the following structure:
  \begin{itemize}
      \item
        $\rho \colon \mr{pt}_{ab} \mapsto \mr{pt}_{ba}$ and
      \item
        $L_3 \coloneqq \sum\limits_{a, b, c = 1}^N \left( \pt_{ab} \times \pt_{bc} \times \pt_{ca} - \pt_{ba} \times \pt_{cb} \times \pt_{ac} \right) $ – a chain of Lagrangians in $\overline{\nu}^{\times 3}$.
  \end{itemize}
\end{example}
\begin{remark}
  The ideas of this section are tentative in nature and making them precise is deferred to future work.
  Nevertheless, the authors think it might be beneficial to compare to the works of Biran--Cornea \cite{biran2012lagrangiancobordismi, biran2018lagrangiancobordismfukayacategories} and Wehrheim--Woodward \cite{Wehrheim_2010}, in particular \cite[section 1.2]{biran2018lagrangiancobordismfukayacategories} and \cite[section 6.1, 6.3]{Wehrheim_2010}.
\end{remark}
\subsection{
Example: Dequantisation of
Lie Algebras via Coadjoint Orbits
}
\label{subsec:Lie_bracket_from_Lagrangian}
%
Let $\gf$ be a simple compact 
Lie algebra.
From Kirillov's orbit method \cite{Kirillov_1962, Kirillov:1999qgl, Hashimoto:1991pf} we find that the ``dequantisation'' of (in the sense of a classical object whose geometric quantisation is) one of its irreducible representations is given by some coadjoint orbit $\OC$.\footnote{
  For reducible representations one will have to allow for $\OC$ to be a disjoint union of coadjoint orbits.
  In particular for semi-simple Lie algebras the adjoint representation might be reducible.
}

We would like to find an object whose geometric quantisation corresponds to an intertwiner of representations, e.g. the Lie bracket.
Let $R_1, R_2, R_3$ be three representations of $\gf$, with corresponding coadjoint orbits $\OC_i$.
The triple product
\begin{align}
  \overline{\OC}_1 \times \overline{\OC}_2 \times \OC_3,
\end{align}
corresponds via geometric quantisation to $(R_1)^* \otimes (R_2)^* \otimes R_3 = \Hom(R_1 \otimes R_2, R_3)$.
Thus it is a natural place to look for Lagrangian subspaces that quantise to intertwiners.

Take the Lie bracket, seen as an element of $\Hom_G(\gf \otimes \gf, \gf)$, where by $\gf$ we also denote the adjoint representation.
The following theorem tells us that, for finite-dimensional 
simple Lie algebras, any element quantising to a non-trivial element of $\Hom_G(\gf \otimes \gf, \gf)$ should correspond (up to normalisation) to the Lie bracket:
\begin{theorem}[{\cite{King_Wybourne_1996}}]
\label{theo:Complex_Jacobi_unique}
  Let $\gf$ be a 
  simple compact Lie algebra.
  The space ${\Hom_G(\gf \otimes \gf, \gf)}$ of intertwiners is $1$-dimensional and spanned by the Lie bracket except for the cases $
  {\su(N\geq 3)}
  $, where it is $2$-dimensional, with the additional map being
  \begin{align}
    (\alpha, \beta) &\longmapsto \alpha \beta + \beta \alpha - \frac{1}{N} \tr
    (\alpha \beta + \beta \alpha) \mathds{1}_N.
  \end{align}
\end{theorem}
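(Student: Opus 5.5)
The plan is to identify $\Hom_G(\gf\otimes\gf,\gf)$, over $\CB$, with the space of invariants $(\gf^*\otimes\gf^*\otimes\gf)^G$. The Killing form identifies $\gf\cong\gf^*$, so this is $(\gf^{\otimes 3})^G$, which is the multiplicity of the adjoint representation in $\gf\otimes\gf$. We then split this multiplicity by symmetry type, using $\gf\otimes\gf = S^2\gf\oplus\Lambda^2\gf$. An intertwiner $\gf\otimes\gf\to\gf$ is therefore the sum of an antisymmetric part, a $G$-map $\Lambda^2\gf\to\gf$, and a symmetric part, a $G$-map $S^2\gf\to\gf$. Equivalently, via the invariant pairing, these correspond to invariant trilinear forms on $\gf$ that are totally antisymmetric, respectively totally symmetric. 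Total (anti)symmetry follows from the cyclicity that the invariant pairing imposes on the trilinear form $\langle[\alpha,\beta],\gamma\rangle$-type tensors.

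\emph{Antisymmetric part.} An invariant totally antisymmetric trilinear form on $\gf$ is an element of $(\Lambda^3\gf^*)^G \cong H^3(\gf)$, where the isomorphism holds by Chevalley--Eilenberg for compact $\gf$, since invariant cochains are cocycles. For $\gf$ simple, $H^3(\gf)$ is one-dimensional and spanned by the Cartan 3-form $\langle[\alpha,\beta],\gamma\rangle$. One sees this from the fact that $H^\bullet(\gf)$ is an exterior algebra on primitive generators of degrees $2d_i-1$, with exactly one generator of degree $3$ (coming from $d_1=2$, the quadratic Casimir, which is unique for simple $\gf$). Hence the antisymmetric intertwiners are spanned by the Lie bracket.

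\emph{Symmetric part.} Here we use Chevalley restriction: $(S^3\gf^*)^G \cong S^3(\mathfrak{h}^*)^W$, so its dimension equals the number of basic invariants of degree $3$ of the Weyl group. Going through the Cartan--Killing classification and the degrees of the basic invariants ($A_{n}$: $2,\dots,n+1$; $B_n,C_n$: $2,4,\dots,2n$; $D_n$: $2,4,\dots,2n-2,n$; $E_6$: $2,5,6,8,9,12$; $E_7,E_8,F_4,G_2$: all even or missing degree $3$), degree $3$ occurs exactly for $A_{n}$ with $n\geq 2$, i.e.\ $\su(N\ge3)$. It also occurs for $D_3$, but $D_3\cong A_3$, so this is not a new case. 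In that case there is exactly one such invariant, $\tr(\alpha\beta\gamma+\beta\alpha\gamma)$. Dualising it through the trace pairing and projecting to traceless matrices gives the stated map $\alpha\beta+\beta\alpha-\frac1N\tr(\alpha\beta+\beta\alpha)\mathds{1}_N$. One should check that this map is nonzero and $G$-equivariant, which is immediate from conjugation invariance of the trace.

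\emph{Main obstacle.} The key input is the classification of degrees of basic Weyl invariants, together with the cohomology statement $\dim H^3(\gf)=1$. I would cite these as standard facts rather than reprove them, so the argument amounts to careful bookkeeping of the case list, including the low-rank isomorphisms.
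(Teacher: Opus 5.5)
Your computations of the two pure-symmetry pieces are sound: $(\Lambda^3\gf^*)^G\cong H^3(\gf)$ is one-dimensional, and Chevalley restriction with the table of degrees shows $(S^3\gf^*)^G$ is one-dimensional exactly for $A_{n\geq 2}$ and zero otherwise. The gap is in reducing the problem to these two spaces.

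A $G$-map $\Lambda^2\gf\to\gf$ corresponds to an invariant trilinear form that is antisymmetric only in its first two slots. That is an invariant in $\Lambda^2\gf^*\otimes\gf^*\cong\Lambda^3\gf^*\oplus\mathbb{S}_{(2,1)}\gf^*$. Likewise, a $G$-map $S^2\gf\to\gf$ gives an invariant in $S^2\gf^*\otimes\gf^*\cong S^3\gf^*\oplus\mathbb{S}_{(2,1)}\gf^*$. Hence
\[
\dim\Hom_G(\gf\otimes\gf,\gf)=\dim(\Lambda^3\gf^*)^G+\dim(S^3\gf^*)^G+2\dim(\mathbb{S}_{(2,1)}\gf^*)^G .
\]
You discard the last term by appealing to the cyclicity of $\langle[\alpha,\beta],\gamma\rangle$, but that is circular. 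Cyclicity is a property of the bracket, coming from ad-invariance of the pairing. For an arbitrary intertwiner $f$, the form $\langle f(\alpha,\beta),\gamma\rangle$ has no a priori reason to be cyclic. As written, your argument proves only the lower bound: dimension at least $1$, and at least $2$ for $\su(N\geq 3)$. It does not give the equality.

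To close the gap, compute the total multiplicity of $\gf$ in $\gf\otimes\gf$ independently; this also shows a posteriori that $(\mathbb{S}_{(2,1)}\gf^*)^G=0$. One clean route is the PRV/Kostant formula: the multiplicity of $V_\nu$ in $V_\lambda\otimes V_\mu$ equals $\dim\{v\in V_\mu[\nu-\lambda] : e_i^{\langle\lambda,\alpha_i^\vee\rangle+1}v=0 \text{ for all } i\}$.

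Take $\lambda=\mu=\nu=\theta$, the highest root. The relevant weight space is then $\mathfrak{h}$. Since $e_i\cdot h=-\alpha_i(h)e_i$ and $e_i^2\cdot h=0$, the only nontrivial conditions are $\alpha_i(h)=0$ for the simple roots with $\langle\theta,\alpha_i^\vee\rangle=0$. So the multiplicity is the number of simple roots not orthogonal to $\theta$, i.e. the number of neighbours of the affine node in the extended Dynkin diagram. This is $2$ for $A_{n\geq 2}$ and $1$ in every other case.

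The bracket and, for $\su(N\geq 3)$, your symmetric map are nonzero and of opposite symmetry type. They are therefore linearly independent and span, which gives the theorem. For comparison, the paper gives no proof of its own and simply cites King--Wybourne.
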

\noindent Inspired by this, we make two key observations:
\begin{enumerate}[label=\Roman*)]
  \item
    \textbf{A $G$-invariant Lagrangian is expected to quantise to an intertwiner of representations:}
    If $\OC_i$ are the coadjoint orbits of $\gf$ quantising to representations $R_i$, $i=1,2,3$, then it is natural to expect that the object quantising to an intertwiner -- a $G$-invariant map $R_1\otimes R_2\ra R_3$ -- is a \emph{$G$-invariant} Lagrangian ${L} \subset \overline{\OC}_1 \times \overline{\OC}_2 \times \OC_3$. 
    Moreover, it is natural to expect such $L$ to be contained in $\mu^{-1}(0)$, where
    \begin{align}\label{moment map}
        \mu \colon \overline{\OC}_1 \times \overline{\OC}_2 \times \OC_3 &\longhookright \gf^* \\
        (a, b, c) &\longmapsto c - (a + b),
    \end{align}
    is the sum of moment maps associated to the product of coadjoint orbits.\footnote{
      Recall that the moment map for a single coadjoint orbit is the inclusion $\OC \hookright \gf^*$.
    }
    Then $L$ can be pushed forward along the Marsden--Weinstein symplectic reduction to a Lagrangian 
    $\underline{L}\colon=L/G\subset \mu^{-1}(0)/G$.
    %
    By the ``$[\QC, \RC]$'' (quantisation commutes with reduction) paradigm of Guillemin--Sternberg \cite{Guillemin1982}, the space $\mu^{-1}(0) / G$ quantises to the space of intertwiners $\Hom_G(R_1 \otimes R_2,R_3)$ and $L$ or $\underline{L}$ quantises to a particular intertwiner:
    \begin{center}
    \begin{tikzcd}[sep=normal]
        L\subset \overline{\OC}_1\times \overline{\OC}_2\times \OC_3 \arrow[d, "\textrm{symp. red}"'] \arrow[r, "\QC"]
        & \QC(L) \arrow[d, phantom, sloped, "="] \arrow[r, phantom, "\in"]
        & \mr{Hom}(R_1\otimes R_2,R_3) \arrow[d, phantom, sloped, "\supset"] \\
        \underline{L} \subset \mu^{-1}(0) / G \arrow[r, "\QC"] &
        \QC(\underline{L}) \arrow[r, phantom, "\in"] &\mr{Hom}_G(R_1\otimes R_2,R_3)
    \end{tikzcd}
    \end{center}
    %
    %
    %
  \item
    \textbf{The Lagrangian corresponding to the Lie bracket on $\gf$ is expected (for a class of Lie algebras $\gf$) to be $\mu^{-1}(0)$:}
    Consider the case when $\OC_1 = \OC_2 = \OC_3 = \OC$ is the coadjoint orbit corresponding to the adjoint representation and let $\gf$ be a simple 
    Lie algebra with 
    \begin{equation}\label{g neq sl(geq 3)}
     \gf \neq \su(N \geq 3).
     \end{equation}
    Then, by \Cref{theo:Complex_Jacobi_unique}, the space of intertwiners $\Hom_G(\gf \otimes \gf, \gf)$ is 1-dimensional and is spanned by the Lie bracket. As the natural object, quantising to $\CB$ is the point, we are lead to the following.
    \begin{conjecture}\label{conj: O^3//G=pt}
        For $\gf$ a simple Lie algebra satisfying (\ref{g neq sl(geq 3)}) and $O$ the coadjoint orbit corresponding to the adjoint representation, the space $\mu^{-1}(0)/G$ (the symplectic reduction of $\overline\OC \times \overline\OC \times \OC$ as a Hamiltonian $G$-space) is a point.
    \end{conjecture}
    %
    %
    As an immediate corollary, there is a unique Lagrangian $\underline{L}=\mr{pt} \subset \mu^{-1}(0) / G$, which is the symplectic reduction of the Lagrangian $L = \mu^{-1}(0)$.
    For instance $\gf = \su(2)$ is an example, see below for details.

    For more a general $\gf$, 
    one needs to find an appropriate $\underline{L} \subset \mu^{-1}(0) / G$ quantising to the Jacobi bracket.
    Then the relevant $L$ is the preimage of $\underline{L}$ under the quotient map $\mu^{-1}(0) \ra \mu^{-1}(0) / G$.

    We refer to Appendix \ref{Appendix: dimension counts for Conjecture} for extra evidence for Conjecture \ref{conj: O^3//G=pt} from non-positivity of the virtual dimension of the symplectic reduction.
    
\end{enumerate}

\begin{example}[Dequantisation of the Killing Form]
    Let $\gf$ be a simple Lie algebra, seen as a quadratic Lie algebra, equipped with an invariant pairing given by the Killing form $\pairing_\g$, and let $\OC$ be the coadjoint orbit which quantises to the adjoint representation of $\gf$. Then the ``dequantisation'' of $\gf$ consists of $\OC$ equipped with a $G$-invariant Lagrangian $L_3 \subset \overline{\OC}^{\times 3}$ quantising to (the cyclic version of) the Lie bracket on $\g$ and the $G$-invariant Lagrangian $\rho\subset \overline{\OC}^{\times 2}$ given by pairs of antipodal points $\{(x,-x)\,|\,x\in \OC\}$ quantising to the Killing form $\pairing_\g$.\footnote{
        Note that for $\gf$ simple, this $\rho$ is the unique $G$-invariant Lagrangian in $\overline\OC^{\times 2}$.
    }
    Equivalently, $\rho$ is the kernel of the moment map $\overline\OC\times \overline\OC\ra \gf^*$.
\end{example}
\begin{example}[Double of a Lie Algebra]
    \label{example: dequantization of g+g^*}
    Consider a Lie algebra $\gf$ whose adjoint representation is the quantisation of a coadjoint orbit $\OC$, with $L_\g\subset \overline{\OC}\times \overline{\OC}\times\OC$ the Lagrangian quantising to the Lie bracket. Consider the semidirect sum of $\gf$ and its coadjoint module, $\mathfrak{h}=\gf\oplus \gf^*$.
    Then $\mathfrak{h}$ can be seen as quantisation of the disjoint union of the orbit and its symplectic dual $\nu_\mathfrak{h}=\underbrace{\OC}_{\nu_A}\sqcup \underbrace{\overline\OC}_{\nu_B}$ equipped with $L_\mathfrak{h}\subset \overline{\nu}_A\times \overline{\nu}_A\times \overline{\nu}_B\subset \overline\nu_\mathfrak{h}^{\times 3}$ which corresponds to $L_\gf$ under the identification $\overline{\nu}_A\times \overline{\nu}_A\times \overline{\nu}_B\cong \overline{\OC}\times \overline{\OC}\times\OC$.
    The datum $\rho$ is given by
    \begin{align*}
        \rho=\underbrace{\Delta}_{\subset\; \overline\nu_B\times \overline\nu_A\cong\OC\times \overline\OC}\sqcup \underbrace{\overline\Delta}_{\subset\; \overline\nu_A\times \overline\nu_B\cong\overline\OC\times \OC}\subset \overline\nu_\mathfrak{h}\times \overline\nu_\mathfrak{h}
    \end{align*}
    where $\Delta,\overline\Delta$ are the diagonals in $\OC\times \overline\OC$ and $\overline\OC\times \OC$, respectively.
    This $\rho$ quantises to the pairing on $\mathfrak{h}$ given by the canonical pairing of $\gf$ and $\gf^*$.
    
    This example trivially extends to the case when the coadjoint module $\g^*$ is replaced by its shifted copy $\g^*[k]$ for any $k$; then $\nu_B$ should be shifted by $k$ as well.
    
    This example is pertinent 
    when considering $\TB$ a $BF$ theory as an AKSZ theory with target $\Tan^*[m-1](\gf[1]) = \gf[1] \oplus \gf^*[m-2] 
    $, where $m$ is the dimension of the source $M$, see Example \ref{example: BF via coad orbits} below. 
    %
\end{example}
\subsubsection{Example: Intertwiners for \texorpdfstring{$\su(2)$}{su(2)}}
\label{subsubsec:Intertwiners_su2}
Quantisable coadjoint orbits of $\su(2)$ can be labelled by $j \geq 0$ and are spheres of radius $\frac{j}{2}$.
Since $\mu^{-1}(0)$ is coisotropic and of dimension $3$, it is itself a Lagrangian submanifold.
Further, since $\mu^{-1}(0) / \SU(2)$ is just a point, it is the unique $G$-invariant Lagrangian in the triple product.
Denote this Lagrangian by $L_{\mathrm{W}}$ (\textit{Wigner Lagrangian}\footnote{
  A $6$-dimensional relative of this manifold (as 3-torus bundle over $\SO(3)$ and an orbit of an $\SU(2) \times T^3$ action on the Schwinger phase space $(\CB^2)^{\times 3}$) has been considered before in \cite{Aquilanti:2007oga} in the context of semi-classical formulations of the Wigner $3j$-symbols and dubbed the \textit{Wigner manifold}.
}).

A basis for the Hilbert space $\HC_j$ associated to $\OC_j$ (equipped with holomorphic polarisation) is given for $m \in [-\frac{j}{2}, \frac{j}{2}] \cap \ZB$ by
\begin{align}
  \ket{j, m} \coloneqq \sqrt{\frac{(2j)!}{(j+m)!(j-m)!}} z^{j + m},
\end{align}

In this basis, the unique (up to scaling) state in $\HC_{j_1} \otimes \HC_{j_2} \otimes \HC_{j_3}$ that is $\SU(2)$-invariant and a $0$-eigenstate of the total angular momentum is:
\begin{align}
\label{eq:Invariant_state_Linv}
  \ket{L_{\mathrm{W}}} &\coloneqq \sum_{m_1, m_2, m_3}
  \begin{pmatrix}
    j_1 & j_2 & j_3 \\
    m_1 & m_2 & m_3
  \end{pmatrix}
  \ket{j_1, m_1} \ket{j_2, m_2} \ket{j_3, m_3} \\
  &= C_{\{j_i\}} (z_1 - z_2)^{j_1 + j_2 - j_3} (z_2 - z_3)^{j_2 + j_3 - j_1}  (z_3 - z_1)^{j_3 + j_1 - j_2}.
\end{align}
Here
$\left(
  \begin{smallmatrix}
    j_1 & j_2 & j_3 \\
    m_1 & m_2 & m_3
  \end{smallmatrix}
\right)$
denotes the Wigner $3j$-symbols and $C_{\{j_i\}}$ is a normalising constant associated to the $j_i$.

Above we use that $\su(2)$ is self-dual, hence intertwiners of representations $R_{j_1} \otimes R_{j_2} \otimes R_{j_3}$ and the $1$-dimensional (spin $0$) representation are equivalent to intertwiners $R_{j_1} \otimes R_{j_2} \rightarrow R_{j_3}$.

Note that $\ket{L_{\mathrm{W}}}$ gives the following integral formula for the Wigner $3j$-symbols using the inner product on the product Hilbert space:
\begin{align}
  \frac{N(\{j_i\}, \{m_i\})}{C_{\{j_i\}} \pi^3} \int_{\CB^3} \dR {\bf{z}} \ \prod_{i=1}^3 \left( \frac{1}{(1 + z_i \overline{z_i})^{2 j_i + 2}} \right) \ \overline{z_1}^{j_1 + m_1} \overline{z_2}^{j_2 + m_2} \overline{z_3}^{j_3 + m_3} \ket{L_{\mathrm{W}}}.
\end{align}
Here
\begin{align}
    N(\{j_i\}, \{m_i\}) \coloneqq \prod_{i = 1}^3 \sqrt{\frac{(2 j_i)!}{(j_i + m_i)! (j_i - m_i)!}} .
\end{align}
\begin{example}\label{example: L_W for su(2)}
  Let $j_1 = j_2 = j_3 = 1$ and denote by $\OC \coloneqq \OC_1$ the coadjoint orbit corresponding to the adjoint representation.
  There is an immediate geometric interpretation of the Lagrangian
  \begin{align}
    L_{\mathrm{W}} \coloneqq \mu^{-1}(0) = \{ (x, y, z) \in \OC^{\times 3} \mid x + y + z = 0 \in \gf^* \} \subset \OC^{\times 3}.
  \end{align}
  Namely $L_{\mathrm{W}}$ consists of triples of points on the sphere that span an equilateral triangle with centre at the origin of the sphere.
  We give a depiction of such a triple/triangle in \Cref{fig:Wigner_su2}.
\end{example}
{\begin{figure}[H]
    \centering
    \includegraphics[width=.4\columnwidth]{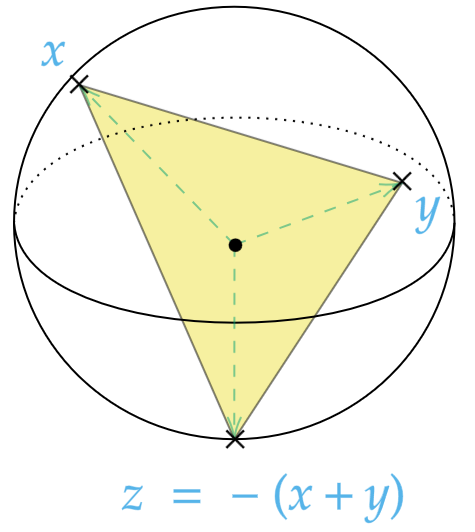}
    \caption{
        A triple of points in $\OC$ spanning an equilateral triangle through the origin.
        The space of such triples/triangles defines $L_{\mathrm{W}}$.
    }
    \label{fig:Wigner_su2}
\end{figure}}
\noindent In this case we can also explicitly write (up to scaling)
\begin{align}
  \ket{L_{\mathrm{W}}} &=
    (z_1 - z_2) (z_2 - z_3) (z_3 - z_1)
    = - z_1^2 z_2 + z_1^2 z_3 - z_2^2 z_3 + z_1 z_2^2 - z_1 z_3^2 + z_2 z_3^2 \\
    &= - \ket{1, 0, -1} + \ket{1, -1, 0} - \ket{-1, 1, 0} + \ket{0, 1, -1} - \ket{0, -1, 1} + \ket{-1, 0, 1}.
\end{align}

The last line is written abstractly in the weight basis of the triple tensor product.
It immediately shows that in real polarisation, where one uses Bohr--Sommerfeld leaves obtained as level-sets of the height function on a sphere, the Lagrangian enforces $m_1 + m_2 + m_3 = 0$, where $m_i \in \{-1, 0, 1\}$ denotes the height of the $i$-th leaf.
\section{A \texorpdfstring{$1$}{1}-dimensional Theory on Graphs}
\label{sec:1D_theory_graphs}
In this section we apply the ideas of the previous text to construct a theory $\tb$ whose partition function on a graph recovers the weight of the corresponding Feynman graph of the theory $\TB$.
We give two presentations of such a construction:

{\bf Hamiltonian/operator formalism presentation:} In \Cref{subsec:t_as_TQM_on_graphs} we construct from a given AKSZ theory $\TB$ a topological quantum mechanics $\tb$ on metric graphs whose amplitudes recover exactly the Feynman weights of $\TB$.

{\bf Lagrangian/path integral presentation:} In \Cref{subsec:PathIntegral_formulation_tb} we give a path integral formulation of the 1d AKSZ theory $\tb$, first on an interval, then on a graph.

As examples, we discuss Chern-Simons and $BF$ theories in different gauges (cf. \Cref{subsubsec:tb_as_TQM_examples}, \Cref{subsubsec:CS_FukayaMorseWitten}).
%
%
%
\subsection{Theory \texorpdfstring{$\tb$}{t} as Topological Quantum Mechanics on Metric Graphs}
\label{subsec:t_as_TQM_on_graphs}
Consider an AKSZ theory $\TB$ on an $m$-manifold $M$ 
(\ref{eq:intro_AKSZ_fields}), (\ref{eq:ST_AKSZ_action}), with space of fields $\FC^\TB = \Omega^\bullet(M) \otimes Y$, with target $Y$ equipped with a degree $m$ Hamiltonian $\Theta_Y$.
Taylor expansion of $\Theta_Y = \sum_{n \geq 3} \frac{1}{n!} c^Y_n$ equips $Y$ with the structure of a cyclic $\LI$-algebra with operations $c^Y_n$.

We associate to $\TB$ a topological quantum mechanics $\tb$ on metric graphs, as in Section \Cref{subsec:HTQFT_graphs}, with the following data:
\begin{itemize}
    \item
        The space of states $\tb$ assigns to a point is $V = \Omega^\bullet(M; Y) = \FC^\TB$ -- it coincides with the space of AKSZ fields of $\TB$.
    \item
        The differential $Q$ on $V$ is the de Rham differential on $M$: $Q = \dR$. 
    %
    \item
        The pairing on $V$ is $\pairing = \int_M \pairing_Y$ -- the BV symplectic form $\omega^\TB$ of theory $\TB$. 
    \item
        The second (gauge-fixing) differential $G$ on $V$ can be chosen in different ways, see examples below.
        We require it to be self-adjoint w.r.t. $\pairing$.\footnote{
            So that the theory $\tb$ is defined on non-oriented graphs.
        }
    \item
        ``Vertex data'': A $\qcLi$-algebra structure on $V$ with operations $c_n^0 = c_n^Y \otimes \int_M - \wedge \cdots \wedge -$ (i.e., $c_n^0(A, \ldots, A) = \int_M c_n^Y(A, \ldots, A)$ with $A \in \Omega^\bullet(M) \otimes Y$ the AKSZ superfield), with $n \geq 3$.
        The operations with nonzero loop number $c^{\geq 1}_\bullet$ are all set to zero.\footnote{
            They are the would-be counterterms, but one does not have counterterms in an AKSZ theory.
        }
        Put another way, $V$ as a $\mr{cL}_\infty$-algebra is the tensor product of the $\mr{cL}_\infty$-algebra $Y$ and the dg Frobenius algebra $\Omega^\bullet(M)$.
\end{itemize}
Then as in \Cref{subsec:HTQFT_graphs} we have pre-amplitudes of $\tb$ on a graph $\Gamma$, 
\begin{align*}
    \PA^\infty_\tb(\Gamma) \in \Omega^\bullet(\RB_+^{\mr{IE}(\Gamma)}) \otimes \mr{Hom}(\ker(H)^{\otimes n},\CB),
\end{align*}
for graphs $\Gamma$ with $n$ leaves (all treated as incoming).
Integrating a pre-amplitude over $\RB_+^{\mr{IE}(\Gamma)}$ we obtain the weight of $\Gamma$ as a Feynman graph in the AKSZ theory $\TB$:
\begin{equation}
    \int_{\RB_+^{\mr{IE}(\Gamma)}} \PA^\infty_\tb(\Gamma)(\underbrace{a,\ldots,a}_n) = F_\Gamma(a),
\end{equation}
with $F_\Gamma(a)$ defined as in (\ref{eq:Feynman_weight_definition}).
Put another way, if $(\PA^\infty_\tb)^k_n$ is the pre-amplitude as a distributional form on the moduli space of metric graphs with $n$ leaves and loop number $k$ (cf. \Cref{rem:Preamplitudes_InvarDef}), we have the amplitude
\begin{equation}
    \frac{1}{n!} (\mathsf{A}_\tb)^k_n = \int_{\mc{MG}^k_n} (\PA^\infty_\tb)^k_n,
\end{equation}
cf. (\ref{amplitude as integral over MG}).
Then for the effective action of theory $\TB$ we have
\begin{equation}
    S^\TB_\mr{eff}(a) = \sum_{n,k\geq 0} \frac{\hbar^k}{n!} (\mathsf{A}_\tb)^k_n(\underbrace{a,\ldots,a}_n). 
\end{equation}
Thus, Feynman diagram expansion of the effective action of theory $\TB$ is recovered in terms of amplitudes of theory $\tb$.
From this viewpoint, the quantum master equation for the effective action $S^\TB_\mr{eff}$ is a consequence of the fact that the amplitudes $\mathsf{A}_\tb$ arrange into a $\qcLi$-algebra structure on $\ker(H)$ (\Cref{theo:HomotopyTransfer_MG}).
\subsubsection{Examples}
\label{subsubsec:tb_as_TQM_examples}
\begin{example}[Chern--Simons Theory]\label{ex:CS_Lorenz_gauge}
    Take $\TB$ to be Chern--Simons theory on a 3-manifold $M$. Here the target is $Y = \gf[1]$, with $\gf$ a Lie algebra with a non-degenerate invariant pairing.
    In this case $\Theta_Y = \frac16 \langle c, [c,c] \rangle$, the cyclic $\LI$ structure on $Y$ has only one operation $c_3^Y = \langle -, [-,-]\rangle$ and Feynman graphs $\Gamma$ have to be trivalent (with possible leaves).
    
    For $G$ one can choose $G_\mr{Lorenz} = \dR^*$ (with corresponding $H = \Delta$ and $\ker(H)$ given by $\gf$-valued harmonic forms), which corresponds to Chern--Simons theory in Lorenz gauge.

    Note that one cannot deform $G$ in this example by a contraction with a vector field as in  (\ref{eq:MorseWitten_G_H}) as this deformation would spoil self-adjointness of $G$.
\end{example}

\begin{example}[\texorpdfstring{$\gl_N$}{gl(N)}-valued Chern--Simons Theory]
\label{ex:glN_CS_in_MQ_gaug_sec51}
    Take $\TB$ to be Chern--Simons theory with structure Lie algebra $\gf = \gl_N$, $N \geq 1$ and fix a collection of functions $f_a$ on $M$ with $a = 1, \ldots, N$, such that differences $f_a - f_b$ are Morse functions for $a\neq b$ with $v_{ab}$ the corresponding gradient vector field. 
    Choose $G$ to be the $N \times N$ matrix of operators on differential forms, with the following entries:
    \begin{equation}
        G_{ab} = \imath_{v_{ab}} + \varepsilon \dR^*.
    \end{equation}
    Then, $H$ is an $N \times N$ matrix of operators with entries $H_{ab} = \Lie{v_{ab}} + \varepsilon \Delta$.
    In this case, for $\varepsilon > 0$, $\ker(H)$ is the space of matrices of twisted harmonic forms (representatives of cohomology of $M$ valued in $\mathfrak{gl}_N$). 
    
    The limit $\varepsilon = 0$ poses certain analytic issues (since $H$ is not a generalised Laplacian in this case).
    With an appropriate regularisation, $\ker(H)$ consists of $N \times N$ matrices with arbitrary differential forms on the diagonal and Morse cochains\footnote{
        Represented by delta-forms supported on unstable manifolds of critical points.
    } of $f_a - f_b$ for off-diagonal entries, $a \neq b$. 
    
    In the limit $\varepsilon = 0$ one has an appealing intersection-theoretic interpretation of the values of Feynman graphs:
    One can interpret the value of a Feynman graph in terms of 
    the count of points of the moduli space of a graph of gradient trajectories, which is cut out by a certain intersection (where the potential issue is transversality of the intersection), see \cite{Fukaya:1996mp}, \cite{Chekeres:2021ieg}.
    For instance, in \cite{Fukaya:1996mp}, it is proven that the value of the Theta-graph with this gauge-fixing (putting three different gradient vector fields on the three edges) is well-defined, i.e., is given by a 
    well-defined (transversal) intersection problem.
    
    Feynman graphs in this example can be thought of as ribbon graphs with borders coloured by labels in $\{ 1,\ldots, N \}$, so that each edge is coloured by a pair of labels $(a, b)$.\footnote{
        The tree sector of this model with $\varepsilon = 0$ was considered in \cite{Chekeres:2021ieg} and gives a repackaging for Fukaya-Morse $A_\infty$ category. The loop enhancement of Fukaya-Morse $A_\infty$-category is work in progress.
    }

    We call the gauge-fixing in this example the Fukaya-Morse-Witten gauge, cf. \cite{Fukaya:1996mp}, \cite{Witten:1982im}.
\end{example}

\begin{example}[\texorpdfstring{$BF$}{BF} Theory]\label{ex:BF_theory_sec51}
    Take $\TB$ to be $BF$ theory on an $m$-manifold $M$.
    In this case the AKSZ target is $Y = \g[1]\oplus \g^*[m-2] \ni (c, \beta)$, with canonical pairing between the summands and with Hamiltonian $\Theta_Y = \frac12 \langle \beta, [c,c] \rangle$.
    
    Put differently, $BF$ theory is Chern--Simons theory with coefficients in a graded Lie algebra $\mathfrak{h} = \gf \oplus \gf^*[m-3]$, with Lie algebra structure given by the semidirect sum of $\gf$ with its (shifted) coadjoint module and quadratic structure given by canonical pairing of $\gf$ with the coadjoint module.
    
    Choose $f$ a Morse function on $M$ with gradient vector field $v$. Then we can choose $G$ to be given by the block matrix of operators on forms
    \begin{equation}\label{eq:G_BF}
        G = \left(
        \begin{array}{cc}
             \imath_{v} + \varepsilon \dR^* & 0  \\
             0                              & -\imath_{v} + \varepsilon \dR^*
        \end{array}
        \right)
    \end{equation}
    with the block structure coming from the decomposition of $Y$ into two summands. The corresponding TQM Hamiltonian is
    \begin{equation}
        H = \left(
        \begin{array}{cc}
              \Lie{v} + \varepsilon \Delta & 0  \\
             0                          & - \Lie{v} + \varepsilon \Delta
        \end{array}
        \right)
    \end{equation}
    Its kernel for $\varepsilon > 0$ consists of pairs of twisted harmonic forms with coefficients in $\gf$ and $\gf^*$, and is isomorphic to cohomology of $M$ with values in $\mathfrak{h}$.
\end{example}
\subsection{Path integral (Lagrangian) description for the theory \texorpdfstring{$\tb$}{t}}
\label{subsec:PathIntegral_formulation_tb}
\subsubsection{Extending the target of theory \texorpdfstring{$\tb$}{t} by a \texorpdfstring{$\cLIsymp$}{cL∞symp}-algebra \texorpdfstring{$\nu$}{ν}}
\label{subsubsec:Extending_target_tb}
Let $(\nu, \rho, \{L_n\})$ be a $\cLIsymp$-algebra (\Cref{def:Min_cyc_Linfty_Symp}) corresponding by geometric quantisation to the target $Y$ of the AKSZ theory $\TB$ with cyclic $\LI$ structure arising from the Taylor expansion of $\Theta_Y = \sum_{n\geq 3} \frac{1}{n!} c_n^Y$.

Consider the 1d AKSZ theory $\tb$ of \Cref{sec:1D_AKSZ} on an interval and extend the target $\NC = \Tan^*(\Tan[1](M \times \RB[1]))$ by the extra factor $\nu$:
\begin{equation}
\label{eq:N_full}
    \NC \lra \NC^\mr{full} = \NC \times \nu.
\end{equation}
I.e., now we consider the theory with space of fields
\begin{equation}
\label{eq:space_of_fields_tb_interval}
    \Map(\Tan[1] I, 
        \underbrace{\Tan^*(\Tan[1](M \times \RB[1])) \times \nu}_{
            \NC^\mr{full}
        }
    ).
\end{equation}
with target Hamiltonian (\ref{eq:Theta_N}) as before (i.e. we choose the trivial Hamiltonian $\Theta_\nu = 0$ for the new factor $\nu$). 
We extend the symplectic form (\ref{eq:omega_N}) by the symplectic form on $\nu$: $\omega_\NC \mapsto \omega_\NC+\omega_\nu$. 

Denote the new AKSZ superfield associated to the factor $\nu$ in the target by $\widetilde{\xi} = \xi + \xi^\dagger \in \Map(\Tan[1] I, \nu)$, with $\xi \colon I \ra \nu$ the 0-form component (a path in $\nu$).
If the prequantum line bundle $\lie{\nu}$ over $\nu$ is trivial, the prequantum connection $\nabla_\nu$ in $\lie{\nu}$ has a global connection 1-form $\alpha_\nu$, and then we extend the action $S^\tb$ as
\begin{equation}
    S^\tb \lra S^\tb + \int_I \xi^* \alpha_\nu.
\end{equation}

However, typically $\lie{\nu}$ is non-trivial (such as in the case of $\nu$ being a coadjoint orbit).
Then, as a generalisation of the formula above, we extend the exponential of the AKSZ action (\ref{eq:Action_tb}) of theory $\tb$ by the parallel transport of $\nabla_\nu$ along the image path of the source interval $I = [0,1]$ in $\nu$:
\begin{equation}\label{extending e^(iS^t) by Hol factor}
    e^{\mr{i} S^\tb} \lra e^{\mr{i} S^\tb} \mr{Hol}_{\nabla_\nu}(\xi(I)) \quad \in\lie{\nu}|_{\xi(1)} \otimes \lie{\nu}^*|_{\xi(0)},
\end{equation}
cf. \cite[Section 7.2]{Mnev:2012qd}.

Let $\NC^\mr{red} = \Tan^* \Tan[1] M \times \nu$ be the AKSZ target (\ref{eq:N_full}) with supergravity factor removed (or: supergravity sector reduced).
We equip $\NC^\mr{red}$ with the structure of a non-minimal $\cLIsymp$-algebra of degree $N = 0$ (cf. \Cref{rem:non_minimal_cLinftysymp_algebra}), with the following data:
\begin{itemize}
    \item
        $\vartheta = p_i \theta^i$ the target Hamiltonian (\ref{eq:Theta_N}) without the supergravity term.
    \item
        $\rho\colon \NC^\mr{red} \rel{} (\overline{\NC^{\mr{red}}})^{[-1]}$ is given by the changing sign of the cotangent coordinates in $\Tan^* \Tan[1]M$ and by $\rho_\nu$ in $\nu$.
    \item
        Lagrangians 
        \begin{equation}\label{eq:Ln_in_Nred}
            \LB_n \coloneqq \CN \mr{Diag}
            \times L_n \quad \subset (\overline{\NC^\mr{red}})^{\times n},
        \end{equation}
        for $n\geq 3$.
        Here, $\mr{Diag}$ is the image of the diagonal embedding $\Tan[1]M \hookrightarrow (\Tan[1]M)^{\times n}$, $\CN$ is the conormal bundle, $L_n \colon \nu^{\times n} \rel{} \mr{pt}^{[m]}$ are the $\cLIsymp$ operations on $\nu$.
\end{itemize}
Geometric quantisation $\QC$ acts on the data of $\NC^\mr{red}$ as follows:
\begin{itemize}
    \item
        $\QC$ maps $\NC^\mr{red}$ to $\Omega^\bullet(M) \otimes Y = \FC^\TB$ – the space of fields of theory $\TB$  (\ref{eq:intro_AKSZ_fields}).
    \item
        $\QC$ maps $\vartheta$ to de Rham operator on $M$.
    \item
        $\QC$ maps $\rho$ to the BV 2-form $\int_M 
        \langle \delta A, \delta A \rangle_Y$ on $\FC^\TB$.
        Here $A\in \Omega^\bullet(M)\otimes Y$ is the AKSZ superfield.
    \item
        $\QC$ maps $\LB_n$ to $\int_M 
        c_n^Y(A, \ldots, A)\colon (\FC^\TB)^{\otimes n} \ra \CB$ – the $n$-th term in the Taylor expansion of the interaction term in the AKSZ action (\ref{eq:ST_AKSZ_action}).\footnote{
            Note that the geometric quantisation of the conormal bundle $\mr{N}^*C \subset \Tan^* X$, for $C \subset X$ a submanifold, is $\delta_C$ – the delta function supported on $C$, as a distribution on $X$.
            By that token the quantisation of $\mr{N}^*\mr{Diag}$ is $\delta_\mr{Diag} \colon \alpha_1 \otimes \cdots \otimes \alpha_n \mapsto \int_{(\Tan[1]M)^{\times n}} \mu^\mr{can}_{\Tan[1] M^{\times n}} (\prod_i \pi_i^* \alpha_i) \cdot \delta_\mr{Diag} = \int_{\Tan[1] M} \mu^\mr{can}_{\Tan[1]M} \alpha_1 \cdots \alpha_n = \int_M \alpha_1 \wedge \cdots \wedge \alpha_n$ for $\alpha_1, \ldots, \alpha_n \in \Ci(\Tan[1]M) \cong \Omega^\bullet(M)$.
            Here $\mu^\mr{can}_{\Tan[1]M}$ is the canonical Berezinian on $\Tan[1]M$, such that the integral of $\alpha \in \Ci(\Tan[1]M)$ against it is the integral of (the top component of) $\alpha$ interpreted as a form on $M$ over $M$.
            By this mechanism, the term $\mr{N}^*\mr{Diag}$ in (\ref{eq:Ln_in_Nred}) results in the integral over $M$ of a polynomial expression in the superfield evaluated on the diagonal.
        }
\end{itemize}
As an additional gauge-fixing datum, we fix a function $G^\mr{cl} \in \Ci(\NC^\mr{red})_{-1}$ quantising to the gauge-fixing differential $G$ on $\FC^\TB$.
\subsubsection{Path Integral of Theory \texorpdfstring{$\tb$}{t} on an Interval (an edge of a graph)}
\label{subsubsec:PathIntegral_formulation_tb_Interval}
In order to construct the path integral of $\tb$ on a graph $\Gamma$, we first consider edges of $\Gamma$:
On an interval $I$, $\tb$ is a 1d AKSZ theory we studied in \Cref{sec:1D_AKSZ}, with space of fields 
\begin{align}
\label{eq:Fields_tb_on_Interval}
    \FC^\tb_I = \mr{Map}(\Tan[1]I, \underbrace{\Tan^* \Tan[1]M \times \nu \times \Tan^* \Tan[1] \RB[1]}_{\NC^\mr{full}}).
\end{align}
As in \Cref{subsec:Simplification_SUGRA}, we integrate out the 1d supergravity fields with boundary conditions $c|_{\partial I} = \phi|_{\partial I} = 0$;\footnote{
    These boundary conditions on $c, \phi$ correspond to the reduction of the SUGRA sector of the space of states, cf. \Cref{subsubsec:Reduction_SUGRA_Boundary}.
}
the result is a function on the space of bulk SUGRA zero-modes 
$\YC'_\SUGRA = \Tan^*[-1] \Tan[1] \RB = \{(e^I, \epsilon^I, p_c^I, p_\phi^I)\}$.

We then use the gauge-fixing determined by the fermion $\Psi = - \int_I \dR t \, e^I G^\mr{cl}$ (\ref{eq:Psi}) to integrate out the remaining fields as in (\ref{eq:Partition_function_tb_G}) and (\ref{eq:Partitionfunction_coupled_AKSZintegrated}), eventually reproducing the propagator of theory $\TB$:
\begin{align*}
    K^\TB = \int_{\Tan[1]\RB_+} \dR e^I \dR \epsilon^I \, \underbrace{e^{- e^I H - \epsilon^I G}}_{Z^\tb_I(e^I,\epsilon^I|p_c^I = p_\phi^I = 0)}\quad \in \mr{End}_{-1}(\FC^\TB).
\end{align*}
This the integral of the partition function of $\tb$ on the interval, seen as a function of SUGRA residual fields, integrated over a Lagrangian 
\begin{equation}
\label{eq:L_SUGRA}
    \LC^\SUGRA = \mr{N}^*(\Tan[1] \RB_+) \subset \YC'_\SUGRA
\end{equation}
given by $e^I > 0$, $\epsilon^I$ any, $p_c^I = p_\varphi^I = 0$.\footnote{
    Note that we are naturally led to the integral over a chain \emph{with boundary} in $\Tan[1]\RB$. 
    This yields boundary contributions from the points where $e^I = 0$, the zero-length limit of an edge.
    These boundary terms contribute to the quadratic structure equations of the effective theory (see the proof of \Cref{theo:HomotopyTransfer_MG}).
    An example of this for the Fukaya--Morse category enhanced by intersection conditions is discussed in \cite[Section 4.2]{Chekeres:2021ieg}.
}
Under the identification $e^I = T$, $\epsilon^I = \dR T$, the BV integral over SUGRA residual fields is the same as the integral over moduli of metric intervals in TQM, in the formalism of \Cref{subsec:HTQFT_graphs}.

For external edges (``leaves'') we want to choose a different gauge-fixing Lagrangian for the SUGRA sector:
\begin{equation}
    \LC^\SUGRA_\mr{external} = \{(e^I = + \infty, \, \epsilon^I \text{ any}, p_c^I \text{ any}, \; p_\phi^I = 0)\} = \mr{N}^* (\{ + \infty \} \times \RB[1])
\end{equation}
Then we have
\begin{equation}
    \int_{\LC^\SUGRA_\mr{external}} e^{- e^I H - \epsilon^I G + \mr{i} \epsilon^I p_c^I } = P_{\ker(H)},
\end{equation}
the projector onto $\ker(H)$.
\subsubsection{Path integral of theory \texorpdfstring{$\tb$}{t} on a graph}
\label{subsubsec:Path_integral_tb_graph}
On a graph $\Gamma$, for the space of fields of theory $\tb$ we take the fibre product of the spaces \eqref{eq:Fields_tb_on_Interval} for edges over Lagrangians $\LB_n$ \eqref{eq:Ln_in_Nred} for vertices, with $n \in \NB$ the valence of the vertex (we also set supergravity fields $c, \phi$ to zero at vertices).
Thus, Lagrangians $\LB_n$ provide the sewing conditions for fields of theory $\tb$ at a vertex.
I.e., we have
\begin{equation}
    \FC^\tb_\Gamma = \{\psi \in \Map(\Tan[1]\Gamma, \NC^\mr{full}) \mid \forall v \in \mr{V}(\Gamma), \; \{\psi_e(v)\}_{e \; \text{edge incident to } v} \in \LB^\mr{full}_{\mr{val}(v)}
    \}.
\end{equation}
Here $\psi = (\til{x}^i, \til\theta^i, \til{p}_i, \til\pi_i; \til\xi; \til{c}, \til\phi, \til{p}_c, \til{p}_\phi)$ is the collective notation for all superfields and $\LB_n^\mr{full} \coloneqq \LB_n \times \mr{N}^* (\{0\}\subset (\Tan[1]\RB[1])^{\times n})$.
The last factor of $\LB_n^\mr{full}$ imposes the condition that $c,\phi$ vanish at vertices.

On leaves we impose asymptotic/boundary conditions, fixing the values of $x^i,\theta^i, \xi_\mc{B}$ (and setting $c,\phi$ to zero), where we imagine that $\nu$ is equipped with a real polarisation with $\xi_\mc{B}$ the coordinate on the space of Lagrangian leaves.

The action of $\tb$ on $\Gamma$ is the sum of actions on edges: $S^\tb_\Gamma = \sum_{e \in \mr{E}(\Gamma)} S^\tb_e$.

Integrating out the fields while retaining the SUGRA residual fields for each internal edge, restricting to the Lagrangian 
\begin{equation}
    \LC^\SUGRA_\mr{internal} =
    \mr{N}^* \Big(
        (\Tan[1] \RB_+)^{\mr{IE}(\Gamma)}
    \Big),
\end{equation}
we obtain the 
pre-amplitude $\PA^\infty_\tb(\Gamma)(a)$:\footnote{
    Note that technically we are recovering the preamplitudes of the effective theory on the full space of fields, i.e. as a function of $A$ instead of $a$.
    However, following \Cref{rem:Weights_full_vs_residual} we neglect this difference for the sake of simplicity and work with residual fields instead.
}
%
\begin{equation}
    \PA_\tb^\infty(\Gamma)(a) = \int_{\prod_\mr{leaves} \LC_\mr{external}^\SUGRA} 
    \int_{\LC_\sigma^0\subset \FC_\sigma} e^{\mr{i} \Phi^*((S^\tb_\Gamma)_\sigma+(S^\tb_\Gamma)^\mr{eff}_\mr{SUGRA})}
    \Big|_{\LC^\SUGRA_\mr{internal}}
\end{equation}
with $\Phi$ the symplectomorphism as in Section \ref{subsubsec:Reproduce_prop}.
Then, integrating over $\LC^\SUGRA_\mr{internal}$, we obtain the value $F_\Gamma(a)$ of $\Gamma$ as a Feynman graph in theory $\TB$.

Summing over connected graphs, we recover the effective action of theory $\TB$:
\begin{equation}
    S^\TB_\mr{eff}(a) = \sum_\Gamma \frac{\hbar^{l(\Gamma)}}{|\Aut(\Gamma)|} \int_{\LC^\SUGRA_\mr{internal}} \PA_\tb^\infty(\Gamma)(a).
\end{equation}
Here $l(\Gamma)$ is the loop number and the integral is over the moduli space of geometric data (via BV-pushforward).

In particular if $\ker(\Hhat) \simeq \mr{H}^\bullet(M)$, the tree-level part of this expansion (order $\hbar^0$) recovers a representation of the Massey operations on $M$.
\begin{remark}
  The factor $\frac{1}{| \Aut(\Gamma) |}$ stems from global graph automorphisms.
  It appears when passing from the ``stacky'' integral $\int_{\Geom(\Gamma) / \sim}$, where we mod by local diffeomorphisms (diffeomorphisms of an edge relative to its endpoints) and global diffeomorphisms (graph automorphisms), to a sum over graphs.
  
  One might also ask where the weight system $\hbar^{l(\Gamma)}$ stems from.
  One explanation is a scaling ambiguity of the partition function of the $1$-dimensional theory.
  It leads to the weight system when compatibility with gluing/cutting is imposed.
  This is a toy version of a similar effect for $2D$ Yang--Mills \cite[Section 6]{Cordes:1994fc}.
\end{remark}

\subsubsection{Examples: Chern--Simons and \texorpdfstring{$BF$}{BF} theory via coadjoint orbits, \texorpdfstring{$\mathfrak{gl}(N)$}{gl(N)}-Chern--Simons in Fukaya-Morse-Witten gauge
}
\label{subsubsec:CS_FukayaMorseWitten}
\begin{example}[Chern--Simons Theory via Coadjoint Orbits]\!\!\!\!\!\footnote{
    This construction is partially informed by the process of adding Wilson line observables as an auxiliary theory coupled to ambient theory, cf. \cite{Alekseev:2012wc}, \cite{Mnev:2012qd}.
}
    Returning to \Cref{ex:CS_Lorenz_gauge}, consider as $\TB$ the Chern--Simons theory on a 3-manifold $M$ with coefficients in a simple Lie algebra $\gf$.
    Assume that $\gf$, viewed as the adjoint module over itself, is the quantisation (by Kirillov's orbit method) of a coadjoint orbit $\OC \subset \gf^*$ and assume that $L_\mr{W} \subset \overline{\OC}^{\times 2}\times \OC$ is the Wigner Lagrangian quantising to the Lie bracket $[-,-]$, cf. \Cref{subsec:Lie_bracket_from_Lagrangian}.

    Then we construct theory $\tb$ with $\nu = \OC^{[1]}$ the coadjoint orbit, with structure maps $\rho \colon \xi \mapsto - \xi$, with $L_3 = \rho^{-1} \circ L_\mr{W} \subset \overline{\OC}^{\times 3}$ the ``cyclic version'' of the Wigner Lagrangian above and with $L_{\neq 3} = 0$.

    For the gauge-fixing datum, we set
    \begin{equation}
        G^\mr{cl} = -\mr{i} \left(g^{ij}(x) p_i \pi_j -  g^{ij}(x) \Gamma^l_{jk} \pi_l \pi_i \theta^k\right),    
    \end{equation}
    cf. (\ref{eq:Gcl_MorseWitten}).
    Note that this is a ghost degree $-1$ function on $\NC^\mr{red} = \Tan^* \Tan[1] M \times \OC^{[1]}$ independent of the last factor.
    Its quantisation recovers the gauge-fixing operator $G = \dR^*$ of Chern--Simons theory in Lorenz gauge (acting as identity in the second factor in $\FC^\TB = \Omega^\bullet(M) \otimes \gf[1]$).

    As a sub-example, one can take $\gf = \su(2)$, with $\OC$  corresponding to the adjoint representation and $L_3 = L_\mr{W}$ of \Cref{example: L_W for su(2)}.
\end{example}

\begin{remark}[Twisting by a local system]
    One can recover the perturbative expansion of the path integral of Chern--Simons theory around a nonzero flat connection on $M$ given by a connection 1-form $A_0\in \Omega^1(M; \gf)$.
    In order to do that (see \cite[Section 7.2]{Mnev:2012qd}), one can deform the target AKSZ Hamiltonian by an extra term, $\Theta_\NC \ra \Theta_\NC + \Theta_\OC$, with
    \begin{equation}
        \Theta_\OC = \langle A_0, \mu_\OC(\xi) \rangle = (A_0)^a_i(x) \theta^i \mu_\OC(\xi)_a \quad \in \Ci(\Tan[1]M \times \nu) \subset \Ci(\NC^\mr{red}).
    \end{equation}
    Here $\mu_\OC \colon \OC \hookrightarrow \gf^*$ is the moment map.
    Adding this term has the effect of deforming the de Rham differential $\dR_M$ (generated by $p_i \theta^i$) to $\dR_{A_0} = \dR_M + \mr{ad}_{A_0}$.

    Note that flatness of $A_0$ is equivalent to the requirement that $\{\Theta_\NC + \Theta_\OC, \Theta_\NC + \Theta_\OC\} = 0$ (i.e. that the deformed Hamiltonian still generates a cohomological vector field on $\NC$).
\end{remark}

\begin{remark}[Wilson Graphs]
    One can consider putting different integrable coadjoint orbits $\OC_e$ (not necessarily the one quantising to the adjoint module) on different edges $e \in \mr{E}(\Gamma)$ and putting appropriate Lagrangians (quantising to intertwiners) as sewing conditions in vertices.
    This corresponds to evaluating the leading quasiclassical contribution to the expectation value of a Wilson graph in Chern--Simons theory, see e.g. \cite{Reshetikhin:2010zz}. 
\end{remark}

\begin{example}[BF Theory via Coadjoint Orbits]
\label{example: BF via coad orbits}
    Revisiting \Cref{ex:BF_theory_sec51}, consider $BF$ theory on an $m$-manifold $M$ with coefficients in a simple Lie algebra $\gf$.
    Assume that $\gf$ is the quantisation of a coadjoint orbit $\OC$ with Wigner Lagrangian $L_\mr{W} \subset \overline{\OC} \times \overline{\OC} \times \OC$.
    
    Then we set
    %
    \begin{align*}
        \nu = \underbrace{\OC^{[1]}}_{\nu_A} \sqcup \underbrace{\overline{\OC}^{[m-2]}}_{\nu_B},
    \end{align*}
    with $\rho$ the map $(\xi_A, \xi_B) \mapsto (\xi_B, \xi_A)$ and the structure Lagrangian
    \begin{align*}
        L_3 = L_\mr{W} \subset \overline{\nu}_A^{\times 2} \times \overline{\nu}_B,
    \end{align*}
    cf. Example \ref{example: dequantization of g+g^*}.

    Let us denote the r.h.s. of (\ref{eq:Gcl_MorseWitten}) for a gradient vector field $v$ by $G^\mr{cl,MW}_v$.
    Using the splitting 
    \begin{equation}
    \label{eq:Nred_BF}
        \NC^\mr{red} = (\Tan^* \Tan[1] M \times \nu_A) \sqcup (\Tan^* \Tan[1] M \times \nu_B),   
    \end{equation}
    we choose the gauge-fixing datum $G^\mr{cl}$ to be given by $G^\mr{cl}_A = G^{\mr{cl, MW}}_v
    $ on the first component in the r.h.s. of (\ref{eq:Nred_BF}) and by
    $G^\mr{cl}_B= G^{\mr{cl, MW}}_{-v}
    $ on the second component.
    Here $v$ is the gradient vector field for a Morse function $f$, as in \Cref{ex:BF_theory_sec51}.
    The quantisation of this gauge-fixing datum yields the gauge-fixing operator (\ref{eq:G_BF}).
\end{example}

\begin{example}[Chern--Simons in Fukaya--Morse--Witten Gauge]
    Revisiting \Cref{ex:glN_CS_in_MQ_gaug_sec51}, consider Chern--Simons theory on a 3-manifold $M$ with structure group $\gf = \gl_N 
    $ for some $N \geq 2$.
    We set $\nu = \sqcup_{a, b = 1}^N \pt_{ab}^{[1]}$ – a collection of $N^2$ points, equipped with the data $\rho, L_3$ as in \Cref{ex:dequantisation_of_glN}:
    \begin{align*}
        \rho \colon \pt_{ab} \mapsto \pt_{ba}, \qquad L_3 = \sum_{a, b, c} (\pt_{ab} \times \pt_{bc} \times \pt_{ca}-\pt_{ba} \times \pt_{cb} \times \pt_{ac}).
    \end{align*}
    In this case the target of theory $\tb$ is $\NC^{\mr{full}} = \bigsqcup_{a,b} \Tan^* \Tan[1](M \times \RB[1]) \times \pt_{ab}$ is a disjoint union of $N^2$ copies of $\Tan^* \Tan[1](M \times \RB[1])$.
    Thus, a field of theory $\tb$ on an edge consists of a choice of a bi-label $(a, b)$ (thought of as decorations of the two sides of a ribbon – a thickening of the edge – by $a,b \in \{1, \ldots, N\}$) and element of $\Map(\Tan[1] I, \Tan^* \Tan[1](M \times \RB[1]))$.

    As in \Cref{ex:glN_CS_in_MQ_gaug_sec51}, we fix a collection of functions $f_a$ on $M$, with $a = 1, \ldots, N$, such that differences $f_a - f_b$ are Morse-Smale for $a \neq b$.
    We denote $v_{ab}$ the gradient vector field of $f_a - f_b$.

    For the gauge-fixing datum we choose the function $G^\mr{cl}$ given on $\Tan^* \Tan[1] M \times \pt_{ab}$ by $G^\mr{cl}_{ab} = G^{\mr{cl, MW}}_{v_{ab}}$.
    It quantises to $G$ as in \Cref{ex:glN_CS_in_MQ_gaug_sec51}.
\end{example}

\appendix
%
\section{A Berezin integral formula for the Hodge star}
Denote 
\begin{equation}
\sigma_k\colon= (-1)^{\frac{k(k-1)}{2}},
\end{equation}
so that for $\theta^1,\ldots,\theta^k$ odd variables, one has 
$\theta^k\cdots \theta^1 = \sigma_k \theta^1\cdots \theta^k$.
\begin{proposition}
\label{prop:HodgeStar_T1M}
    Let $(M, g)$ be a Riemannian manifold of dimension $n$ and let $f(x, \theta) \in \Fun(\Tan[1]M)$ be a $p$-form on $M$, expressed locally as a smooth function of local coordinates $x^i, \theta^i$ on $\Tan[1]M$, of polynomial degree $p$ in $\theta^i$.
    %
    Then the Hodge star $*_g$ acting on $f(x, \theta)$ is given by the following Berezin integral:
    \begin{align}
    \label{eq:HodgeStar_T1M}
        *_g f(x, \theta) &= 
        \sigma_{n-p}
        \int_{\Tan_x[-1] M \ni \overline{\theta}} (\det(g_x))^{- \frac12} D\overline{\theta}^n \cdots D\overline{\theta}^1 \, e^{g_x(\overline{\theta}, \theta)} f(x, \overline{\theta})
    \end{align}
\begin{remark}
    Let $\mr{dvol}_x=(\det g_x)^{\frac12}dx^1\wedge\cdots \wedge dx^n\in \wedge^n T^*_x M$ be the Riemannian volume element. Then the Berezinian in the integral in (\ref{eq:HodgeStar_T1M}) can be invariantly written as ${(\mr{dvol}_x)^{-1}\in \wedge^n T_x M=\mr{Ber}_x(M)}$. Here we are exploiting the fact that Berezinian line bundle is dual to the line bundle of top forms on $M$.
\end{remark}
    \begin{proof}
    Operators acting on $f$ in both sides in (\ref{eq:HodgeStar_T1M}) are (a) maps of $C^\infty(M)$ modules and (b) are independent of the choice of coordinates on $\Tan[1]M$. Thus, we can assume that the local coordinates are such that the metric at $x$ is $g_x=\sum_{i=1}^n (\mr{d}x^i)^2$ and it suffices to check (\ref{eq:HodgeStar_T1M}) for $f=\theta^1\cdots \theta^p$. In this case, the r.h.s. of (\ref{eq:HodgeStar_T1M}) is
    \begin{equation*}
    \begin{aligned}
        &\sigma_{n-p} \int_{\RB^n[-1]} D\bar\theta^n\cdots D\bar\theta^1\, \underbrace{e^{\sum_{i=1}^n \bar\theta^i \theta^i}}_{\prod_{i=1}^n (1+\bar\theta^i\theta^i)}  \cdot \bar\theta^1\cdots\bar\theta^p
        \\
        &=\sigma_{n-p} \int_{\RB^n[-1]} \overleftarrow{\prod_{i=1}^n} \Big(D\bar\theta^i (1+\bar\theta^i\theta^i) \Big)\cdot
        \bar\theta^1\cdots\bar\theta^p \\
        &= \sigma_{n-p} \int_{\RB^n[-1]} \overleftarrow{\prod_{i=p+1}^n} (D\bar\theta^i\cdot  \bar\theta^i\theta^i )\cdot \overleftarrow{\prod_{i=1}^p} (D\bar\theta^i \cdot \bar\theta^i)\\
        &= \sigma_{n-p}\theta^n\cdots \theta^{p+1}=\theta^{p+1}\cdots \theta^n
        =*_g(\theta^1\cdots \theta^p)
    \end{aligned}
    \end{equation*}
    -- the l.h.s. of (\ref{eq:HodgeStar_T1M}).
\end{proof}
\end{proposition}
\begin{remark}
    The Hodge star can be factorised as a composition of maps
    \begin{equation}\label{Hodge star factorized through OFT}
        *_g\colon\quad \Omega^p(M)\xrightarrow{\mr{OFT}} \mathcal{V}^{n-p}(M)  \xrightarrow{g} \Omega^{n-p}(M),
    \end{equation}
    where the first map is the odd Fourier transform (cf. (18) in \cite{schwarz1993geometry}) based on the Riemannian volume element on $M$, mapping polyvectors to differential forms;
    the second map is induced by the isomorphism between tangent and cotangent bundle given by the metric.
    Since the odd Fourier transform is a Berezin integral, (\ref{Hodge star factorized through OFT}) is an equivalent formulation of (\ref{eq:HodgeStar_T1M}).
\end{remark}
\section{A formula for the codifferential in local coordinates}
\begin{proposition}\label{prop: d^* in loc coordinates}
    The codifferential $\dR^*$ associated to the Hodge star $*_g$ can be expressed in terms of $x^i, \theta^i$, the metric $g$ and its Christoffel symbols $\Gamma$ as:
    \begin{align}
    \label{eq:Codifferential_T1M}
        \dR^* &= - g^{jk} \dell{}{\theta^j} \left( \dell{}{x^k} - \Gamma^l_{km} \theta^m \dell{}{\theta^l} \right) .
    \end{align}
\begin{proof}
    Let – as before – $f(x, \theta)$ be a $p$-form on $M$, expressed locally as a smooth function on $\Tan[1]M$.
    Using (\ref{eq:HodgeStar_T1M}) we obtain the following expression for $\dR *_g f(x, \theta)$, where in the following we drop the subscript $x$ from the metric:
    \begin{align}
        &~ \dR *_g f(x, \theta) \\
        =&~ \sigma_{n-p} \left( \theta^k \dell{}{x^k} \right) \int_{\Tan_x[-1] M \ni \overline{\theta}} (\det(g))^{- \frac12} D\overline{\theta} \, e^{g(\overline{\theta}, \theta)} f(x, \overline{\theta}) \\
        =&~ \sigma_{n-p} (-1)^n \int_{\Tan_x[-1] M} (\det(g))^{- \frac12} D\overline{\theta} \, \underbrace{e^{g(\overline{\theta}, \theta)} \theta^k}_{g^{jk} \dell{}{\overline{\theta}^j} e^{g(\overline{\theta}, \theta)}} \left(
            - \frac12 \underbrace{\partial_k \ln(\det(g))}_{g^{lm} (\partial_k g_{lm})}
            + \partial_k g(\overline{\theta}, \theta)
            + \dell{}{x^k}
        \right) f(x, \overline{\theta}) \\
        =&~ \sigma_{n-p} (-1)^{n+1} \int_{\Tan_x[-1] M} (\det(g))^{- \frac12} D\overline{\theta} \, e^{g(\overline{\theta}, \theta)} \underbrace{g^{jk} \dell{}{\overline{\theta}^j} \left(
            - \frac12 g^{lm} (\partial_k g_{lm})
            + \partial_k g(\overline{\theta}, \theta)
            + \dell{}{x^k}
        \right)}_{\bigstar} f(x, \overline{\theta}) .\label{B.5}
    \end{align}
    Above we integrate by parts to move the derivative in $\overline{\theta}$.
    In the following, we will simplify the expression $\star$ – with integration as above assumed – by using integration by parts several times to exchange $\theta^i$ for $- g^{il} \dell{}{\overline{\theta}^l}$:
    \begin{align}
        \bigstar \colon \quad&~ g^{jk} \dell{}{\overline{\theta}^j} \left(
            - \frac12 g^{lm} (\partial_k g_{lm})
            + \partial_k g(\overline{\theta}, \theta)
            + \dell{}{x^k}
        \right) \\
        =&~ g^{jk} \left(
            \dell{^2}{x^k \partial \overline{\theta}^j}
            - \frac12 g^{lm} (\partial_k g_{lm}) \dell{}{\overline{\theta}^j}
            + (\partial_k g_{ji}) \theta^i
            \underbrace{- (\partial_k g_{lm} \theta^l \overline{\theta}^m)}_{\mathclap{- (\partial_k g_{lm}) g^{lr} \overline{\theta}^m \dell{}{\overline{\theta}^r} + g^{lm} (\partial_k g_{lm})}} \dell{}{\overline{\theta}^j}
        \right) \\
        =&~ g^{jk} \left(
            \dell{^2}{x^k \partial \overline{\theta}^j}
            + \frac12 g^{lm} (\partial_k g_{lm}) \dell{}{\overline{\theta}^j}
            - g^{il} (\partial_k g_{ji}) \dell{}{\overline{\theta}^l}
            - \underbrace{(\partial_k g_{lm}) g^{lr} \overline{\theta}^m \dell{}{\overline{\theta}^r} \dell{}{\overline{\theta}^j}}_{\Gamma^r_{km} \overline{\theta}^m \dell{}{\overline{\theta}^r} \dell{}{\overline{\theta}^j} }
        \right) \\
        =&~ 
            g^{jk} \dell{^2}{x^k \partial \overline{\theta}^j}
            - g^{jk} \Gamma^r_{km} \overline{\theta}^m \dell{}{\overline{\theta}^r} \dell{}{\overline{\theta}^j}
            - g^{km} \Gamma^j_{km} \dell{}{\overline{\theta}^j}
        \\
        =&~ g^{jk} \dell{}{\overline{\theta}^j} \left( \dell{}{x^k} - \Gamma^l_{km} \overline{\theta}^m \dell{}{\overline{\theta}^l} \right) .
    \end{align}
    The sign in (\ref{B.5}) is $\sigma_{n-p}(-1)^{n+1} = \sigma_{n-(p-1)}(-1)^{p+1}$.
    Since $\dR^*$ – acting on $p$-forms – is defined as $\dR^* \coloneqq (-1)^p *_g^{-1} \dR *_g$, we apply (\ref{eq:HodgeStar_T1M}) again to obtain (\ref{eq:Codifferential_T1M}).
\end{proof}
\end{proposition}
\section{Dimension Counts for Conjecture \ref{conj: O^3//G=pt}}
\label{Appendix: dimension counts for Conjecture}
    As a sanity check of Conjecture \ref{conj: O^3//G=pt}, consider the case $\gf = \so(n)$, $n \geq 5$.\footnote{
        We are excluding the case $\so(4)\simeq \su(2)\oplus \su(2)$ which is not a simple Lie algebra and the case $\so(3) \simeq \su(2)$ which is discussed in Example \ref{example: L_W for su(2)} separately.
    }
    The coadjoint orbit $\OC$ corresponding to the adjoint representation is the orbit through the point $\mr{diag}(1, 1, \underbrace{0, 0, \ldots, 0}_{n-4},-1,-1) \in \mathfrak{h}^*$ and so can be identified with the homogeneous space $\SO(n) / (\SO(n-4) \times \U(2))$.
    Thus, it has dimension $\dim\OC = \frac{n(n-1)}{2}-\frac{(n-4)(n-5)}{2} - 4 = 4n - 14$.
    In particular, it is not an orbit of generic type (which has a higher dimension $\frac{n(n-1)}{2}-\left[\frac{n}{2}\right]$).
    Then, the virtual dimension of the symplectic reduction is
    \begin{equation}
    \label{dim^vir(O^3//G)}
        \dim^\textrm{vir}\mu^{-1}(0)/G = 3\dim \OC - 2\dim\gf
    \end{equation}
    it vanishes for $n = 6, 7$ and is negative for $n \neq 6, 7$ (which means that for $n \neq 6, 7$, zero is not a regular value of the moment map (\ref{moment map}) and the action of $G$ on $\mu^{-1}(0)$ is not locally free -- has a stabiliser).\footnote{
        Note that if for $x\in \mu^{-1}(0)$ the codimension of the image of $\dR \mu|_x$ in $\gf^*$ is $k \geq 0$, then the stabiliser of $x$ under the action of $G$ is a $k$-dimensional subgroup  $H_x \subset G$.
        Then the tangent space to the symplectic reduction is $T_x (\mu^{-1}(0) / G) = (\ker \dR \mu|_x) \,/\, (\gf / \mr{Lie}(H_x))$ and has dimension $(3 \dim \OC - \dim\gf + k) - (\dim\g - k) = 3 \dim \OC - 2 \dim \gf + 2k$.
        So, the actual dimension of the symplectic reduction is $\dim \mu^{-1}(0) / G = (\dim^\mr{vir} \mu^{-1}(0) / G) + 2k$.
    }
    If the virtual dimension (\ref{dim^vir(O^3//G)}) were to attain a positive value, it would indicate a contradiction with Conjecture \ref{conj: O^3//G=pt}, but fortunately we are not seeing positive values.

    Similarly, for $\gf = \spin(2n)$, the orbit $\OC$ passes through the point $\mr{diag}(1, \underbrace{0,\ldots,0}_{n-1}, -1, \underbrace{0,\ldots,0}_{n-1}) \in \mathfrak{h}^*$ and can be identified with the homogeneous space $\USp(2n)/(\USp(2n-2)\times \U(1))$, and thus $\dim \OC = n(2n+1) - (n-1)(2n-1) - 1 = 4n - 2$.
    The virtual dimension (\ref{dim^vir(O^3//G)}) is negative for all $n \geq 2$.
    
    We summarise the dimension counts in a table and include the case $\gf=\su(n)$ for comparison (where $\OC = \SU(n)/(\U(n-2)\times \U(1))$ is the orbit through $\mr{diag}(1,\underbrace{0, \ldots,0}_{n-2}, -1) \in \mathfrak{h}^*$, and so $\dim\OC = n^2 - 1 - (n-2)^2 - 1 = 4n - 6$).

    \begin{center}
    \begin{tabular}{c|c|c|c|c}
        $\gf$ & $\dim\gf$ & $\dim\OC$ & $3\dim\OC-2\dim\gf$ & $\dim \OC_\mr{generic}$\\ 
        &&& $=\dim^\mr{vir}\mu^{-1}(0)/G$ &  $=\dim\gf-\mr{rank}\,\gf$
        \\ \hline
         $\su(2)$ &  $3$ & $2$ & $0$ &  $2$ \\
         $\su(3)$ &  $8$ & $6$ & $2$ &  $6$ \\
         $\su(4)$ &  $15$ & $10$ & $0$ &  $12$ \\
         $\su(5)$ &  $24$ & $14$ & $-6$ &  $20$ \\
         $\vdots$ & $\vdots$ & $\vdots$& $\vdots$& $\vdots$ \\
         \thinline
         $\so(5)$ &  $10$ & $6$ & $-2$ &  $8$ \\
         $\so(6)\simeq \su(4)$ &  $15$ & $10$ & $0$ &  $12$ \\
         $\so(7)$ &  $21$ & $14$ & $0$ &  $18$ \\
         $\so(8)$ &  $28$ & $18$ & $-2$ &  $24$ \\
         $\vdots$ & $\vdots$ & $\vdots$& $\vdots$& $\vdots$\\
         \thinline
         $\spin(2)\simeq \su(2)$ &  $3$ & $2$ & $0$ &  $2$ \\
         $\spin(4) \simeq \so(5)$ &  $10$ & $6$ & $-2$ &  $8$ \\
         $\spin(6)$ &  $21$ & $10$ & $-12$ &  $18$ \\
         $\spin(8)$ &  $36$ & $14$ & $-30$ &  $32$ \\
         $\vdots$ & $\vdots$ & $\vdots$& $\vdots$& $\vdots$
    \end{tabular}
    \end{center}
    The fact that $\dim^\mr{vir} > 0$ for $\gf = \su(3)$ should not worry us, since this case is not covered by Conjecture \ref{conj: O^3//G=pt}.
    The cases of $\su(2)$ and $\su(3)$ are the only ones where $\OC$ is the generic orbit.
\clearpage
\phantomsection
\addcontentsline{toc}{section}{References}
\printbibliography
\end{document}